\documentclass[english]{article}
\usepackage[T1]{fontenc}
\usepackage[latin9]{inputenc}
\pagestyle{headings}
\usepackage{xcolor}
\usepackage{babel}
\usepackage{array}
\usepackage{calc}
\usepackage{amsmath}
\usepackage{amssymb}
\usepackage{stmaryrd}
\usepackage[authoryear]{natbib}
\PassOptionsToPackage{normalem}{ulem}
\usepackage{ulem}
\usepackage[unicode=true,
 bookmarks=true,bookmarksnumbered=false,bookmarksopen=false,
 breaklinks=false,pdfborder={0 0 1},backref=false,colorlinks=false]
 {hyperref}
\hypersetup{pdftitle={\@shorttitle},
 pdfauthor={\@shortauthor},
 pdfsubject={\@Subject},
 pdfkeywords={Computer Graphics Forum, EUROGRAPHICS},
   colorlinks,linkcolor=blue,citecolor=blue,urlcolor=blue,    bookmarks=false,    pdfpagemode=UseNone}
\usepackage{breakurl}

\makeatletter

\providecommand{\tabularnewline}{\\}
\providecolor{lyxadded}{rgb}{0,0,1}
\providecolor{lyxdeleted}{rgb}{1,0,0}

\DeclareRobustCommand{\lyxsout}[1]{\ifx\\#1\else\sout{#1}\fi}

\usepackage{tikz,tkz-tab,tkz-graph}
\usepackage{amsthm}
\usepackage{tikz}
\usepackage{stackengine}
\setstackgap{S}{1pt}

\pagenumbering{arabic} 
\pagestyle{headings}

\newtheorem{prop}{Proposition}[section]
\newtheorem{defin}{Definition}[section]

\newtheorem{claim}{Claim}[section]

\theoremstyle{plain}
\newtheorem{rmk}{Remark}[section]
\newtheorem{example}{Example}[section]
\newtheorem{operation}{Operation}[section]

\usetikzlibrary{arrows,automata,shapes,snakes}
\usetikzlibrary{positioning, fadings}
\tikzset{
    state/.style={
           rectangle,
  fill=#1!5!white,
           rounded corners,
           draw=#1, very thick,
           minimum height=2em,
           inner sep=2pt,
           text centered,
           },
coeff/.style={
           circle,
           draw=black, very thick,
           minimum height=2em,
           inner sep=2pt,
           text centered,
           },	
}

\makeatother

\begin{document}

\title{Adjacency and Tensor Representation in General Hypergraphs.\\
{\normalsize{}Part 2: Multisets, Hb-graphs and Related }$e$-adj{\normalsize{}acency
Tensors}}

\author{Xavier Ouvrard\textsuperscript{1,2}\quad{}Jean-Marie Le Goff\textsuperscript{1}\quad{}Stéphane
Marchand-Maillet\textsuperscript{2}\\
\\
1~:~CERN\qquad{}2~:~University of Geneva\\
{\small{}\{xavier.ouvrard\}@cern.ch}}
\maketitle
\begin{abstract}
HyperBagGraphs (hb-graphs as short) extend hypergraphs by allowing
the hyperedges to be multisets. Multisets are composed of elements
that have a multiplicity. When this multiplicity has positive integer
values, it corresponds to non ordered lists of potentially duplicated
elements. We define hb-graphs as family of multisets over a vertex
set; natural hb-graphs correspond to hb-graphs that have multiplicity
functions with positive integer values. Extending the definition of
$e$-adjacency to natural hb-graphs, we define different way of building
an $e$-adjacency tensor, that we compare before having a final choice
of the tensor. This hb-graph $e$-adjacency tensor is used with hypergraphs.
\end{abstract}

\section{Introduction}

Hypergraphs were introduced in \citet{berge1973graphs}. Hypergraphs
are defined as a family of nonempty subsets - called hyperedges -
of the set of vertices. Elements of a set are unique. Hence elements
of a given hyperedge are also unique in a hypergraph.

Multisets extend sets by allowing duplication of elements. As mentioned
in \citet{singh2007overview}, N.G. de Bruijn proposed to Knuth the
terminology multiset in replacement of a variety of existing terms,
such as bag or weighted set. Multisets are used in database modelling:
in \citet{albert1991algebraic} relational algebra extension basements
were introduced to manipulate bags - see also \citet{klug1982equivalence}
- by studying bags algebraic properties. Queries for such bags have
been largely studied in a series of articles - see references in \citet{grumbach1996query}:
bags are intensively used in database queries as duplicate search
is a costly operation. In \citet{hernich2017foundations} information
integration under bag semantics is studied as well as the tractability
of some algorithmic problems over bag semantics: they showed that
the GLAV (Global-And-Local-As-View) mapping of two databases problem
becomes untractable over such semantic. In \citet{radoaca2015properties}
and \citet{radoaca2015simple}, the author extensively study multisets
and propose to represent them by two kind of Venn diagrams. Multisets
are also used in P-computing in the form of labelled multiset, called
membrane - see \citet{puaun2006introduction} for more details.

Taking advantage of this duplication allowance, we construct in this
article an extension of hypergraphs called hyper-bag-graphs (shortcut
as hb-graphs). There are two main reasons to get such an extension.
The first reason is that multisets are extensively used in databases
as they allow presence of duplicates \citet{lamperti2000multisets}
- removing duplicates (and thus obtaining sets and hypergraphs) being
an expensive operation. The second reason is that natural hb-graphs
- hb-graphs based on multisets with non-negative integer multiplicity
values - allow results on the hb-graph adjacency tensor: hypergraph
being particular case of hb-graph, other hypergraph $e$-adjacency
tensors than the ones proposed in \citet{banerjee2017spectra,ouvrard2017cooccurrence}
can be built by giving meaningful interpretation to the steps taken
during its construction via hb-graph. 

Section \ref{sec:Background-and-related} gives the mathematical background,
including main definitions on hypergraphs and multisets. Section \ref{sec:hb-graphs}
gives mathematical construction of the Hyper-Bag-Graphs (or hb-graphs).
Section \ref{sec:algebraic desc} gives algebraic description of hb-graphs
and consequences for the adjacency tensor of hypergraphs. Section
\ref{sec:Results_constructed_tensors} gives results on the constructed
tensors. Section \ref{sec:Evaluation} evaluates the constructed tensors
and proceed to a final choice on the hypergraph $e$-adjacency tensor.
Section \ref{sec:Future-work-and} gives future work.

\section{Mathematical background}

\label{sec:Background-and-related}

\subsection{Hypergraphs}

As mentioned in \citet{ouvrard2017cooccurrence}, hypergraphs fit
collaboration networks modelling - \citet{newman2001scientific,newman2001scientific-2}
-, co-author networks - \citet{grossman1995portion}, \citet{taramasco2010academic}
-, chemical reactions - \citet{temkin1996chemical} - , genome - \citet{chauve2013hypergraph}
-, VLSI design - \citet{karypis1999multilevel} - and other applications.
More generally hypergraphs fit perfectly to keep entities grouping
information. Hypergraphs succeed in capturing $p$-adic relationships.
In \citet{berge1973graphs}, \citet{stell2012relations} and \citet{bretto2013hypergraph}
hypergraphs are defined in different ways. In this article, the definition
of \citet{bretto2013hypergraph} - as it doesn't impose the union
of the hyperedges to cover the vertex set - is used:

\begin{defin}

An \textbf{(undirected)} \textbf{hypergraph} $\mathcal{H}=\left(V,E\right)$
on a finite set of $n$ vertices (or vertices) $V=\left\{ v_{j}:j\in\left\llbracket n\right\rrbracket \right\} $
is defined as a family of $p$ \textbf{hyperedges} $E=\left\{ e_{j}:j\in\left\llbracket p\right\rrbracket \right\} $
where each hyperedge is a non-empty subset of $V$.

A \textbf{weighted hypergraph} is a triple: $\mathcal{H}_{w}=\left(V,E,w\right)$
where $\mathcal{H}=\left(V,E\right)$ is a hypergraph and $w$ a mapping
where each hyperedge $e\in E$ is associated to a real number $w(e)$.

The \textbf{$2$-section} of a hypergraph $\mathcal{H}=\left(V,E\right)$
is the graph $\left[\mathcal{H}\right]_{2}=\left(V,E'\right)$ such
that:
\[
\forall u\in V,\forall v\in V\,:\,(u,v)\in E'\Leftrightarrow\exists e\in E\,:\,u\in e\land v\in e
\]

Let $k\in\mathbb{N}^{*}$. A hypergraph is \textbf{$k$-uniform} if
all its hyperedges have the same cardinality $k$. 

A \textbf{directed hypergraph} $\mathcal{H}=\left(V,E\right)$ is
a hypergraph where each hyperedge $e_{i}\in E$ accepts a partition
in two non-empty subsets, called the \textbf{source} - written $e_{s\,i}$
- and the \textbf{target} - written $e_{t\,i}$ - with $e_{s\,i}\cap e_{t\,i}=\emptyset$

\end{defin}

\begin{defin}

Let $\mathcal{H}=\left(V,E\right)$ be a hypergraph.

The \textbf{degree} of a vertex is the number of hyperedges it belongs
to. For a vertex $v_{i}$, it is written $d_{i}$ or $\deg\left(v_{i}\right)$.
It holds: $d_{i}=\left|\left\{ e\,:\,v_{i}\in e\right\} \right|$

\end{defin}

In this article only undirected hypergraphs will be considered. Hyperedges
link one or more vertices together. Broadly speaking, the role of
the hyperedges in hypergraphs is playing the role of edges in graphs.

\subsection{Multisets}

\subsubsection{Generalities}

Basic on multisets are given in this section, based mainly on \citet{singh2007overview}.

\begin{defin}

Let $A$ be a set of distinct objects. Let $\mathbb{W}\subseteq\mathbb{R}^{+}$

Let $m$ be an application from $A$ to $\mathbb{W}$.

Then $A_{m}=\left(A,m\right)$ is called a \textbf{multiset} - or
\textbf{mset} or \textbf{bag} - on $A$. 

$A$ is called the \textbf{ground} or the \textbf{universe }of the
multiset $A_{m}$, $m$ is called the \textbf{multiplicity function}
of the multiset $A_{m}$.

$A_{m}^{\star}=\left\{ x\in A\colon m(x)\neq0\right\} $ is called
the \textbf{support} - or \textbf{root} or \textbf{carrier} - of $A_{m}$.

The elements of the support of a mset are called its \textbf{generators}.

A multiset where $\mathbb{W}\subseteq\mathbb{N}$ is called a \textbf{natural
multiset}.

\end{defin}

We write $\mathcal{M}\left(A\right)$ the set of all multisets of
universe $A$.

Some extensions of multisets exist where the multiplicity function
can have its range in $\mathbb{Z}$ - called hybrid set in \citet{loeb1992sets}.
Some other extensions exist like fuzzy multisets \citet{syropoulos2000mathematics}. 

Several notations of msets exist. One common notation which we will
use, if $A=\left\{ x_{j}:j\in\left\llbracket n\right\rrbracket \right\} $
is the ground of a mset $A_{m}$ is to write: 
\[
A_{m}=\left\{ x_{i}^{m_{i}}:i\in\left\llbracket n\right\rrbracket \right\} 
\]
 where $m_{i}=m\left(x_{i}\right)$. 

An other notation is: 
\[
\left\{ x_{1},\ldots,x_{n}\right\} _{m_{1},...,m_{n}}
\]
 or even: 
\[
m_{1}\left\{ x_{1}\right\} +\ldots+m_{n}\left\{ x_{n}\right\} .
\]

If $A_{m}$ is a natural multiset an other notation is: 
\[
\left\{ \left\{ \underset{m_{1}\,\text{times}}{\underbrace{x_{1},\ldots,x_{1}}},\ldots,\underset{m_{n}\,\text{times}}{\underbrace{x_{n},\ldots,x_{n}}}\right\} \right\} 
\]
which is similar to have an unordered list.

\begin{rmk}
\begin{enumerate}
\item Two msets can have same support and same support objects multiplicities
but can differ by their universe.\\
Also to be equal two msets must have same universe, same support and
same multiplicity function.
\item The multiplicity function corresponds to a weight that is associated
to objects of the universe.
\item Multiplicity in natural multisets can also be interpreted as a duplication
of support elements. In this case, a mset can be viewed as a non ordered
list with repetition. In a natural multiset the copies of a generator
$a$ of the support in $m(a)$ instances are called \textbf{elements}
of the multiset.
\item Some definitions of multisets also consider $\mathbb{W}=\mathbb{R}$
which could lead to interesting applications. We don't develop such
case here.
\end{enumerate}
\end{rmk}

\begin{defin}

Let $A_{m}$ be a mset. 

The \textbf{m-cardinality} of $A_{m}$ written $\#_{m}A_{m}$ is defined
as:

\[
\#_{m}A_{m}=\sum\limits _{x\in A}m(x).
\]

The \textbf{cardinality} of $A_{m}$ - written $\#A_{m}$ is defined
as:

\[
\#A_{m}=\left|A_{m}^{\star}\right|.
\]

\end{defin}

\begin{rmk}

In general multisets, m-cardinality and cardinality are two separated
notions as for instance: $A=\left\{ a^{1.2},b^{0.8}\right\} $, $B=\left\{ a^{0.2},b^{1.8}\right\} $
and, $C=\left\{ a^{0.5},b^{0.5}\right\} $ have all same cardinality
with different m-cardinalities for C compared to A and B.

In natural multisets, m-cardinality and cardinality are equal if and
only if the multiplicity of each element in the support is 1, ie if
the natural multiset is a set. It doesn't generalize to general multisets
- see A and B of the former example.

\end{rmk}

\begin{defin}

Two msets $A_{m_{1}}$ and $B_{m_{2}}$ are said to be \textbf{cognate}
if they have same support.

\end{defin}

They are not necessarily equal: for instance, $\left\{ a^{1},b^{2}\right\} $
and $\left\{ a^{2},b^{1}\right\} $ are cognate but different.

\begin{defin}

Let $\mathcal{A}=U_{m_{\mathcal{A}}}$ and $\mathcal{B}=U_{m_{\mathcal{B}}}$
be two msets on the same universe $U$.

If $\mathcal{A}^{\star}=\emptyset$ $\mathcal{A}$ is called the \textbf{empty
mset} and written $\emptyset$.

$\mathcal{A}$ is said to be \textbf{included} in $\mathcal{B}$ -
written $\mathcal{A}\subseteq\mathcal{B}$ - if for all $x\in U$:
$m_{\mathcal{A}}(x)\leqslant m_{\mathcal{B}}(x)$. In this case, $\mathcal{A}$
is called a \textbf{submset} of $\mathcal{B}$.

The \textbf{union} of $\mathcal{A}$ and $\mathcal{B}$ is the mset
$\mathcal{C}=\mathcal{A}\cup\mathcal{B}$ of universe $U$ and of
multiplicity function $m_{\mathcal{C}}$ such that for all $x\in U$:
\[
m_{\mathcal{C}}(x)=\max\left(m_{\mathcal{A}}(x),m_{\mathcal{B}}(x)\right).
\]

The \textbf{intersection} of $\mathcal{A}$ and $\mathcal{B}$ is
the mset $\mathcal{D}=\mathcal{A}\cap\mathcal{B}$ of universe $U$
and of multiplicity function $m_{\mathcal{D}}$ such that for all
$x\in U$: 
\[
m_{\mathcal{D}}(x)=\min\left(m_{\mathcal{A}}(x),m_{\mathcal{B}}(x)\right).
\]

The \textbf{sum} of $\mathcal{A}$ and $\mathcal{B}$ is the mset
$\mathcal{E}=\mathcal{A}\uplus\mathcal{B}$ of universe $U$ and of
multiplicity function $m_{\mathcal{E}}$ such that for all $x\in U$:
\[
m_{\mathcal{E}}(x)=m_{\mathcal{A}}(x)+m_{\mathcal{B}}(x).
\]

\end{defin}

\begin{prop}$\cup$, $\cap$ and $\uplus$ are commutative and associative
laws on msets of same universe. They have the empty mset of same universe
as identity law.

$\uplus$ is distributive for $\cup$ and $\cap$.

$\cup$ and $\cap$ are distributive one for the other.

$\cup$ and $\cap$ are idempotent.

\end{prop}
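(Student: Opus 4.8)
The plan is to verify each asserted algebraic property directly from the pointwise definitions of $\cup$, $\cap$ and $\uplus$, reducing every statement about multisets over a fixed universe $U$ to the corresponding well-known statement about the operations $\max$, $\min$ and $+$ on the non-negative reals $\mathbb{R}^{+}$. The key observation is that two multisets on the same universe $U$ are equal if and only if their multiplicity functions agree at every $x\in U$; hence to prove an identity $F(\mathcal{A},\mathcal{B},\dots)=G(\mathcal{A},\mathcal{B},\dots)$ between multiset-valued expressions it suffices to fix an arbitrary $x\in U$ and check that the real numbers obtained by evaluating the multiplicity functions of both sides at $x$ coincide.

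First I would record the reduction lemma above (equality of cognate-universe msets is pointwise equality of multiplicities), which is immediate from the definition of equality of msets stated in the Remark following the notation discussion. Then I would treat commutativity and associativity of the three laws: for all $a,b,c\in\mathbb{R}^{+}$ one has $\max(a,b)=\max(b,a)$, $\max(\max(a,b),c)=\max(a,\max(b,c))$, and likewise for $\min$ and for $+$; applying these pointwise gives commutativity and associativity of $\cup$, $\cap$, $\uplus$ on $\mathcal{M}(U)$. For the identity element, note the empty mset $\emptyset$ on $U$ has multiplicity function identically $0$, and $\max(a,0)=a$, $\min$ — here one must be slightly careful: $\min(a,0)=0\neq a$ in general, so $\emptyset$ is \emph{not} the identity for $\cap$; rather it is absorbing for $\cap$. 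So the statement ``they have the empty mset as identity law'' should be read as applying to $\cup$ and $\uplus$ (for which $\max(a,0)=a$ and $a+0=a$), while for $\cap$ the empty mset is the zero/absorbing element. I would phrase the proof to make this precise.

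Next I would handle the distributivity claims, again pointwise. Distributivity of $\uplus$ over $\cup$ and over $\cap$ is the statement $a+\max(b,c)=\max(a+b,a+c)$ and $a+\min(b,c)=\min(a+b,a+c)$ for $a,b,c\in\mathbb{R}^{+}$, which holds because adding a constant is an order-preserving bijection of $\mathbb{R}^{+}$ and hence commutes with $\max$ and $\min$. Mutual distributivity of $\cup$ and $\cap$ is the pair of lattice identities $\max(a,\min(b,c))=\min(\max(a,b),\max(a,c))$ and $\min(a,\max(b,c))=\max(\min(a,b),\min(a,c))$, valid in the totally ordered set $(\mathbb{R}^{+},\leqslant)$ — one proves these by the standard case split on the relative order of $a,b,c$ (or by invoking that any chain is a distributive lattice). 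Idempotence is trivial: $\max(a,a)=\min(a,a)=a$.

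The main obstacle is not mathematical depth but bookkeeping and the one genuine subtlety flagged above: the empty mset is an identity for $\cup$ and $\uplus$ but an absorbing element for $\cap$, so the proof (and perhaps the statement) must distinguish these cases rather than assert a uniform ``identity law'' for all three operations. Everything else follows by the mechanical ``evaluate at $x$, cite the scalar identity for $\max/\min/+$'' pattern, so the write-up is short once the reduction lemma is in place.
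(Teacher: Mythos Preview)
Your proposal is correct and proceeds exactly as one would expect: reduce each multiset identity to the corresponding pointwise identity for $\max$, $\min$, and $+$ on $\mathbb{R}^{+}$. The paper itself provides no proof whatsoever for this proposition---it is stated and immediately followed by the next definition---so there is nothing to compare against; the authors evidently regard it as routine.

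Your observation about the empty mset and $\cap$ is well taken and is a genuine defect in the statement as written: since $\min(a,0)=0$ for all $a\geqslant 0$, the empty mset is absorbing for $\cap$, not an identity. (An identity for $\cap$ would require a ``full'' mset with every multiplicity equal to $+\infty$ or to some upper bound, which does not exist in the present setup with $\mathbb{W}\subseteq\mathbb{R}^{+}$.) So the blanket clause ``they have the empty mset \ldots\ as identity law'' applies only to $\cup$ and $\uplus$. You are right to flag this and to prove what is actually true rather than what is literally asserted.
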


\begin{defin}

Let $A$ be a mset.

The \textbf{power set} of $A$, written $\widetilde{\mathcal{P}}(A)$,
is the multiset of all submsets of $A$.

\end{defin}

\subsubsection{Copy-set of a multiset}

Let consider a multiset: $A_{m}=\left(A,m\right)$ where the range
of the multiplicity function is a subset of $\mathbb{N}$. Equivalent
definition - see \citet{syropoulos2000mathematics} - is to give a
couple $<A_{0},\rho>$ where $A_{0}$ is the set of all instances
(including copies) of $A_{m}$ with an equivalency relation $\rho$
where: 
\[
\forall x\in A_{0},\forall x'\in A_{0}:\,\,\,x\rho x'\Leftrightarrow\exists!c\in A:\,x\rho c\land x'\rho c.
\]

\begin{defin}

Two elements of $A_{0}$ such that: $x\rho x'$ are said copies one
of the other. The unique $c\in A$ is called the original element.
$x$ and $x'$ are said copies of $c$.

\end{defin}

Also $A_{0}/\rho$ is isomorphic to $A$ and: 
\[
\forall\overline{x}\in A_{0}/\rho,\exists!c\in A:\left|\left\{ x:x\in\overline{x}\right\} \right|=m(c)\land\forall x\in\overline{x}:x\rho c.
\]

\begin{defin}

The set $A_{0}$ is called a \textbf{copy-set} of the multiset $A_{m}$.

\end{defin}

\begin{rmk}

A copy-set for a given multiset is not unique. Sets of equivalency
classes of two couples $<A_{0},\rho>$ and $<A_{0}^{\prime},\rho^{\prime}>$
of a given multiset are isomorphic.

\end{rmk}

\subsubsection{Algebraic representation of a multiset}

We suppose given a natural multiset $A_{m}=\left(A,m\right)$ of universe
$A=\left\{ \alpha_{i}:i\in\left\llbracket n\right\rrbracket \right\} $
and multiplicity function $m$. It yields: 
\[
A_{m}=\left\{ \alpha_{i_{j}}^{m\left(\alpha_{i_{j}}\right)}:\alpha_{i_{j}}\in A_{m}^{\star}\right\} .
\]

\paragraph*{Vector representation: }

A multiset can be conveniently represented by a vector of length the
cardinality of the universe and where the coefficients of the vector
represent the multiplicity of the corresponding element.

\begin{defin}

The \textbf{vector representation of the multiset} $A_{m}$ is the
vector $\overrightarrow{A}=\left(m\left(\alpha\right)\right)_{\alpha\in A}.$

\end{defin}

This representation requires $\left|A\right|$ space and has $\left|A\right|-\left|A_{m}^{\star}\right|$
null elements.

The sum of the elements of $\overrightarrow{A}$ is $\sharp_{m}A_{m}$

This representation will be useful later when considering family of
multisets in order to build the incident matrix.

\paragraph*{Hypermatrix representation: }

An alternative representation is built by using a symmetric hypermatrix.
This approach is needed to reach our goal of constructing an $e$-adjacency
tensor for general hypergraphs.

\begin{defin}

The \textbf{unnormalized hypermatrix representation of the multiset
$A_{m}$} is the symmetrix hypermatrix $A_{u}=\left(a_{u,i_{1}...i_{r}}\right)_{\left(i_{1},...,i_{r}\right)\in\left\llbracket n\right\rrbracket }$
of order $r=\sharp_{m}A_{m}$ and dimension $n$ such that $a_{u,i_{1}...i_{r}}=1$
if $\forall j\in\left\llbracket r\right\rrbracket :i_{j}\in\left\llbracket n\right\rrbracket \land\alpha_{i_{j}}\in A_{m}^{\star}$.
The other elements are null.

\end{defin}

Hence the number of non-zero elements in $A_{u}$ is $\dfrac{r!}{\prod\limits _{\alpha\in A_{m}^{\star}}m\left(\alpha\right)}$
out of the $n^{r}$ elements of the representation. 

The sum of the elements of $A_{u}$ is then: $\dfrac{r!}{\prod\limits _{\alpha\in A_{m}^{\star}}m\left(\alpha\right)}.$

To achieve a normalisation, we enforce the sum of the elements of
the hypermatrix to be the m-rank of the multiset it encodes. It yields:

\begin{defin}

The \textbf{normalized hypermatrix representation of the multiset
$A_{m}$} is the symmetrix hypermatrix $A=\left(a_{i_{1}...i_{r}}\right)_{\left(i_{1},...,i_{r}\right)\in\left\llbracket n\right\rrbracket }$
of order $r=\sharp_{m}A_{m}$ and dimension $n$ such that $a_{i_{1}...i_{r}}=\dfrac{\prod\limits _{\alpha\in A_{m}^{\star}}m\left(\alpha\right)}{\left(r-1\right)!}$
if $\forall j\in\left\llbracket r\right\rrbracket :i_{j}\in\left\llbracket n\right\rrbracket \land\alpha_{i_{j}}\in A_{m}^{\star}$.
The other elements are null.

\end{defin}

\section{Hb-graphs}

\label{sec:hb-graphs}

Hyper-bag-graphs - hb-graphs for short - are introduced in this section.
Hb-graphs extend hypergraphs by allowing hyperedges to be msets. The
goal of this section is to revisit some of the definitions and results
found in \citet{bretto2013hypergraph} for hypergraphs and extend
them to hb-graphs.

\subsection{Generalities}

\subsubsection{First definitions}

\begin{defin}

Let $V=\left\{ v_{i}:i\in\left\llbracket n\right\rrbracket \right\} $
be a nonempty finite set.

A \textbf{hyper-bag-graph} - or \textbf{hb-graph} - is a family of
msets with universe $V$ and support a subset of $V$. The msets are
called the \textbf{hb-edges} and the elements of $V$ the \textbf{vertices}.

We write $E=\left(e_{i}\right)_{i\in\left\llbracket p\right\rrbracket }$
the family of hb-edges and $\mathcal{H}=\left(V,E\right)$ such a
hb-graph.

\end{defin}

We consider for the remainder of the article a hb-graph $\mathcal{H}=\left(V,E\right)$,
with $V=\left\{ v_{i}:i\in\left\llbracket n\right\rrbracket \right\} $
and $E=\left(e_{i}\right)_{i\in\left\llbracket p\right\rrbracket }$
the family of its hb-edges.

Each hb-edge $e_{i}\in E$ is of universe $V$ and has a multiplicity
function associated to it: $m_{e_{i}}:V\rightarrow\mathbb{W}$ where
$\mathbb{W}\subseteq\mathbb{R}^{+}$. When the context make it clear
the notation $m_{i}$ is used for $m_{e_{i}}$ and $m_{ij}$ for $m_{e_{i}}\left(v_{j}\right)$.

\begin{defin}

A hb-graph is said with \textbf{no repeated hb-edges} if: 
\[
\forall i_{1}\in\left\llbracket p\right\rrbracket ,\forall i_{2}\in\left\llbracket p\right\rrbracket :e_{i_{1}}=e_{i_{2}}\Rightarrow i_{1}=i_{2}.
\]

\end{defin}

\begin{defin}

A hb-graph where each hb-edge is a natural mset is called a \textbf{natural
hb-graph}.

\end{defin}

\begin{rmk}

For a general hb-graph each hb-edge has to be seen as a weighted system
of vertices, where the weights of each vertex are hb-edge dependent.

In a natural hb-graph the multiplicity function can be viewed as a
duplication of the vertices.

\end{rmk}

\begin{defin}

The \textbf{order} of a hb-graph $\mathcal{H}$ - written $O\left(\mathcal{H}\right)$
- is: 
\[
O\left(\mathcal{H}\right)=\sum\limits _{j\in\left\llbracket n\right\rrbracket }\underset{e\in E}{\max}\left(m_{e}\left(v_{j}\right)\right).
\]

Its \textbf{size} is the cardinality of $E.$

\end{defin}

\begin{defin}

The \textbf{empty hb-graph} is the hb-graph with an empty set of vertices.

The \textbf{trivial hb-graph} is the hb-graph with a non empty set
of vertices and an empty family of hb-edges.

\end{defin}

If : $\bigcup\limits _{i\in\left\llbracket p\right\rrbracket }e_{i}^{\star}=V$
then the hb-graph is said with no isolated vertices. Otherwise, the
elements of $V\backslash\bigcup\limits _{i\in\left\llbracket p\right\rrbracket }e_{i}^{\star}$
are called the \textbf{isolated vertices}. They correspond to elements
of hyperedges which have zero-multiplicity for all hb-edges.

\begin{rmk}

A \textbf{hypergraph} is a natural hb-graph where the vertices of
the hb-edges have multiplicity one for any vertex of their support
and zero otherwise.

\end{rmk}

\subsubsection{Support hypergraph}

\begin{defin}

The \textbf{support hypergraph} of a hb-graph $\mathcal{H}=\left(V,E\right)$
is the hypergraph whose vertices are the ones of the hb-graph and
whose hyperedges are the support of the hb-edges in a one-to-one way.
We write it $\underline{\mathcal{H}}=\left(V,\underline{E}\right)$,
where $\underline{E}=\left\{ e^{\star}:e\in E\right\} $.

\end{defin}

\begin{rmk}

Given a hypergraph, an infinite set of hb-graphs can be generated
that all have this hypergraph as support. To each of these hb-graphs
corresponds a hb-edge family: to each support of these hb-edges corresponds
at least a hyperedge in the hypergraph and reciprocally to each hyperedge
corresponds at least a hb-edge in each hb-graph of the infinite set.

To have unicity, the considered hypergraph and hb-graphs should be
respectively with no repeated hyperedge or with no repeated hb-edge.

\end{rmk}

\subsubsection{m-uniform hb-graphs}

\begin{defin}

The \textbf{m-range} of a hb-graph - written $r_{m}\left(\mathcal{H}\right)$
- is by definition:

\[
r_{m}\left(\mathcal{H}\right)=\underset{e\in E}{\max}\#_{m}e.
\]

The \textbf{range} of a hb-graph $\mathcal{H}$ - written $r\left(\mathcal{H}\right)$
- is the range of its support hypergraph $\underline{\mathcal{H}}.$

The \textbf{m-co-range} of a hb-graph - written $cr_{m}\left(\mathcal{H}\right)$
- is by definition:

\[
cr_{m}\left(\mathcal{H}\right)=\underset{e\in E}{\min}\#_{m}e.
\]

The \textbf{co-range} of a hb-graph $\mathcal{H}$ - written $cr\left(\mathcal{H}\right)$
- is the range of its support hypergraph $\underline{\mathcal{H}}.$

\end{defin}

\begin{defin}

A hb-graph is said $k$-m-uniform if all its hb-edges have same $m$-cardinality
$k$.

A hb-graph is said $k$-uniform if its support hypergraph is $k$-uniform.

\end{defin}

\begin{prop}

A hb-graph $\mathcal{H}$ is $k$-m-uniform if and only if: 
\[
r_{m}\left(\mathcal{H}\right)=cr_{m}\left(\mathcal{H}\right)=k.
\]

\end{prop}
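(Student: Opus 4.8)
The plan is to prove the equivalence by directly unwinding three definitions --- $k$-m-uniformity, the m-range $r_{m}$, and the m-co-range $cr_{m}$ --- together with the single elementary fact that, for a finite nonempty family of real numbers, the minimum is at most every member of the family, which is in turn at most the maximum. As a preliminary remark I would point out that $r_{m}\left(\mathcal{H}\right)$ and $cr_{m}\left(\mathcal{H}\right)$ are only meaningful when $E\neq\emptyset$, so I restrict to that case (the trivial hb-graph being handled by convention or simply excluded from the statement).

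For the direct implication, I would assume $\mathcal{H}$ is $k$-m-uniform, so that $\#_{m}e=k$ for every hb-edge $e\in E$. Then the family $\left(\#_{m}e\right)_{e\in E}$ is constant with value $k$, hence its maximum $r_{m}\left(\mathcal{H}\right)$ and its minimum $cr_{m}\left(\mathcal{H}\right)$ both equal $k$, which gives the desired chain of equalities.

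For the converse, I would assume $r_{m}\left(\mathcal{H}\right)=cr_{m}\left(\mathcal{H}\right)=k$ and take an arbitrary hb-edge $e\in E$. The sandwich inequality $cr_{m}\left(\mathcal{H}\right)\leqslant\#_{m}e\leqslant r_{m}\left(\mathcal{H}\right)$ then collapses to $k\leqslant\#_{m}e\leqslant k$, so $\#_{m}e=k$; since $e$ was arbitrary, all hb-edges have the same m-cardinality $k$, which is exactly the definition of $k$-m-uniformity.

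I do not expect a genuine obstacle here, since the argument is a pure definition-chase. The only point deserving a sentence of care is the well-definedness of $\max$ and $\min$ over the hb-edge family, i.e. ruling out (or fixing a convention for) the empty family of hb-edges; once that is settled, the proof reduces to the two short implications sketched above.
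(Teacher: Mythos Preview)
Your proposal is correct and is exactly the definition-chase the paper has in mind; the paper's own proof is the single word ``Immediate.'' Your extra care about the nonemptiness of $E$ is a reasonable refinement but not something the paper addresses.
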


\begin{proof}

Immediate.

\end{proof}

\subsubsection{HB-star and m-degree}

\begin{defin}

The \textbf{HB-star} of a vertex $x\in V$ is the multiset - written
$H(x)$ - defined as:

\[
H(x)=\left\{ e^{m_{e}(x)}\,:\,e\in E\land x\in e^{*}\right\} .
\]

\end{defin}

\begin{rmk}

The support of the HB-star $H^{*}(x)$ of a vertex $x\in V$ of a
hb-graph $\mathcal{H}$ is exactly the star of this vertex in the
support hypergraph $\underline{\mathcal{H}}$.

\end{rmk}

\begin{defin}

The \textbf{m-degree of a vertex} $x\in V$ of a hb-graph $\mathcal{H}$
- written $\deg_{m}\left(x\right)=d_{m}(x)$ - is defined as:

\[
\deg_{m}(x)=\#_{m}H(x).
\]

The \textbf{maximal m-degree} of a hb-graph $\mathcal{H}$ is written
$\Delta_{m}=\underset{x\in V}{\max}\,\deg_{m}(x)$.

The degree of a vertex $x\in V$ of a hb-graph $\mathcal{H}$ - written
$\deg\left(x\right)=d(x)$ - corresponds to the degree of this vertex
in the support hypergraph $\underline{\mathcal{H}}.$

The maximal degree of a hb-graph $\mathcal{H}$ is written $\Delta$
and corresponds to the maximal degree of the support hypergraph $\underline{\mathcal{H}}.$

\end{defin}

\begin{defin}

A hb-graph having all of its hb-edges of same m-degree $k$ is said
\textbf{m-regular} or $k$-m-regular.

A hb-graph is said \textbf{regular} if its support hypergraph is regular.

\end{defin}

\subsubsection{Dual of a hb-graph}

\begin{defin}

Considering a hb-graph $\mathcal{H}$, its dual is the hb-graph $\tilde{\mathcal{H}}$
with a set of vertices $\tilde{V}=\left\{ \tilde{x_{i}}:i\in\left\llbracket p\right\rrbracket \right\} $
which is in bijection $f$ with the set of hb-edges $E$ of $\mathcal{H}$:

\[
\forall\tilde{x_{i}}\in\tilde{V},\exists!e_{i}\in E:\,\tilde{x_{i}}=f\left(e_{i}\right).
\]

And the set of hb-edges $\tilde{E}=\left\{ \tilde{e_{j}}:j\in\left\llbracket n\right\rrbracket \right\} $
is in bijection $g:x_{j}\mapsto\tilde{e_{j}}$ - where $\tilde{e_{j}}=\left\{ \tilde{x_{i}}^{m_{e_{i}}\left(x_{j}\right)}\colon i\in\left\llbracket p\right\rrbracket \land\tilde{x_{i}}=f\left(e_{i}\right)\land x_{j}\in e_{i}^{*}\right\} $
- with the set of vertices of $\mathcal{H}$.

\end{defin}

Switching from the hb-graph to its dual:

\begin{center}%
\begin{tabular}{|c|c|c|}
\hline 
 & $\mathcal{H}$ & $\widetilde{\mathcal{H}}$\tabularnewline
\hline 
Vertices & $x_{i},i\in\left\llbracket n\right\rrbracket $ & $\tilde{x_{j}}=f\left(e_{j}\right),j\in\left\llbracket p\right\rrbracket $\tabularnewline
\hline 
Edges & $e_{j},j\in\left\llbracket p\right\rrbracket $ & $\tilde{e_{i}}=g\left(x_{i}\right),i\in\left\llbracket n\right\rrbracket $\tabularnewline
\hline 
Multiplicity & $x_{i}\in e_{j}$ with $m_{e_{j}}\left(x_{i}\right)$ & $\tilde{x_{j}}\in\tilde{e_{i}}$ with $m_{e_{i}}\left(x_{j}\right)$\tabularnewline
\hline 
 & $d_{m}\left(x_{i}\right)$ & $\#_{m}\tilde{e_{i}}$\tabularnewline
\hline 
 & $\#_{m}e_{i}$ & $d_{m}\left(\tilde{x_{j}}\right)$\tabularnewline
\hline 
 & $k$-m-uniform & $k$-m-regular\tabularnewline
\hline 
 & $k$-m-regular & $k$-m-uniform\tabularnewline
\hline 
\end{tabular}\end{center}

\subsection{Additional concepts for natural hb-graphs}

\subsubsection{Numbered copy hypergraph of a natural hb-graph}

In natural hb-graphs the hb-edge multiplicity functions have their
range in the natural number set. The vertices in a hb-edge with multiplicities
strictly greater than 1 can be seen as copies of the original vertex.

Deepening this approach copies have to be understood as ``numbered''
copies. Let $A$ and $B$ be two hb-edges. Let $v_{i}$ be a vertex
of multiplicity $m_{A}$ in $A$ and $m_{B}$ in $B$. $A\cap B$
will hold $\min\left(m_{A},m_{B}\right)$ copies: the ones ``numbered''
from 1 to $\min\left(m_{A},m_{B}\right)$. The remaining copies will
be held either in $A$ xor $B$ depending which set has the highest
multiplicity of $v_{i}$.

More generally, we define the numbered-copy set of a multiset:

\begin{defin}

Let $A_{m}=\left\{ x_{i}^{m_{i}}:i\in\left\llbracket n\right\rrbracket \right\} $.

The \textbf{numbered copy-set} of $A_{m}$ is the copy-set $\breve{A_{m}}=\left\{ \left[x_{i\,j}\right]_{m_{i}}:i\in\left\llbracket n\right\rrbracket \right\} $
where: $\left[x_{i\,j}\right]_{m_{i}}$ is a shortcut to indicate
the numbered copies of the original element $x_{i}$: $x_{i\,1}$
to $x_{i\,m_{i}}$ and $j$ is designated as the copy number of the
element $x_{i}$.

\end{defin}

\begin{defin}

Let $\mathcal{H}=\left(V,E\right)$ be a natural hb-graph. 

Let $V=\left\{ v_{j}:j\in\left\llbracket n\right\rrbracket \right\} $
be the vertices of the hb-graph. Let $E=\left\{ e_{k}\colon\,k\in\left\llbracket p\right\rrbracket \right\} $
be the hb-edges of the hb-graph and for $k\in\left\llbracket p\right\rrbracket $,
$m_{e_{k}}$ the multiplicity function of $e_{k}\in E$.

The \textbf{maximum multiplicity function} of $\mathcal{H}$ is the
function $m:V\rightarrow\mathbb{N}$ defined for all $v\in V$ by:
\[
m(v)=\underset{e\in E}{\max}\,m_{e}(v).
\]

\end{defin}

\begin{defin}

Let $\mathcal{H}=\left(V,E\right)$ be a natural hb-graph where $V=\left\{ v_{i}:i\in\left\llbracket n\right\rrbracket \right\} $
is the vertex set and $E=\left(e_{k}\right)_{k\in\left\llbracket p\right\rrbracket }$
is the hb-edge family of the hb-graph.

Let $m$ be the maximum multiplicity function.

Let consider the numbered-copy-set of the multiset $\left\{ v_{i}^{m\left(v_{i}\right)}:i\in\left\llbracket n\right\rrbracket \right\} $:
\[
\breve{V}=\left\{ \left[v_{i\,j}\right]_{m\left(v_{i}\right)}:i\in\left\llbracket n\right\rrbracket \right\} .
\]

Then each hb-edge $e_{k}=\left\{ v_{i_{j}}^{m_{k\,i_{j}}}:j\in\left\llbracket k\right\rrbracket \land i_{j}\in\left\llbracket n\right\rrbracket \right\} $
is associated to a copy-set / equivalency relation $<e_{k\,0},\rho_{k}>$
which elements are in $\breve{V}$ with copy number as small as possible
for each vertex in $e_{k}$.

Then $\mathcal{H}_{0}=\left(\breve{V},E_{0}\right)$ where $E_{0}=\left\{ e_{k\,0}\colon\,k\in\left\llbracket p\right\rrbracket \right\} $
is a hypergraph called the \textbf{numbered-copy-hypergraph} of $\mathcal{H}$.

\end{defin}

\begin{prop}

A numbered-copy-hypergraph is unique for a given hb-graph.

\end{prop}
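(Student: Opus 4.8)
The plan is to verify that the construction of $\mathcal{H}_{0}=\left(\breve{V},E_{0}\right)$ makes no arbitrary choice: every object it produces is forced by $\mathcal{H}=\left(V,E\right)$ alone. I would organise this around three ingredients — the maximum multiplicity function $m$, the vertex set $\breve{V}$, and each hyperedge $e_{k\,0}$ together with its equivalence relation $\rho_{k}$ — and then assemble them.

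The first two ingredients are essentially bookkeeping. The function $m$ is defined pointwise by $m(v)=\max_{e\in E}m_{e}(v)$, so it depends only on $\mathcal{H}$; hence the multiset $\left\{v_{i}^{m(v_{i})}:i\in\llbracket n\rrbracket\right\}$ is fixed as well. Its \emph{numbered} copy-set $\breve{V}$ is, by the definition of the numbered copy-set, the set of symbols $v_{i\,j}$ with $i\in\llbracket n\rrbracket$ and $1\le j\le m(v_{i})$, each carrying $v_{i}$ as original element. An unnumbered copy-set is only unique up to isomorphism (cf. the remark following the definition of copy-set), but the numbering convention singles out one representative, so $\breve{V}$ is unique.

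The third ingredient is the heart of the matter. Fix a hb-edge $e_{k}$ and put $m_{k\,i}=m_{e_{k}}(v_{i})$, with $m_{k\,i}=0$ when $v_{i}\notin e_{k}^{\star}$. Since $m$ is the maximum multiplicity we have $m_{k\,i}\le m(v_{i})$, so a copy-set of the multiset $e_{k}$ with elements in $\breve{V}$ does exist; any such copy-set uses, for each $i$, exactly $m_{k\,i}$ pairwise distinct copies of $v_{i}$, and these can only be chosen among $v_{i\,1},\ldots,v_{i\,m(v_{i})}$. The definition of $\mathcal{H}_{0}$ imposes that the copy numbers be ``as small as possible for each vertex'', so the minimality is a per-vertex requirement. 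Among all sets of $m_{k\,i}$ distinct indices in $\left\{1,\ldots,m(v_{i})\right\}$, the prefix $\left\{1,\ldots,m_{k\,i}\right\}$ is the unique one whose sorted tuple is componentwise $\le$ that of every other: indeed, if $s_{1}<\cdots<s_{m_{k\,i}}$ enumerates any such set then $s_{\ell}\ge\ell$ for all $\ell$. As copies of distinct originals never coincide, these per-vertex minimal choices are mutually compatible and can be realised simultaneously. Hence $e_{k\,0}=\left\{v_{i\,j}:i\in\llbracket n\rrbracket,\ 1\le j\le m_{k\,i}\right\}$ is the unique admissible hyperedge, and $\rho_{k}$ is then determined too, since two elements of $e_{k\,0}$ are $\rho_{k}$-equivalent precisely when they share their first index.

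Putting the pieces together, $E_{0}=\left\{e_{k\,0}:k\in\llbracket p\rrbracket\right\}$ is determined by $E$, and therefore $\mathcal{H}_{0}$ is unique. The argument is mostly an unwinding of the definitions; the step I would write out most carefully is the third one, namely fixing a precise meaning for ``as small as possible'' — the componentwise order on sorted index tuples — and observing that, because the per-vertex constraints do not interact, a single family of per-vertex minima simultaneously minimises all copy numbers, so that the minimum is actually attained and is the unique admissible choice.
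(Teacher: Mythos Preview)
Your proposal is correct and takes essentially the same approach as the paper: the paper's proof is the single sentence ``It is immediate by the way the numbered-copy-hypergraph is built from the hb-graph,'' and your argument is simply a careful unpacking of that immediacy, verifying that $m$, $\breve{V}$, and each $e_{k\,0}$ are forced by the construction. Your treatment of the ``as small as possible'' clause is more explicit than anything in the paper, but this is elaboration rather than a different route.
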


\begin{proof}

It is immediate by the way the numbered-copy-hypergraph is built from
the hb-graph.

\end{proof}

Allowing the duplicates to be numbered prevent ambiguities; nonetheless
it has to be seen as a conceptual approach as duplicates are entities
that are not discernible.

\subsubsection{Paths, distance and connected components}

Defining a path in a hb-graph is not straightforward as vertices are
duplicated in a hb-graph. The duplicate of a vertex strictly inside
a path must be at the intersection of two consecutive hb-edges.

\begin{defin}

A \textbf{strict m-path} $x_{0}e_{1}x_{1}\ldots e_{s}x_{s}$ in a
hb-graph from a vertex $x$ to a vertex $y$ is a vertex / hb-edge
alternation with hb-edges $e_{1}$ to $e_{s}$ and vertices $x_{0}$
to $x_{s}$ such that $x_{0}=x$, $x_{s}=y$, $x\in e_{1}$ and $y\in e_{s}$
and that for all $i\in\left\llbracket s-1\right\rrbracket $, $x_{i}\in e_{i}\cap e_{i+1}$.

A \textbf{large m-path} $x_{0}e_{1}x_{1}\ldots e_{s}x_{s}$ from a
vertex $x$ to a vertex $y$ is a vertex / hb-edge alternation with
hb-edges $e_{1}$ to $e_{s}$ and vertices $x_{0}$ to $x_{s}$ such
that $x_{0}=x$, $x_{s}=y$, $x\in e_{1}$ and $y\in e_{s}$ and that
for all $i\in\left\llbracket s-1\right\rrbracket $, $x_{i}\in e_{i}\cup e_{i+1}$.

$s$ is called in both cases the length $l(x,y)$ of the m-path from
$x$ to $y$.

Vertices from $x_{1}$ to $x_{s-1}$ are called \textbf{interior vertices}
of the m-path.

$x_{0}$ and $x_{s}$ are called \textbf{extremities} of the m-path.

If the extremities are different copies of the same object, then the
m-path is said to be an \textbf{almost cycle}.

If the extremities designate exactly the same copy of one object,
the m-path is said to be a \textbf{cycle}.

\end{defin}

\begin{rmk}
\begin{enumerate}
\item For a strict m-path, there are:
\[
\prod\limits _{i\in\left\llbracket s-1\right\rrbracket }m_{e_{i}\cap e_{i+1}}\left(x_{i}\right)
\]
possibilities of choosing the interior vertices along a given m-path
$x_{0}e_{1}x_{1}\ldots e_{s}x_{s}$ and: 
\[
m_{e_{1}}\left(x_{0}\right)\prod\limits _{i\in\left\llbracket s-1\right\rrbracket }m_{e_{i}\cap e_{i+1}}\left(x_{i}\right)m_{e_{s}}\left(x_{s}\right)
\]
possible strict m-paths in between the extremities.
\item For a large m-path, there are: 
\[
\prod\limits _{i\in\left\llbracket s-1\right\rrbracket }m_{e_{i}\cup e_{i+1}}\left(x_{i}\right)
\]
possibilities of choosing the interior vertices along a given m-path
$x_{0}e_{1}x_{1}\ldots e_{s}x_{s}$ and:
\[
m_{e_{1}}\left(x_{0}\right)\prod\limits _{i\in\left\llbracket s-1\right\rrbracket }m_{e_{i}\cup e_{i+1}}\left(x_{i}\right)m_{e_{s}}\left(x_{s}\right)
\]
possible large m-paths in between the extremities.
\item As large m-paths between two extremities by a given sequence of interior
vertices and hb-edges include strict m-paths, we often refer as \textbf{m-paths}
for large m-paths.
\item If an m-path exists from $x$ to $y$ then an m-path also exists from
$y$ to $x$.
\end{enumerate}
\end{rmk}

\begin{defin}

An m-path $x_{0}e_{1}x_{1}\ldots e_{s}x_{s}$ in a hb-graph corresponds
to a unique path in the hb-graph support hypergraph called the \textbf{support
path}.

\end{defin}

\begin{prop}

Every m-path $x_{0}e_{1}x_{1}\ldots e_{s}x_{s}$ traversing same hyperedges
and having similar copy vertices as intermediate and extremity vertices
share the same support path.

\end{prop}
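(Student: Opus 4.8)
The plan is to unwind the definition of the support path stated just above and observe that the assignment $P\mapsto\underline{P}$ depends on the m-path $P$ only through two pieces of data: the ordered sequence of hb-edges it traverses and the \emph{originals} of the vertex copies it meets. Recall that the support hypergraph $\underline{\mathcal{H}}=\left(V,\underline{E}\right)$ has hyperedge set $\underline{E}=\left\{e^{\star}:e\in E\right\}$, and that any vertex copy $x_{i}$ occurring along an m-path is, via the copy-set / numbered-copy-hypergraph formalism, a copy of a unique original vertex $c\left(x_{i}\right)\in V$ (the map $c$ being the quotient $\breve{V}\to\breve{V}/\rho\cong V$).

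First I would make the map $P\mapsto\underline{P}$ explicit: given an m-path $P=x_{0}e_{1}x_{1}\ldots e_{s}x_{s}$, set
\[
\underline{P}=c\left(x_{0}\right)\,e_{1}^{\star}\,c\left(x_{1}\right)\,e_{2}^{\star}\,\ldots\,e_{s}^{\star}\,c\left(x_{s}\right),
\]
and check that this is a genuine path in $\underline{\mathcal{H}}$. Each $e_{i}^{\star}$ is a hyperedge of $\underline{\mathcal{H}}$ by definition of $\underline{E}$; from $x_{0}\in e_{1}$ and $x_{s}\in e_{s}$ we get $c\left(x_{0}\right)\in e_{1}^{\star}$ and $c\left(x_{s}\right)\in e_{s}^{\star}$; and for an interior index $i\in\left\llbracket s-1\right\rrbracket $ the membership $x_{i}\in e_{i}\cup e_{i+1}$ (resp. $x_{i}\in e_{i}\cap e_{i+1}$ for a strict m-path) forces $c\left(x_{i}\right)\in e_{i}^{\star}\cup e_{i+1}^{\star}$ (resp. $c\left(x_{i}\right)\in e_{i}^{\star}\cap e_{i+1}^{\star}$), since $m_{e\cup e'}=\max\left(m_{e},m_{e'}\right)$ and $m_{e\cap e'}=\min\left(m_{e},m_{e'}\right)$ give $\left(e\cup e'\right)^{\star}=e^{\star}\cup e'^{\star}$ and $\left(e\cap e'\right)^{\star}=e^{\star}\cap e'^{\star}$. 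This is exactly the preceding definition, and it identifies $\underline{P}$ as the support path of $P$.

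With this description the proposition is immediate. Let $P=x_{0}e_{1}x_{1}\ldots e_{s}x_{s}$ and $P'=x'_{0}e_{1}x'_{1}\ldots e_{s}x'_{s}$ be two m-paths traversing the same hb-edge sequence $e_{1},\ldots,e_{s}$ and such that, for every $i$, the vertices $x_{i}$ and $x'_{i}$ are copies of the same original, i.e. $c\left(x_{i}\right)=c\left(x'_{i}\right)$. Substituting into the displayed formula yields $\underline{P}=c\left(x_{0}\right)e_{1}^{\star}\ldots e_{s}^{\star}c\left(x_{s}\right)=c\left(x'_{0}\right)e_{1}^{\star}\ldots e_{s}^{\star}c\left(x'_{s}\right)=\underline{P'}$, so the two m-paths have the same support path.

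I expect no real obstacle: the only point deserving a word of justification is that the passage from an m-path to its support path is well defined, i.e. that replacing each hb-edge by its support and each vertex copy by its original produces an admissible path in $\underline{\mathcal{H}}$; once that is granted (it is the content of the preceding definition, resting on the elementary identities $\left(e\cup e'\right)^{\star}=e^{\star}\cup e'^{\star}$ and $\left(e\cap e'\right)^{\star}=e^{\star}\cap e'^{\star}$), the conclusion follows by direct substitution, because the support path manifestly depends on $P$ only through $\left(e_{1},\ldots,e_{s}\right)$ and $\left(c\left(x_{0}\right),\ldots,c\left(x_{s}\right)\right)$.
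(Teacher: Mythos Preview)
Your argument is correct. The paper states this proposition without proof, treating it as an immediate consequence of the definition of the support path; your write-up simply makes explicit what the paper leaves implicit, namely that the assignment $P\mapsto\underline{P}$ factors through the sequence $\left(e_{1}^{\star},\ldots,e_{s}^{\star}\right)$ and the originals $\left(c\left(x_{0}\right),\ldots,c\left(x_{s}\right)\right)$, so two m-paths agreeing on those data have the same support path.
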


The notion of distance is similar to the one defined for hypergraphs.

\begin{defin}

Let $x$ and $y$ be two vertices of a hb-graph. The distance $d(x,y)$
from $x$ to $y$ is the minimal length of an m-path from $x$ to
$y$ if such an m-path exists. If no m-path exist, $x$ and $y$ are
said disconnected and $d(x,y)=+\infty$. 

\end{defin}

\begin{defin}

A hb-graph is said \textbf{connected} if its support hypergraph is
connected, disconnected otherwise.

\end{defin}

\begin{defin}

A \textbf{connected component} of a hb-graph is a maximal set of vertices
such that every pair of vertices of the component has an m-path in
between them.

\end{defin}

\begin{rmk}

A connected component of a hb-graph is a connected component of one
of its copy hypergraph.

\end{rmk}

\begin{defin}

The \textbf{diameter} of a hb-graph $\mathcal{H}$ - written $\text{diam}\left(\mathcal{H}\right)$
- is defined as:

\[
\text{diam}\left(\mathcal{H}\right)=\underset{x,y\in V}{\max}d(x,y).
\]

\end{defin}

\subsubsection{Adjacency}

\begin{defin}

Let $k$ be a positive integer.

Let consider $k$ vertices not necessarily distinct belonging to $V$.

Let write $V_{k,m}$ the mset consisting of these $k$ vertices with
multiplicity function $m$.

The $k$ vertices are said \textbf{$\boldsymbol{k}$-adjacent} in
$\mathcal{H}$ if it exists $e\in E$ such that $V_{k,m}\subseteq e$.

\end{defin}

Considering a hb-graph $\mathcal{H}$ of m-range $\overline{k}=r_{\mathcal{H}}$,
the hb-graph can't handle more than $\overline{k}$-adjacency in it.
This maximal $k$-adjacency is called the\textbf{ }$\boldsymbol{\overline{k}}$\textbf{-adjacency}
of $\mathcal{H}$.

\begin{defin}

Let consider a hb-edge $e$ in $\mathcal{H}$.

Vertices in the support of $e$ are said \textbf{$\boldsymbol{e^{\star}}$-adjacent}.

Vertices in the hb-edge $e$ with nonzero multiplicity are said \textbf{$\boldsymbol{e}$-adjacent}.

\end{defin}

\begin{rmk}
\begin{itemize}
\item $e^{*}$-adjacency doesn't support redundancy of vertices.
\item $e$-adjacency allows the redundancy of vertices.
\item The only case of equality is where the hb-edge has all its nodes of
multiplicity 1 at the most.
\end{itemize}
\end{rmk}

\begin{defin}

Two hb-edges are said \textbf{incident} if their support intersection
is not empty.

\end{defin}

\subsubsection{Sum of two hb-graphs}

Let $\mathcal{H}_{1}=\left(V_{1},E_{1}\right)$ and $\mathcal{H}_{2}=\left(V_{2},E_{2}\right)$
be two hb-graphs. 

The \textbf{sum of two hb-graphs} $\mathcal{H}_{1}$ and $\mathcal{H}_{2}$
is the hb-graph written $\mathcal{H}_{1}+\mathcal{H}_{2}$ defined
as the hb-graph that has:
\begin{itemize}
\item $V_{1}\cup V_{2}$ as vertex set and where the hb-edges are obtained
from the hb-edges of $E_{1}$ and $E_{2}$ with same multiplicity
for vertices of $V_{1}$ (respectively $V_{2}$) but such that for
each hyperedge in $E_{1}$ (respectively $E_{2}$) the universe is
extended to $V_{1}\cup V_{2}$ and the multiplicity function is extended
such that $\forall v\in V_{2}\backslash V_{1}:\,m\left(v\right)=0$
(respectively $\forall v\in V_{1}\backslash V_{2}:\,m\left(v\right)=0$)
\item $E_{1}+E_{2}$ as hb-edge family, ie the family constituted of the
elements of $E_{1}$ and of the elements of $E_{2}$.
\[
\mathcal{H}_{1}+\mathcal{H}_{2}=\left(V_{1}\cup V_{2},E_{1}+E_{2}\right)
\]
\end{itemize}
This sum is said direct if $E_{1}+E_{2}$ doesn't contain any new
pair of repeated hb-edge than the ones already existing in $E_{1}$
and those already existing in $E_{2}$. In this case the sum is written
$\mathcal{H}_{1}\oplus\mathcal{H}_{2}$.

\subsection{An example}

\begin{example}

Considering $\mathcal{H}=\left(V,E\right)$, with $V=\left\{ v_{1},v_{2},v_{3},v_{4},v_{5},v_{6},v_{7}\right\} $
and $E=\left\{ e_{1},e_{2},e_{3},e_{4}\right\} $ with: $e_{1}=\left\{ v_{1}^{2},v_{4}^{2},v_{5}^{1}\right\} $,
$e_{2}=\left\{ v_{2}^{3},v_{3}^{1}\right\} $, $e_{3}=\left\{ v_{3}^{1},v_{5}^{2}\right\} $,
$e_{4}=\left\{ v_{6}\right\} $.

It holds:

\begin{center}%
\begin{tabular}{|c|c|c|c|c|c|c|}
\hline 
 & $e_{1}$ & $e_{2}$ & $e_{3}$ & $e_{4}$ & $d_{m}\left(v_{i}\right)$ & $\max\left\{ m_{e_{j}}\left(v_{i}\right)\right\} $\tabularnewline
\hline 
$v_{1}$ & 2 & 0 & 0 & 0 & 2 & 2\tabularnewline
\hline 
$v_{2}$ & 0 & 3 & 0 & 0 & 3 & 3\tabularnewline
\hline 
$v_{3}$ & 0 & 1 & 1 & 0 & 2 & 1\tabularnewline
\hline 
$v_{4}$ & 2 & 0 & 0 & 0 & 2 & 1\tabularnewline
\hline 
$v_{5}$ & 1 & 0 & 2 & 0 & 3 & 2\tabularnewline
\hline 
$v_{6}$ & 0 & 0 & 0 & 1 & 1 & 1\tabularnewline
\hline 
$v_{7}$ & 0 & 0 & 0 & 0 & 0 & 0\tabularnewline
\hline 
$\#_{m}e_{j}$ & 5 & 4 & 3 & 1 &  & \tabularnewline
\hline 
\end{tabular}\end{center}

Therefore the order of $\mathcal{H}$ is $O\left(\mathcal{H}\right)=2+3+1+1+2+1+0=10$
and its size is $\left|E\right|=4$.

$v_{7}$ is an isolated vertex.

$e_{1}$ and $e_{3}$ are incident as well as $e_{3}$ and $e_{2}$.
$e_{4}$ is not incident to any hb-edge.

$v_{1}$, $v_{4}$ and $v_{5}$ are $e^{\star}$-adjacent as they
hold in $e_{1}^{*}$.

$v_{1}^{2}$, $v_{4}^{1}$ and $v_{5}^{1}$ are $e$-adjacent as they
hold in $e_{1}$.

The dual of $\mathcal{H}$ is the hb-graph: $\widetilde{\mathcal{H}}=\left(\widetilde{V},\widetilde{E}\right)$
with:
\begin{itemize}
\item $\widetilde{V}=\left\{ \tilde{x_{1}},\tilde{x_{2}},\tilde{x_{3}},\tilde{x_{4}}\right\} $
with $f\left(\tilde{x_{i}}\right)=e_{i}$ for $1\leqslant i\leqslant4$
\item $\tilde{E}=\left\{ \tilde{e_{1}},\tilde{e_{2}},\tilde{e_{3}},\tilde{e_{4}},\tilde{e_{5}},\tilde{e_{6}},\tilde{e_{7}}\right\} $
with:
\begin{itemize}
\item $\tilde{e_{1}}=\tilde{e_{4}}=\left\{ \tilde{x_{1}}^{2}\right\} $
\item $\tilde{e_{2}}=\left\{ \tilde{x_{2}}^{3}\right\} $
\item $\tilde{e_{3}}=\left\{ \tilde{x_{2}}^{1},\tilde{x_{3}}^{1}\right\} $
\item $\tilde{e_{5}}=\left\{ \tilde{x_{1}}^{1},\tilde{x_{3}}^{2}\right\} $
\item $\tilde{e_{6}}=\left\{ \tilde{x_{4}}^{1}\right\} $
\item $\tilde{e_{7}}=\emptyset$
\end{itemize}
\end{itemize}
$\widetilde{\mathcal{H}}$ has duplicated hb-edges and one empty hb-edge.

\end{example}

\section{Algebraic representation of a hb-graph}

\label{sec:algebraic desc}

\subsection{Incidence matrix of a hb-graph}

A mset is well defined by giving itself its universe, its support
and its function of multiplicity. We have seen that a mset can be
represented by a vector called the vector representation of the m-set.

Hb-edges of a given hb-graph have all the same universe.

\begin{defin}

Let $n$ and $p$ be two positive integers.

Let $\mathcal{H}=\left(V,E\right)$ be a non-empty hb-graph, with
vertex set $V=\left\{ v_{i}:i\in\left\llbracket n\right\rrbracket \right\} $
and $E=\left\{ e_{j}:j\in\left\llbracket p\right\rrbracket \right\} $.

The matrix $H=\left[m_{j}\left(v_{i}\right)\right]_{\substack{i\in\left\llbracket n\right\rrbracket \\
j\in\left\llbracket p\right\rrbracket 
}
}$ is called the \textbf{incidence matrix} of the hb-graph $\mathcal{H}$.

\end{defin}

This incidence matrix is intensively used in \citet{ouvrard2018hbgraphdiffusion}
for diffusion by exchanges in hb-graphs.

\subsection{$e$-adjacency tensor of a natural hb-graph}

To build the $e$-adjacency tensor $\mathcal{A}\left(\mathcal{H}\right)$
of a natural hb-graph $\mathcal{H}=\left(V,E\right)$ without repeated
hb-edge - with vertex set $V=\left\{ v_{i}:i\in\left\llbracket n\right\rrbracket \right\} $
and hb-edge set $E=\left\{ e_{j}:j\in\left\llbracket p\right\rrbracket \right\} $
- we use a similar approach that was used in \citet{ouvrard2017cooccurrence}
using the strong link between cubical symmetric tensors and homogeneous
polynomials.

\begin{defin}

An \textbf{elementary hb-graph} is a hb-graph that has only one non
repeated hb-edge in its hb-edge family.

\end{defin}

\begin{claim}

Let $\mathcal{H}=\left(V,E\right)$ be a hb-graph with no repeated
hb-edge.

Then: 
\[
\mathcal{H}=\underset{e\in E}{\bigoplus}\mathcal{H}_{e}
\]

where $\mathcal{H}_{e}=\left(V,\left(e\right)\right)$ is the elementary
hb-graph associated to the hb-edge $e$.

\end{claim}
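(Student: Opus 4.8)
The plan is to unwind the definition of the direct sum $\oplus$ of hb-graphs given in the subsection ``Sum of two hb-graphs'' and check that the right-hand side $\bigoplus_{e\in E}\mathcal{H}_e$ reproduces exactly the vertex set, the hb-edge family, and the multiplicity functions of $\mathcal{H}$. Since $+$ (and hence $\oplus$ when it applies) is commutative and associative on hb-graphs, the iterated sum $\bigoplus_{e\in E}\mathcal{H}_e$ is well defined once we know each pairwise sum used is direct; I would make that the last thing to verify.

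First I would treat the vertex set. Each elementary hb-graph $\mathcal{H}_e=(V,(e))$ is built on the \emph{same} vertex set $V$, so $\bigcup_{e\in E} V = V$, and the sum construction leaves the vertex set as this union, namely $V$. This also means no universe extension is actually needed in forming the sums: every hb-edge already has universe $V$, so the ``extend the universe and set new multiplicities to $0$'' step is vacuous here. Next I would handle the hb-edge family: by definition the hb-edge family of a sum is the concatenation $E_1+E_2$ of the two families, so iterating over $e\in E$ the hb-edge family of $\bigoplus_{e\in E}\mathcal{H}_e$ is the family $(e)_{e\in E}=E$, with each hb-edge keeping its original multiplicity function $m_e:V\to\mathbb{W}$ unchanged. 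Hence the right-hand side has vertex set $V$ and hb-edge family $E$ with the original multiplicities, i.e.\ it is $\mathcal{H}$.

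The one genuine point to check is that the sum is \emph{direct}, i.e.\ that writing it with $\oplus$ is legitimate: we must see that concatenating the one-element families $(e)$ for $e\in E$ introduces no new pair of repeated hb-edges beyond those already present. But each $\mathcal{H}_e$ has a single, by definition non-repeated, hb-edge, so there are no repeated pairs inside any summand; and a repeated pair $\{e_{i_1},e_{i_2}\}$ in the concatenated family with $i_1\neq i_2$ would be a pair $e_{i_1}=e_{i_2}$ already present in $E$. Since $\mathcal{H}$ is assumed to have no repeated hb-edge, $e_{i_1}=e_{i_2}$ forces $i_1=i_2$, so in fact no repeated pair arises at all, and a fortiori no \emph{new} one; thus each partial sum is direct and the notation $\bigoplus$ is justified. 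I expect the only mildly delicate part to be bookkeeping the ``no new repeated pair'' condition against the hypothesis ``no repeated hb-edge,'' and making explicit that associativity/commutativity of $+$ lets us speak of the unordered iterated sum; everything else is immediate from the definitions.
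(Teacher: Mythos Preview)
Your proof is correct and follows essentially the same approach as the paper's own proof: verify that the sum is direct using the no-repeated-hb-edge hypothesis, then iterate. The paper's version is much terser (it only checks directness for a pair $e_1,e_2$ and then says ``a straightforward iteration \ldots leads trivially to the result''), whereas you have spelled out the vertex-set, hb-edge-family, and multiplicity bookkeeping explicitly, but the underlying argument is identical.
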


\begin{proof}

Let $e_{1}\in E$ and $e_{2}\in E$. As $\mathcal{H}$ is with no
repeated hb-edge, $e_{1}+e_{2}$ doesn't contain new pairs of repeated
elements. Thus $\mathcal{H}_{e_{1}}+\mathcal{H}_{e_{2}}$ is a direct
sum.

A straightforward iteration over elements of $e\in E$ leads trivially
to the result.

\end{proof}

We need first to define hypermatrices for the $\overline{k}$-adjacency
of an elementary hb-graph and of a m-uniform hb-graph.

\subsubsection{Normalised $\overline{k}$-adjacency tensor of an elementary hb-graph}

We consider an elementary hb-graph $\mathcal{H}_{e}=\left(V,\left(e\right)\right)$
where $V=\left\{ v_{i}:i\in\left\llbracket n\right\rrbracket \right\} $
and $e$ is a multiset of universe $V$ and multiplicity function
$m$. The support of $e$ is $e^{\star}=\left\{ v_{j_{1}},\ldots,v_{j_{k}}\right\} $
by considering, without loss of generality: $1\leqslant j_{1}<\ldots<j_{k}\leqslant n$
.

$e$ is the multiset: $e=\left\{ v_{j_{1}}^{m_{j_{1}}},\ldots,v_{j_{k}}^{m_{j_{k}}}\right\} $
where $m_{j}=m\left(v_{j}\right)$.

The normalised hypermatrix representation of $e$, written $Q_{e}$,
describes uniquely the m-set $e$. Thus the elementary hb-graph $\mathcal{H}_{e}$
is also uniquely described by $Q_{e}$ as $e$ is the unique hb-edge.
$Q_{e}$ is of rank $r=\#_{m}e=\sum\limits _{j=1}^{k}m_{j}$ and dimension
$n$.

Hence, the definition:

\begin{defin}

Let $\mathcal{H}=\left(V,(e)\right)$ be an elementary hb-graph with
$V=\left\{ v_{i}:i\in\left\llbracket n\right\rrbracket \right\} $
and\textbf{ }$e$ the multiset $\left\{ v_{j_{1}}^{m_{j_{1}}},\ldots,v_{j_{k}}^{m_{j_{k}}}\right\} $
of m-rank $r$, universe $V$ and multiplicity function $m.$

The \textbf{normalised $\overline{k}$-adjacency hypermatrix of an
elementary hb-graph} $\mathcal{H}_{e}$ is the normalised representation
of the multiset $e$: it is the symmetric hypermatrix $Q_{e}=\left(q_{j_{1}\ldots j_{r}}\right)$
of rank $r$ and dimension $n$ where the only nonzero elements are:

\[
q_{\sigma\left(j_{1}\right)^{m_{i\,\sigma\left(j_{1}\right)}}\ldots\sigma\left(j_{k}\right)^{m_{i\,\sigma\left(j_{k}\right)}}}=\dfrac{m_{ij_{1}}!\ldots m_{ij_{k}}!}{\left(r-1\right)!}
\]

where $\sigma\in\mathcal{S}_{\left\llbracket r\right\rrbracket }.$

\end{defin}

In a elementary hb-graph the $\overline{k}$-adjacency corresponds
to $\#_{m}e$-adjacency. This hypermatrix encodes the $\overline{k}$-adjacency
of the elementary hb-graph; as the $\overline{k}$-adjacency corresponds
to $e$-adjacency in such a hb-graph is encodes also the $e$-adjacency
of the elementary hb-graph.

\subsubsection{hb-graph polynomial}

\paragraph*{Homogeneous polynomial associated to a hypermatrix: }

With a similar approach than in \citet{ouvrard2017cooccurrence} where
full details are given, let write $\boldsymbol{e_{1}},\ldots,\boldsymbol{e_{n}}$
the canonical basis of $\mathbb{R}^{n}$.

$\left(\boldsymbol{e_{i_{1}}}\otimes\ldots\otimes\boldsymbol{e_{i_{k}}}\right)_{i_{1},\ldots,i_{k}\in\left\llbracket n\right\rrbracket }$
is a basis of $\mathcal{L}_{k}^{0}\left(\mathbb{K}^{n}\right)$, where
$\otimes$ is the Segre outerproduct. 

A tensor $\boldsymbol{Q}\in\mathcal{L}_{k}^{0}\left(\mathbb{K}^{n}\right)$
is associated to an hypermatrix $Q=\left(q_{\,i_{1}\ldots i_{r}}\right)_{i_{1},\ldots,i_{r}\in\left\llbracket n\right\rrbracket }$
by writting $\boldsymbol{Q}$ as:

\[
\boldsymbol{Q}=\sum\limits _{i_{1},\ldots,i_{r}\in\left\llbracket n\right\rrbracket }q_{\,i_{1}\ldots i_{r}}\boldsymbol{e_{i_{1}}}\otimes\ldots\otimes\boldsymbol{e_{i_{r}}}
\]

Considering $n$ variables $z_{i}$ attached to the $n$ vertices
$v_{i}$ and $z=\sum\limits _{i\in\left\llbracket n\right\rrbracket }z_{i}\boldsymbol{e_{i}}$,
the multilinear matrix product $\left(z,\ldots,z\right).Q=\left(z\right)_{\left[r\right]}.Q$
is a polynomial $P\left(\boldsymbol{z_{0}}\right)$\footnote{As a reminder:$\boldsymbol{z_{0}}=\left(z_{1},...,z_{n}\right)$}:

\[
P\left(\boldsymbol{z_{0}}\right)=\sum\limits _{i_{1},\ldots,i_{r}\in\left\llbracket n\right\rrbracket }q_{\,i_{1}\ldots i_{r}}z_{i_{1}}\ldots z_{i_{r}}
\]
 of degree $r$.

\subsubsection*{Elementary hb-graph polynomial:}

Considering a hb-graph $\mathcal{H}_{e}=\left(V,\left(e\right)\right)$
with $V=\left\{ v_{i}:i\in\left\llbracket n\right\rrbracket \right\} $
and\textbf{ }$e$ the multiset $\left\{ v_{j_{1}}^{m_{j_{1}}},\ldots,v_{j_{k}}^{m_{j_{k}}}\right\} $
of m-rank $r$, universe $V$ and multiplicity function $m$.

Using the normalised $\overline{k}$-adjacency hypermatrix $Q_{e}=\left(q_{\,i_{1}\ldots i_{r}}\right)_{i_{1},\ldots,i_{r}\in\left\llbracket n\right\rrbracket }$,
which is symmetric, we can write the reduced version of its attached
homogeneous polynomial $P_{e}$: 
\[
P_{e}\left(\boldsymbol{z_{0}}\right)=\dfrac{r!}{m_{j_{1}}!\ldots m_{j_{k}}!}q_{j_{1}^{m_{j_{1}}}\ldots j_{k}^{m_{j_{k}}}}z_{j_{1}}^{m_{j_{1}}}\ldots z_{j_{k}}^{m_{j_{k}}}.
\]

\paragraph*{Hb-graph polynomial: }

Considering a hb-graph $\mathcal{H}=\left(V,E\right)$ with no-repeated
hb-edge, with $V=\left\{ v_{i}:i\in\left\llbracket n\right\rrbracket \right\} $
and $E=\left(e_{i}\right)_{i\in\left\llbracket p\right\rrbracket }.$

This hb-graph can be summarized by a polynomial of degree $r_{\mathcal{H}}=\underset{e\in E}{\max}\#_{m}\left(e\right)$:

\begin{eqnarray*}
P\left(\boldsymbol{z_{0}}\right) & = & \sum\limits _{i\in\left\llbracket p\right\rrbracket }c_{e_{i}}P_{e_{i}}\left(\boldsymbol{z_{0}}\right)\\
 & = & \sum\limits _{i\in\left\llbracket p\right\rrbracket }c_{e_{i}}\dfrac{r_{i}!}{m_{ij_{1}}!\ldots m_{ij_{k_{i}}}!}q_{j_{1}^{m_{i\,j_{1}}}\ldots j_{k_{i}}^{m_{i\,j_{k_{i}}}}}z_{j_{1}}^{m_{ij_{1}}}\ldots z_{j_{k_{i}}}^{m_{ij_{k_{i}}}}
\end{eqnarray*}

where $c_{e_{i}}$ is a technical coefficient. $P\left(z_{0}\right)$
is called the \textbf{hb-graph polynomial}. The choice of $c_{e_{i}}$
is made in order to retrieve the m-degree of the vertices from the
$e$-adjacency tensor.

\subsubsection{$\overline{k}$-adjacency hypermatrix of a m-uniform natural hb-graph}

We now extend to m-uniform hb-graph the $\overline{k}$-adjacency
hypermatrix obtained in the case of an elementary hb-graph.

In the case of a $r$-m-uniform natural hb-graph with no repeated
hb-edge, each hb-edge has the same $m$-cardinality $r$. Hence the
$\overline{k}$-adjacency of a $r$-m-uniform hb-graph corresponds
to $r$-adjacency where $r$ is the m-rank of the hb-graph. The $\overline{k}$-adjacency
tensor of the hb-graph has rank $r$ and dimension $n$. The elements
of the $\overline{k}$-adjacency hypermatrix are: 
\[
a_{i_{1}\ldots i_{r}}
\]
 with $i_{1},\ldots,i_{r}\in\left\llbracket n\right\rrbracket $. 

The associated hb-graph polynomial is homogeneous of degree $r.$

We obtain the definition of the $\overline{k}$-adjacency tensor of
a $r$-m-uniform hb-graph by summing the $\overline{k}$-adjacency
tensor attached to each hyperedge with a coefficient $c_{i}$ equals
to 1 for each hyperedge.

\begin{defin}

Let $\mathcal{H}=\left(V,E\right)$ be a hb-graph. $V=\left\{ v_{i}:i\in\left\llbracket n\right\rrbracket \right\} $.

The \textbf{$\overline{k}$-adjacency hypermatrix of a $r$-m-uniform
hb-graph $\mathcal{H}=\left(V,E\right)$} is the hypermatrix $A_{\mathcal{H}}=\left(a_{i_{1}\ldots i_{r}}\right)_{i_{1},\ldots,i_{r}\in\left\llbracket n\right\rrbracket }$
defined by:

\[
A_{\mathcal{H}}=\sum\limits _{i\in\left\llbracket p\right\rrbracket }Q_{e_{i}}
\]

where $Q_{e_{i}}$ is the $\overline{k}$-adjacency hypermatrix of
the elementary hb-graph associated to the hb-edge $e_{i}=\left\{ v_{j_{1}}^{m_{ij_{1}}},\ldots,v_{j_{k_{i}}}^{m_{ij_{k_{i}}}}\right\} \in E$.

The only non-zero elements of $Q_{e_{i}}$ are the elements of indices
obtained by permutation of the multiset $\left\{ j_{1}^{m_{ij_{1}}},\ldots,j_{k_{i}}^{m_{ij_{k_{i}}}}\right\} $
and are all equals to $\dfrac{m_{ij_{1}}!\ldots m_{ij_{k_{i}}}!}{\left(r-1\right)!}$.

\end{defin}

\begin{rmk}

When a $r$-m-uniform hb-graph has 1 as vertex multiplicity for any
vertices in each hb-edge support of all hb-edges, then this hb-graph
is a $r$-uniform hypergraph: in this case, we retrieve the result
of the degree-normalized tensor defined in \citet{cooper2012spectra}.

\end{rmk}

\begin{claim}

The $m$-degree of a vertex $v_{j}$ in a $r$-m-uniform hb-graph
$\mathcal{H}$ of $\overline{k}$-adjacency hypermatrix is:
\[
\deg_{m}\left(v_{j}\right)=\sum\limits _{j_{2},...,j_{r}\in\left\llbracket n\right\rrbracket }a_{jj_{2}...j_{r}}.
\]

\end{claim}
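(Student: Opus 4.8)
The claim is that summing the $\overline{k}$-adjacency hypermatrix of a $r$-m-uniform natural hb-graph over the last $r-1$ indices recovers the $m$-degree of the vertex attached to the first index. Since $A_{\mathcal{H}} = \sum_{i\in\llbracket p\rrbracket} Q_{e_i}$, by linearity of the summation it suffices to compute $\sum_{j_2,\dots,j_r\in\llbracket n\rrbracket} (Q_{e_i})_{j j_2\ldots j_r}$ for a single hb-edge $e_i = \{v_{j_1}^{m_{ij_1}},\dots,v_{j_{k_i}}^{m_{ij_{k_i}}}\}$ and then check that the total over $i$ equals $\deg_m(v_j) = \#_m H(v_j) = \sum_{e\in E,\, v_j\in e^\star} m_e(v_j)$. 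So the first step is to reduce to a single elementary hb-graph, and the second step is a counting argument for $Q_{e_i}$.

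\textbf{The single-hb-edge computation.} Fix $e_i$ and suppose $v_j \in e_i^\star$, say $j = j_1$ (relabelling), with multiplicity $m := m_{ij_1} \geq 1$. The nonzero entries of $Q_{e_i}$ are exactly those whose index multiset is $\{j_1^{m_{ij_1}},\dots,j_{k_i}^{m_{ij_{k_i}}}\}$, each equal to $\frac{m_{ij_1}!\cdots m_{ij_{k_i}}!}{(r-1)!}$. Summing over $j_2,\dots,j_r$ with the first index pinned to $j = j_1$ amounts to counting the number of ordered tuples $(j_2,\dots,j_r)$ such that $(j_1, j_2,\dots,j_r)$ is a permutation of that multiset, i.e. the number of ways to arrange the \emph{remaining} multiset $\{j_1^{m-1}, j_2^{m_{ij_2}},\dots,j_{k_i}^{m_{ij_{k_i}}}\}$ in the $r-1$ slots, which is the multinomial coefficient $\frac{(r-1)!}{(m-1)!\,m_{ij_2}!\cdots m_{ij_{k_i}}!}$. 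Multiplying the count by the entry value gives
\[
\frac{(r-1)!}{(m-1)!\,m_{ij_2}!\cdots m_{ij_{k_i}}!}\cdot\frac{m!\,m_{ij_2}!\cdots m_{ij_{k_i}}!}{(r-1)!} = \frac{m!}{(m-1)!} = m = m_{e_i}(v_j).
\]
If $v_j \notin e_i^\star$ then no nonzero entry of $Q_{e_i}$ has $j$ among its indices, so the partial sum is $0 = m_{e_i}(v_j)$. The key combinatorial fact to state cleanly here is the ``pin one occurrence'' identity: the number of orderings of a multiset of total size $r$ that begin with a fixed element of multiplicity $m$ equals the number of orderings of the size-$(r-1)$ multiset obtained by removing one copy of that element.

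\textbf{Assembling and naming the obstacle.} Summing over all $i \in \llbracket p\rrbracket$:
\[
\sum_{j_2,\dots,j_r\in\llbracket n\rrbracket} a_{j j_2\ldots j_r} = \sum_{i\in\llbracket p\rrbracket} m_{e_i}(v_j) = \sum_{e\in E,\, v_j\in e^\star} m_e(v_j) = \#_m H(v_j) = \deg_m(v_j),
\]
using the definition of the HB-star and of $m$-degree from earlier in the paper. The only genuinely delicate point is the bookkeeping in the single-hb-edge count: one must be careful that the partial sum ranges over \emph{ordered} tuples (so permutations with repeated entries are counted with multiplicity matching the symmetric-hypermatrix convention), and that the normalisation constant $\frac{m_{ij_1}!\cdots m_{ij_{k_i}}!}{(r-1)!}$ is exactly what cancels the multinomial count down to $m_{ij_1}$ — this cancellation is precisely why that normalisation was chosen. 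Handling the case $v_j\notin e_i^\star$ and the edge case $m=1$ (where $(m-1)! = 0! = 1$, consistent with the hypergraph specialisation noted in the remark) rounds out the argument; neither presents a real difficulty.
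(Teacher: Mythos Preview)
Your argument is correct and is essentially the same as the paper's: both reduce to a single hb-edge, count the $\dfrac{(r-1)!}{(m_{ij}-1)!\,m_{il_2}!\cdots m_{il_{k_i}}!}$ ordered tuples with first index pinned to $j$, multiply by the common entry value $\dfrac{m_{ij}!\,m_{il_2}!\cdots m_{il_{k_i}}!}{(r-1)!}$ to obtain $m_{ij}$, and sum over hb-edges containing $v_j$. Your write-up is a bit more explicit about linearity, the $v_j\notin e_i^\star$ case, and the $m=1$ edge case, but the method is identical.
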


\begin{proof}

$\sum\limits _{j_{2},...,j_{r}\in\left\llbracket n\right\rrbracket }a_{jj_{2}\ldots j_{r}}$
has non-zero terms only for corresponding hb-edges $e_{i}$ that have
$v_{j}$ in it. For such a hb-edge containing $v_{j}$, it is described
by $e_{i}=\left\{ v_{j}^{m_{i\,j}},v_{l_{2}}^{m_{i\,l_{2}}},\ldots,v_{l_{k_{i}}}^{m_{i\,l_{k_{i}}}}\right\} $.
It means that the multiset $\left\{ \left\{ j_{2},\ldots,j_{r}\right\} \right\} $
corresponds exactly to the multiset $\left\{ j^{m_{i\,j-1}},l_{2}^{m_{i\,l_{2}}},\ldots,l_{k_{i}}^{m_{i\,l_{k_{i}}}}\right\} .$
For each $e_{i}$ such that $v_{j}\in e_{i},$ there is $\dfrac{\left(r-1\right)!}{\left(m_{i\,j}-1\right)!m_{i\,l_{2}}!\ldots m_{i\,l_{k_{i}}}!}$
possible permutation of the indices $j_{2}$ to $j_{l}$ and $a_{jj_{2}...j_{r}}=\dfrac{m_{i\,j}!m_{i\,l_{2}}!\ldots m_{i\,l_{k_{i}}}!}{\left(r-1\right)!}$. 

Also: $\sum\limits _{j_{2},...,j_{r}\in\left\llbracket n\right\rrbracket }a_{jj_{2}\ldots j_{r}}=\sum\limits _{i\in\left\llbracket p\right\rrbracket \,:\,v_{j}\in e_{i}}m_{i\,j}=\text{deg}_{m}\left(v_{j}\right)$.

\end{proof}

\subsubsection{Elementary operations on hb-graphs}

In \citet{ouvrard2017cooccurrence}, we describe two elementary operations
that are used in the hypergraph uniformisation process. We describe
here two similar operations and some additional operations for hb-graphs.

\textbf{\vspace{1em}\hspace{-1.25em}}%
\noindent\fbox{\begin{minipage}[t]{1\columnwidth - 2\fboxsep - 2\fboxrule}%
\begin{operation}

Let $\mathcal{H}=\left(V,E\right)$ be a hb-graph.

Let $w_{1}$ be a constant weighted function on hb-edges with constant
value $1$.

The weighted hb-graph $\mathcal{H}_{1}=\left(V,E,w_{1}\right)$ is
called the \textbf{canonical weighted hb-graph} of $\mathcal{H}.$

The application $\phi_{\text{cw}}:\mathcal{H}\mapsto\mathcal{H}_{1}$
is called the \textbf{canonical weighting operation}.

\end{operation}%
\end{minipage}}

\textbf{\vspace{1em}\hspace{-1.25em}}%
\noindent\fbox{\begin{minipage}[t]{1\columnwidth - 2\fboxsep - 2\fboxrule}%
\begin{operation}

Let $\mathcal{H}=\left(V,E,w_{1}\right)$ be a canonical weighted
hb-graph.

Let $c\in\mathbb{R}^{++}$. Let $w_{c}$ be a constant weighted function
on hb-edges with constant value $c$.

The weighted hb-graph $\mathcal{H}_{c}=\left(V,E,w_{c}\right)$ is
called the \textbf{$c$-dilatated hb-graph} of $\mathcal{H}.$

The application $\phi_{\text{c-d}}:\mathcal{H}_{1}\mapsto\mathcal{H}_{c}$
is called the \textbf{c-dilatation operation}.

\end{operation}%
\end{minipage}}

\textbf{\vspace{1em}\hspace{-1.25em}}%
\noindent\fbox{\begin{minipage}[t]{1\columnwidth - 2\fboxsep - 2\fboxrule}%
\begin{operation}

Let $\mathcal{H}_{w}=\left(V,E,w\right)$ be a weighted hb-graph.
Let $y\notin V$ be a new vertex.

The \textbf{y-complemented hbgraph} of $\mathcal{H}_{w}$ is the hbgraph
$\mathcal{\tilde{H}}_{\tilde{w}}=\left(\tilde{V},\tilde{E},\tilde{w}\right)$
where $\tilde{V}=V\cup\left\{ y\right\} $, $\tilde{E}=\left(\xi\left(e\right)\right)_{e\in E}$
- with the map $\xi:E\rightarrow\mathcal{M}\left(\tilde{V}\right)$
such that for all $e\in E$, $\xi\left(e\right)\in\mathcal{M}\left(\tilde{V}\right)$
and is the multiset $\left\{ x^{m_{\xi(e)}(x)}:x\in\tilde{V}\right\} $,
with $m_{\xi(e)}(x)=\begin{cases}
m_{e}(x) & \text{if }x\in e^{\star}\\
r_{\mathcal{H}}-\#_{m}e & \text{if }x=y
\end{cases}$ - and, the weight function is $\tilde{w}$ is such that $\forall e\in E$:
$\tilde{w}\left(\xi(e)\right)=w(e).$

The application $\phi_{\text{y-c}}:\mathcal{H}_{w}\mapsto\mathcal{\tilde{H}}_{\tilde{w}}$
is called the \textbf{y-complemented operation}.

\end{operation}%
\end{minipage}}

\textbf{\vspace{1em}\hspace{-1.25em}}%
\noindent\fbox{\begin{minipage}[t]{1\columnwidth - 2\fboxsep - 2\fboxrule}%
\begin{operation}

Let $\mathcal{H}_{w}=\left(V,E,w\right)$ be a weighted hb-graph.
Let $y\notin V$ be a new vertex. Let $\alpha\in\mathbb{R}^{++}.$

The \textbf{$y^{\alpha}$-vertex-increased hbgraph} of $\mathcal{H}_{w}$
is the hbgraph $\mathcal{H}_{w^{+}}^{+}=\left(V^{+},E^{+},w^{+}\right)$
where $V^{+}=V\cup\left\{ y\right\} $, $E^{+}=\left(\phi\left(e\right)\right)_{e\in E}$
- with the map $\phi:E\rightarrow\mathcal{M}\left(V^{+}\right)$ such
that for all $e\in E$, $\phi\left(e\right)\in\mathcal{M}\left(V^{+}\right)$
and is the multiset $\left\{ x^{m_{\phi(e)}(x)}:x\in V^{+}\right\} $,
with $m_{\phi(e)}(x)=\begin{cases}
m_{e}(x) & \text{if }x\in e^{\star}\\
\alpha & \text{if }x=y
\end{cases}$ - and, the weight function is $w^{+}$ is such that $\forall e\in E$:
$w^{+}\left(\phi(e)\right)=w(e).$

The application $\phi_{y^{\alpha}\text{-v}}:\mathcal{H}_{w}\mapsto\mathcal{H}_{w^{+}}^{+}$
is called the \textbf{$y^{\alpha}$-vertex-increasing operation}.

\end{operation}%
\end{minipage}}

\textbf{\vspace{1em}\hspace{-1.25em}}%
\noindent\fbox{\begin{minipage}[t]{1\columnwidth - 2\fboxsep - 2\fboxrule}%
\begin{operation}

The \textbf{merged hb-graph} $\widehat{\mathcal{H}_{\widehat{w}}}=\left(\widehat{V},\widehat{E},\widehat{w}\right)$
of a family $\left(\mathcal{H}_{i}\right)_{i\in I}$ of weighted hb-graphs
with $\forall i\in I:\,\mathcal{H}_{i}=\left(V_{i},E_{i},w_{i}\right)$
is the weighted hb-graph with vertex set $\widehat{V}=\bigcup\limits _{i\in I}V_{i}$,
with hb-edge family $\widehat{E}=\left(\psi\left(e\right)\right)_{e\in\sum\limits _{i\in I}E_{i}}$\footnote{$\sum\limits _{i\in I}E_{i}$ is the family obtained with all elements
of each family $E_{i}.$} - with the map $\psi:\sum\limits _{i\in I}E_{i}\rightarrow\mathcal{M}\left(\widehat{V}\right)$
such that for all $e\in\sum\limits _{i\in I}E_{i}$, $\psi\left(e\right)\in\mathcal{M}\left(\widehat{V}\right)$
and is the multiset $\left\{ x^{m_{\psi(e)}(x)}:x\in\widehat{V}\right\} $,
with $m_{\psi(e)}(x)=\begin{cases}
m_{e}(x) & \text{if }x\in e^{\star}\\
0 & \text{otherwise}
\end{cases}$- and, such that $\forall e\in E_{i}$, $\widehat{w}(e)=w_{i}(e).$

The application $\phi_{\text{m}}:\left(\mathcal{H}_{i}\right)_{i\in I}\mapsto\widehat{\mathcal{H}}$
is called the \textbf{merging operation}.

\end{operation}%
\end{minipage}}

\textbf{\vspace{1em}\hspace{-1.25em}}%
\noindent\fbox{\begin{minipage}[t]{1\columnwidth - 2\fboxsep - 2\fboxrule}%
\begin{operation}

Decomposing a hb-graph $\mathcal{H}=\left(V,E\right)$ into a family
of hb-graphs $\left(\mathcal{H}_{i}\right)_{i\in I}$, where $\mathcal{H}_{i}=\left(V,E_{i}\right)$
such that $\mathcal{H}=\bigoplus\limits _{i\in I}\mathcal{H}_{i}$
is called the \textbf{decomposition operation} $\phi_{\text{d}}:\mathcal{H}\mapsto\left(\mathcal{H}_{i}\right)_{i\in I}$.

\end{operation}%
\end{minipage}}

\begin{rmk}

The direct sum of two hb-graphs appears as a merging operation of
the two hb-graphs.

\end{rmk}

\begin{defin}

Let $\mathcal{H}=\left(V,E\right)$ and $\mathcal{H}'=\left(V',E'\right)$
be two hb-graphs.

Let $\mathcal{\phi}:\mathcal{H}\mapsto\mathcal{H}'$. 

$\phi$ is said \textbf{preserving $\boldsymbol{e}$-adjacency} if
vertices of $V'$ that are $e$-adjacent in $\mathcal{H}'$ are either
$e$-adjacent vertices in $\mathcal{H}$ or the maximal subset of
these vertices that are in $V$ are $e$-adjacent in $\mathcal{H}$.

$\phi$ is said \textbf{preserving exactly $\boldsymbol{e}$-adjacency}
if vertices that are $e$-adjacent in $\mathcal{H}'$ are $e$-adjacent
in $\mathcal{H}$ and reciprocally.

\end{defin}

We can extend these definitions to $\psi:\left(\mathcal{H}_{i}\right)_{i\in I}\mapsto\mathcal{H}'$.

\begin{defin}

Let $\left(\mathcal{H}_{i}\right)_{i\in I}$ be a family of hb-graphs
with $\forall i\in I,\,\mathcal{H}_{i}=\left(V_{i},E_{i}\right)$
and $\mathcal{H}'=\left(V',E'\right)$ a hb-graph.

Let consider $\psi:\left(\mathcal{H}_{i}\right)_{i\in I}\mapsto\mathcal{H}'$. 

$\psi$ is said \textbf{preserving }$e$-adj\textbf{acency} if vertices
that are $e$-adjacent in $\mathcal{H}'$ are either $e$-adjacent
vertices in exactly one of the $\mathcal{H}_{i},i\in I$ or the maximal
subset of these vertices that is in $V=\bigcup\limits _{i\in I}V_{i}$
is $e$-adjacent in exactly one of the $\mathcal{H}_{i}.$

$\phi$ is said \textbf{preserving exactly }$e$-adj\textbf{acency}
if vertices that are $e$-adjacent in $\mathcal{H}'$ are $e$-adjacent
in exactly one of the $\mathcal{H}_{i},i\in I$ and reciprocally.

\end{defin}

We can extend these definitions to $\nu:\mathcal{H}\mapsto\left(\mathcal{H}_{i}\right)_{i\in I}.$

\begin{defin}

Let $\left(\mathcal{H}_{i}\right)_{i\in I}$ be a family of hb-graphs
with $\forall i\in I,\,\mathcal{H}_{i}=\left(V_{i},E_{i}\right)$
and $\mathcal{H}=\left(V,E\right)$ a hb-graph.

Let consider $\nu:\mathcal{H}\mapsto\left(\mathcal{H}_{i}\right)_{i\in I}.$

$\psi$ is said \textbf{preserving }$e$-adj\textbf{acency} if vertices
that are $e$-adjacent in one of the $\mathcal{H}_{i},i\in I$ are
either $e$-adjacent vertices in $\mathcal{H}$ or the maximal subset
of these vertices that is in $V$ is $e$-adjacent in $\mathcal{H}.$

$\phi$ is said \textbf{preserving exactly }$e$-adj\textbf{acency}
if vertices that are $e$-adjacent in one of the $\mathcal{H}_{i},i\in I$
are $e$-adjacent in $\mathcal{H}$ and reciprocally.

\end{defin}

\begin{claim}

Let $\mathcal{H}=\left(V,E\right)$ be a hb-graph.

The canonical weighting operation, the $c$-dilatation operation,
the merging operation and, the decomposition operation preserve exactly
$e$-adjacency.

The $y$-complemented operation and the $y^{\alpha}$-vertex-increasing
operation preserve $e$-adjacency.

\end{claim}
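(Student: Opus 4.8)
The plan is to go through the six operations one at a time, in each case unwinding the definition of "preserving ($\text{exactly}$) $e$-adjacency" and checking whether the set of $e$-adjacent vertex tuples is changed. Recall that a tuple of vertices (a multiset $V_{k,m}$) is $e$-adjacent in $\mathcal{H}$ precisely when there is some hb-edge $e\in E$ with $V_{k,m}\subseteq e$; since weighting a hb-graph does not touch $V$ or $E$ (only attaches a weight map $w$), and neither $\cup$, $\oplus$, nor the decomposition alters the underlying multisets that serve as hb-edges, the combinatorial adjacency data is literally untouched in those four cases. For the last two operations a new vertex $y\notin V$ is adjoined and every hb-edge is enlarged, so adjacency can only be created, never destroyed, among the old vertices — which is exactly the weaker "preserving $e$-adjacency" condition.

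Concretely I would argue as follows. First, the \textbf{canonical weighting} $\phi_{\text{cw}}$ and the \textbf{$c$-dilatation} $\phi_{\text{c-d}}$: here $\mathcal{H}'=(V,E,w)$ has the same $V$ and the same family $E$ as $\mathcal{H}$, so for any multiset $V_{k,m}$ of vertices, $\exists e\in E:V_{k,m}\subseteq e$ holds in $\mathcal{H}'$ iff it holds in $\mathcal{H}$; hence the two hb-graphs have identical $e$-adjacent tuples and the operation preserves exactly $e$-adjacency. Second, the \textbf{merging operation} $\phi_{\text{m}}$ applied to $(\mathcal{H}_i)_{i\in I}$: the hb-edge family of $\widehat{\mathcal{H}}$ is $(\psi(e))_{e\in\sum_i E_i}$, and since $m_{\psi(e)}$ agrees with $m_e$ on $e^{\star}$ and is $0$ elsewhere, each $\psi(e)$ is, as a multiset restricted to its support, identical to $e$. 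Thus a tuple is $e$-adjacent in $\widehat{\mathcal{H}}$ iff it is contained in some $\psi(e)$ iff it is contained in the corresponding $e\in E_i$ for exactly one $i$ (using that the families are simply juxtaposed), which is precisely the defining condition for $\psi$ to preserve exactly $e$-adjacency. Third, the \textbf{decomposition operation} $\phi_{\text{d}}$ is the formal inverse: $\mathcal{H}=\bigoplus_{i\in I}\mathcal{H}_i$ with each $\mathcal{H}_i=(V,E_i)$ and the $E_i$ partitioning $E$ (up to the direct-sum bookkeeping), so a tuple is $e$-adjacent in some $\mathcal{H}_i$ iff it lies in some $e\in E_i\subseteq E$ iff it is $e$-adjacent in $\mathcal{H}$; this is "preserving exactly $e$-adjacency" in the $\nu:\mathcal{H}\mapsto(\mathcal{H}_i)$ sense.

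For the \textbf{$y$-complemented operation} $\phi_{\text{y-c}}$ and the \textbf{$y^{\alpha}$-vertex-increasing operation} $\phi_{y^{\alpha}\text{-v}}$, the target hb-graph has vertex set $\tilde V=V\cup\{y\}$ with $y\notin V$, and each hb-edge $e$ is replaced by $\xi(e)$ (resp. $\phi(e)$) whose multiplicity function agrees with $m_e$ on $e^{\star}$ and assigns $y$ a nonzero value. So for any tuple of vertices of $\tilde V$ that is $e$-adjacent in the target — say $V_{k,m}\subseteq\xi(e)$ — its sub-tuple consisting of the vertices lying in $V$ is a submultiset of $\xi(e)$ restricted to $V$, which equals $e$; hence that maximal $V$-subtuple is $e$-adjacent in $\mathcal{H}$. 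That is exactly the statement that the operation preserves $e$-adjacency. It does \emph{not} preserve it exactly, because $y$ may be simultaneously $e$-adjacent to vertices from different hb-edges that share no common hb-edge in $\mathcal{H}$ — so I would only claim the weaker property, and if space permits I would point out with a one-line example (e.g.\ the hb-graph $e_1=\{v_1\},e_2=\{v_2\}$ complemented by $y$, after which $v_1,y,v_2$ are not jointly $e$-adjacent but $y$ is $e$-adjacent to each of $v_1$ and $v_2$) why "exactly" fails.

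The main obstacle is not any single computation but getting the \emph{quantifier structure} in the three-way definition of "preserving $e$-adjacency" right — in particular the phrase "$e$-adjacent vertices in exactly one of the $\mathcal{H}_i$" for the merging case, where one must be careful that the juxtaposition of the $E_i$ does not accidentally make a tuple adjacent "in two copies," and the phrase "the maximal subset of these vertices that are in $V$" for the two vertex-adjoining operations, where one must verify that dropping $y$ from an adjacent tuple indeed lands inside the original hb-edge and not merely inside its $y$-augmented version. Once the correspondence $\xi(e)|_V=e$ (and $\psi(e)|_{e^\star}=e$, $\phi(e)|_V=e$) is stated cleanly, every case reduces to a one-line containment argument, so the write-up is short; I would present it as a per-operation enumeration rather than a single unified argument.
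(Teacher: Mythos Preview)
Your proposal is correct and proceeds exactly as the paper intends: the paper's own proof is the single word ``Immediate,'' so your per-operation unwinding of the definitions is simply a fully spelled-out version of the same definition-chasing argument. Everything you check (that weighting and dilatation leave $V,E$ untouched, that merging/decomposition just juxtapose or partition the hb-edge families, and that the $y$-complemented and $y^{\alpha}$-vertex-increasing operations satisfy $\xi(e)|_V=e$ so only the weaker preservation holds) is precisely what ``Immediate'' is gesturing at.
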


\begin{proof}

Immediate.

\end{proof}

\begin{claim}

The composition of two operations which preserve (respectively exactly)
$e$-adjacency preserves (respectively exactly) $e$-adjacency.

The composition of two operations where one preserves exactly $e$-adjacency
and the other preverves $e$-adjacency preserves $e$-adjacency.

\end{claim}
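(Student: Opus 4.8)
The plan is to reduce the statement to bookkeeping about how an $e$-adjacent family of vertices in the \emph{target} of an operation is related to an $e$-adjacent family in its \emph{source}. The unifying observation is this: if an operation preserves $e$-adjacency and $W$ is a family of vertices that is $e$-adjacent in the target, then the sub-multiset of $W$ consisting of the vertices lying in the source vertex set --- which coincides with $W$ itself when $W$ is already contained in that vertex set, since the hb-edges of the source have it as their universe --- is $e$-adjacent in the source (in one of its component hb-graphs, in the family cases). So ``preserving $e$-adjacency'' is exactly the assertion that restriction of an $e$-adjacent family to the source vertex set keeps it $e$-adjacent, and ``preserving exactly $e$-adjacency'' is the stronger assertion that $e$-adjacency of a family is literally equivalent on the two sides, the vertex set being left unchanged.

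First I would dispatch the ``exactly'' case. If $f$ and $g$ both preserve $e$-adjacency exactly and their composition runs $A\mapsto B\mapsto C$ (each of $A,B,C$ being a single hb-graph or a family of hb-graphs), then a family of vertices is $e$-adjacent in $C$ if and only if it is $e$-adjacent in $B$ (by $g$) if and only if it is $e$-adjacent in $A$ (by $f$); chaining the two equivalences gives the result. Here I use that none of the exactly-preserving operations --- canonical weighting, $c$-dilatation, merging, decomposition --- alters the vertex set (or the union of vertex sets, in the family cases), so no vertex is ever dropped. For the general ``preserving'' case, let $f:A\mapsto B$ and $g:B\mapsto C$ both preserve $e$-adjacency, write $V_{A}$ and $V_{B}$ for the source and middle vertex sets, and note $V_{A}\subseteq V_{B}$ because along these operations the vertex set only grows ($y$-complementation and $y^{\alpha}$-vertex-increasing adjoin one vertex, the others leave it fixed). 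Given a family $W$ that is $e$-adjacent in $C$, applying $g$ gives that $W\cap V_{B}$ is $e$-adjacent in $B$, and then applying $f$ gives that $(W\cap V_{B})\cap V_{A}$ is $e$-adjacent in $A$; since $V_{A}\subseteq V_{B}$ this is just $W\cap V_{A}$, the maximal sub-family of $W$ lying in $V_{A}$ (and it equals $W$ itself when $W\subseteq V_{A}$), which is precisely what it means for $g\circ f$ to preserve $e$-adjacency. The mixed case is the same computation with one of the two intersections trivial: an exactly-preserving step changes neither the vertex set nor $e$-adjacency of a family, so composing it with a preserving step leaves a preserving composite.

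The point needing care is the ``maximal subset that is in $V$'' clause: one must check that restricting an $e$-adjacent family first to $V_{B}$ and then to $V_{A}$ collapses to a single restriction to $V_{A}$, which is exactly where the nesting $V_{A}\subseteq V_{B}$ of vertex sets along the chain is used, together with the fact that an $e$-adjacent family of a hb-graph is automatically supported on that hb-graph's vertex set. The various shapes the composition can take --- single $\mapsto$ single $\mapsto$ single, family $\mapsto$ single $\mapsto$ single, single $\mapsto$ family $\mapsto$ single, and so on --- are all handled by this one restriction-to-sub-vertex-set argument, reading ``$e$-adjacent in the source'' as ``$e$-adjacent in one of the source components'' whenever the source is a family.
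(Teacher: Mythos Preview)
Your proposal is correct and, in fact, considerably more detailed than the paper's own proof, which is simply ``Immediate.'' You have spelled out the crux that the paper leaves implicit: the nesting $V_{A}\subseteq V_{B}$ of vertex sets along the chain of operations is exactly what makes the two successive restrictions $W\cap V_{B}$ and $(W\cap V_{B})\cap V_{A}$ collapse into the single restriction $W\cap V_{A}$, and hence what makes the composite satisfy the ``maximal subset in $V$'' clause of the definition. Your handling of the exactly-preserving case (chaining equivalences, vertex set unchanged) and the mixed case (one trivial restriction) is likewise sound.
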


\begin{proof}

Immediate.

\end{proof}

\subsubsection{Processes involved for building the $e$-adjacency tensor}

In a general natural hb-graph $\mathcal{H}$, hb-edges are not forced
to have same m-cardinality: the rank of the $\overline{k}$-adjacency
tensor of the elementary hb-graph associated to each hb-edge depends
on the m-cardinality of the hb-edge. As a consequence the hb-graph
polynomial is no more homogeneous. Nonetheless techniques to homogenize
such a polynomial are well known.

The hb-graph m-uniformisation process (Hm-UP) tranform a given hb-graph
of m-range $r_{\mathcal{H}}$ into a $r_{\mathcal{H}}$-m-uniform
hb-graph written $\mathcal{\overline{H}}$: this uniformisation can
be mapped to the homogenization of the attached polynomial of the
original hb-graph, called the polynomial homogenization process (PHP). 

The Hm-UP can be achieved by different means: 
\begin{itemize}
\item \textbf{straightforward m-uniformisation} levels directly all hb-edges
by adding a Null vertex $y_{0}$ with multiplicity the difference
between the hb-graph m-rank and the hb-edge m-cardinality. It is achieved
by considering the $y_{0}$-complemented hb-graph of $\mathcal{H}.$
\item \textbf{silo m-uniformisation} processes the hb-edges of given m-cardinality
$r$, regrouped in the dilatated hbgraph $\mathcal{H}_{c_{r},r}$
- obtained by canonical weighting and dilatation of the $r$-m-uniform
sub-hb-graph $\mathcal{H}_{r}$ of $\mathcal{H}$ containing all its
hb-edges of m-cardinality $r$ - by adding a $r$-dependent null vertex
$y_{r}$ in multiplicity $r_{\mathcal{H}}-r$ so it levels the hb-edges
to a constant m-cardinality of $r_{\mathcal{H}}$; it uses the $y_{r}^{r_{\mathcal{H}}-r}$-vertex-increasing
operation on each $\mathcal{H}_{c_{r},r}$. A merging operation is
then used to gather all the $y_{r}^{r_{\mathcal{H}}-r}$-vertex-increased
hb-graph into a single $r_{\mathcal{H}}$-m-uniform.
\item \textbf{layered m-uniformisation} processes m-uniform hb-subgraphs
of increasing m-cardinality by successively adding a vertex and merging
it to the hb-subgraph of the above layer. The layered homogenisation
process applied to hypergraphs was explained with full details in
\citet{ouvrard2017cooccurrence}; it involves a two-phase step iterations
based on successive $\left\{ y_{k}^{1}\right\} $-vertex-increased
hb-graphs and merging with the dilatated weighted hb-graph of the
next layer.
\end{itemize}

\subsubsection{On the choice of the technical coefficient $c_{e_{i}}$}

\label{subsec:Choice_technical}

The technical coefficient $c_{e_{i}}$ has to be chosen such that
by using the elements of the $e$-adjacency hypermatrix $A=\left(a_{i_{1}\ldots i_{r}}\right)_{i_{1},\ldots,i_{r}\in\left\llbracket n\right\rrbracket }$
, the hypermatrix allows to retrieve:

1. the m-degree of the vertices: $\sum\limits _{i_{2},\ldots,i_{r}\in\left\llbracket n\right\rrbracket }a_{ii_{2}\ldots i_{r}}=\deg_{m}\left(v_{i}\right)$.

2. the number of hb-edges $\left|E\right|$ .

The approach is similar to the one used in Part 1: we consider a hb-graph
$\mathcal{H}$ that we decompose in a family of $r$-m-uniform hb-graphs
$\left(\mathcal{H}_{r}\right)_{r\in\left\llbracket r_{\mathcal{H}}\right\rrbracket }$.

To achieve it, let $\mathcal{R}$ be the equivalency relation defined
on $E$ the family of hb-edges of $\mathcal{H}$: $e\mathcal{R}e'\Leftrightarrow\#_{m}e=\#_{m}e'$.

$E/\mathcal{R}$ is the set of classes of hb-edges of same m-cardinality.
The elements of $E/\mathcal{R}$ are the sets: $E_{r}=\left\{ e\in E:\,\#_{m}e=r\right\} $.

Considering $R=\left\{ r:E_{r}\in E/\mathcal{R}\right\} $, it is
set $E_{r}=\emptyset$ for all $r\in\left\llbracket r_{\mathcal{H}}\right\rrbracket \backslash R$.

Let consider the hb-graphs: $\mathcal{H}_{k}=\left(V,E_{r}\right)$
for all $r\in\left\llbracket r_{\mathcal{H}}\right\rrbracket $ which
are all $r$-m-uniform.

It holds: $E=\bigcup\limits _{r\in\left\llbracket r_{\mathcal{H}}\right\rrbracket }E_{r}$
and $E_{r_{1}}\cap E_{r_{2}}=\emptyset$ for all $r_{1}\neq r_{2}$,
hence $\left(E_{r}\right)_{r\in\left\llbracket r_{\mathcal{H}}\right\rrbracket }$
constitutes a partition of $E$ which is unique from the way it has
been defined.

Hence: 
\[
\mathcal{H}=\bigoplus\limits _{r\in\left\llbracket r_{\mathcal{H}}\right\rrbracket }\mathcal{H}_{r}.
\]

Each of these $r$-m-uniform hb-graph $\mathcal{H}_{r}$ can be associated
to a $\overline{k}$-adjacency tensor $\mathcal{A}_{r}$ viewed as
a hypermatrix $A_{\mathcal{H}_{r}}=\left(a_{(r)i_{1}\ldots i_{r}}\right)$
of order $r$, hypercubic and, symmetric of dimension $\left|V\right|=n$. 

We write $\left(a_{i_{1}\ldots i_{r_{\mathcal{H}}}}\right)_{i_{1},\ldots,i_{r_{\mathcal{H}}}\in\left\llbracket n_{1}\right\rrbracket }$
the $e$-adjacency hypermatrix associated to $\mathcal{H}$: it is
going to be built hereafter and $n_{1}$ depends on the way the hypermatrix
is built.

The number of hb-edges in $\mathcal{H}_{r}$ is given by summing the
elements of $\mathcal{A}_{r}$:

\begin{eqnarray*}
\sum\limits _{i_{1},\ldots,i_{r}\in\left\llbracket n\right\rrbracket }a_{(r)i_{1}\ldots i_{r}} & = & \sum\limits _{i=1}^{n}\sum\limits _{i_{2},\ldots,i_{r}\in\left\llbracket n\right\rrbracket }a_{(r)ii_{2}\ldots i_{r}}\\
 & = & \sum\limits _{i=1}^{n}\text{deg}_{m}\left(v_{i}\right)\\
 & = & r\left|E_{r}\right|
\end{eqnarray*}

In the whole hb-graph $\mathcal{H}$ can also be calculated as:

\begin{eqnarray*}
\sum\limits _{i_{1},\ldots,i_{r_{\mathcal{H}}}\in\left\llbracket n_{1}\right\rrbracket }a_{i_{1}\ldots i_{r_{\mathcal{H}}}} & = & r_{\mathcal{H}}\left|E\right|.
\end{eqnarray*}

As 
\begin{eqnarray*}
\left|E\right| & = & \sum\limits _{r=1}^{n}\left|E_{r}\right|\\
 & = & \sum\limits _{r=1}^{n}\dfrac{1}{r}\sum\limits _{i=1}^{n}\text{deg}_{m}\left(v_{i}\right)\\
 & = & \sum\limits _{r=1}^{n}\dfrac{1}{r}\sum\limits _{i_{1},\ldots,i_{r}\in\left\llbracket n\right\rrbracket }a_{(r)i_{1}\ldots i_{r}}
\end{eqnarray*}

Hence, it follows:

\begin{eqnarray*}
\sum\limits _{i_{1},\ldots,i_{r_{\mathcal{H}}}\in\left\llbracket n_{1}\right\rrbracket }a_{i_{1}\ldots i_{r_{\mathcal{H}}}} & = & \sum\limits _{r=1}^{n}\dfrac{r_{\mathcal{H}}}{r}\sum\limits _{i_{1},\ldots,i_{r}\in\left\llbracket n\right\rrbracket }a_{(r)i_{1}\ldots i_{r}}.
\end{eqnarray*}

Also, choosing for all $i\in\left\llbracket p\right\rrbracket $ such
that $\#_{m}e_{i}=r$, $c_{e_{i}}=\dfrac{r_{\mathcal{H}}}{r}$.

We write for all $r\in\left\llbracket r_{\mathcal{H}}\right\rrbracket $:
$c_{r}=\dfrac{r_{\mathcal{H}}}{r}$. It is the technical coefficient
for the corresponding layer of level $r$ of the hb-graph $\mathcal{H}$.

Hence, the HUP is initiated by applying the canonical weighting to
each m-uniform hb-graph $\mathcal{H}_{r}$ that transforms it into
$\mathcal{H}_{r,1}$. Then the $c_{r}$-dilatation operation is applied
to each weighted m-uniform hb-graph $\mathcal{H}_{r,1}$ to obtain
its $c_{r}$-dilatated hb-graph $\mathcal{H}_{r,c_{r}}$.

\subsubsection{Straightforward approach}

\paragraph*{Straightforward m-uniformisation:}

\begin{figure}
\begin{center}
\begin{tikzpicture}[->,>=stealth',scale=1, every node/.append style={transform shape}]
\node[state=green, minimum width=1cm, minimum height=1cm] (H) {$\mathcal{H}$};
\node[
		minimum width=0.1cm, 
		minimum height=1cm,
		right of=H,
		node distance=1.5cm] (bigBrackLeft) {$\Bigg(\setstackgap{S}{2pt}\raisebox{-1.2em}{\Shortstack{. . . . . . . . . .}}$};
\node[state=green, 
		minimum width=1cm, 
		minimum height=1cm,
		right of=bigBrackLeft,
		node distance=0.7cm] (Hr) {$\mathcal{H}_r$};
\node[right of=H,
		yshift=0.25cm,
		node distance=1cm] (phid) {$\phi_d$};
\node[state=green, 
		minimum width=1cm, 
		minimum height=1cm,
		right of=Hr,
		node distance=2cm] (Hr1) {$\mathcal{H}_{r,1}$};
\node[right of=Hr,
		yshift=0.25cm,
		node distance=1cm] (phicw) {$\phi_\textrm{cw}$};
\node[right of=Hr1,
		yshift=0.25cm,
		node distance=1cm] (phicd) {$\phi_\textrm{c-d}$};
\node[state=green, 
		minimum width=1cm, 
		minimum height=1cm,
		right of=Hr1,
		node distance=2cm] (Hrcr) {$\mathcal{H}_{r,c_r}$};
\node[state=orange!50!blue, 
		minimum width=4.6cm, 
		minimum height=2cm,
		right of=Hrcr,
		node distance=2.9cm] (spec) {};
\node[right of=Hrcr,
		yshift=0.75cm,
		node distance=1.75cm] (spec_text) {specific};
\node[
		minimum width=0.1cm, 
		minimum height=1cm,
		right of=Hrcr,
		node distance=1.2cm] (bigBrackRight) {$\setstackgap{S}{2pt}\raisebox{-1.2em}{\Shortstack{. . . . . . . . . .}}\Bigg)_{r \in r_\mathcal{H}}$};
\node[right of=bigBrackRight,
		yshift=0.25cm,
		node distance=0.25cm] (phim) {$\phi_\textrm{m}$};
\node[state=green, 
		minimum width=1cm, 
		minimum height=1cm,
		right of=bigBrackRight,
		node distance=1.25cm] (Hwd) {$\mathcal{H}_{w,d}$};

\node[right of=Hwd,
		yshift=0.25cm,
		node distance=1cm] (phiyc) {$\phi_{y_1\textrm{-c}}$};
\node[state=green, 
		minimum width=1cm, 
		minimum height=1cm,
		right of=Hwd,
		node distance=2cm] (Htwd) {$\tilde{\mathcal{H}}_\textrm{w,d}$};

\path (H) edge ([xshift=-0.1cm]bigBrackLeft.center)
(Hr) edge (Hr1)
(Hr1) edge (Hrcr)
([xshift=-0.2cm]bigBrackRight.center) edge (Hwd)
(Hwd) edge (Htwd);

\end{tikzpicture}
\end{center}

\caption{Operations on the original hb-graph to m-uniformize it in the straightforward
approach. Parenthesis with vertical dots indicate parallel operations.}
\label{Fig: Operations uniformisation straightforward}
\end{figure}
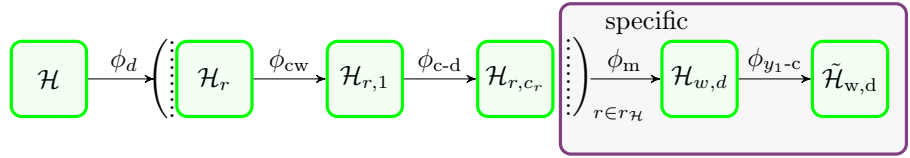

We first decompose $\mathcal{H}=\bigoplus\limits _{r\in\left\llbracket r_{\mathcal{H}}\right\rrbracket }\mathcal{H}_{r}$
as viewed in sub-section \ref{subsec:Choice_technical}.

We transform each $\mathcal{H}_{r},r\in\left\llbracket r_{\mathcal{H}}\right\rrbracket $
into a canonical weighted hb-graph $\mathcal{H}_{r,1}$ that we dilate
using the dilatation coefficient obtaining the $c_{r}$-dilatated
hb-graph $\mathcal{H}_{r,c_{r}}.$

This family $\left(\mathcal{H}_{r,c_{r}}\right)$ is then merged into
the hb-graph: $\mathcal{H}_{w,d}=\bigoplus\limits _{r\in\left\llbracket r_{\mathcal{H}}\right\rrbracket }\mathcal{H}_{r,c_{r}}$.
To get a m-uniform hb-graph at last we generate a vertex $y_{1}\notin V$
and apply to $\mathcal{H}_{w,d}$ the $y_{1}$-complemented operation
to obtain $\tilde{\mathcal{H}}_{w,d}$ the $y_{1}$-complemented hb-graph
of $\mathcal{H}_{w,d}$.

The different steps are summarized in Figure \ref{Fig: Operations uniformisation straightforward}.

\begin{claim}The transformation $\phi_{s}:\mathcal{H}\mapsto\tilde{\mathcal{H}}_{w,d}$
preserves the $e$-adjacency.

\end{claim}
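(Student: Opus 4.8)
The plan is to exhibit $\phi_{s}$ as a composition of the elementary operations introduced above and then to conclude by the composition claim, so that nothing is needed beyond the ``Immediate'' facts already recorded. Concretely,
\[
\phi_{s}\;=\;\phi_{y_{1}\text{-c}}\circ\phi_{\mathrm{m}}\circ\bigl(\text{componentwise }\phi_{\mathrm{c\text{-}d}}\circ\phi_{\mathrm{cw}}\bigr)\circ\phi_{\mathrm{d}},
\]
where $\phi_{\mathrm{d}}$ is the decomposition of $\mathcal{H}$ into the unique family $\left(\mathcal{H}_{r}\right)_{r\in\left\llbracket r_{\mathcal{H}}\right\rrbracket}$ of $r$-m-uniform hb-graphs of Section~\ref{subsec:Choice_technical}, the middle map applies $\phi_{\mathrm{cw}}$ and then the $c_{r}$-dilatation to each $\mathcal{H}_{r}$ in parallel, $\phi_{\mathrm{m}}$ merges the resulting family into $\mathcal{H}_{w,d}$, and $\phi_{y_{1}\text{-c}}$ is the $y_{1}$-complemented operation yielding $\tilde{\mathcal{H}}_{w,d}$.

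First I would handle the componentwise step. Applying $\phi_{\mathrm{cw}}$ and then $\phi_{\mathrm{c\text{-}d}}$ to a hb-graph changes neither its vertex set nor any multiplicity function of its hb-edges --- it only attaches a strictly positive weight to each hb-edge --- so the collection of vertex multisets sitting inside some hb-edge is untouched. Since, for a family of hb-graphs, $e$-adjacency means $e$-adjacency in exactly one member, the parallel application of $\phi_{\mathrm{c\text{-}d}}\circ\phi_{\mathrm{cw}}$ preserves exactly $e$-adjacency; this is the claim that $\phi_{\mathrm{cw}}$ and $\phi_{\mathrm{c\text{-}d}}$ preserve exactly $e$-adjacency, read through the family-level definitions. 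By the same claim, $\phi_{\mathrm{d}}$ and $\phi_{\mathrm{m}}$ preserve exactly $e$-adjacency: decomposition only redistributes the hb-edges of $\mathcal{H}$ among the $\mathcal{H}_{r}$ and merging re-gathers them (extending each hb-edge's universe by zero multiplicities), so in both directions the same vertex multisets are $e$-adjacent. Hence, by the composition claim for operations preserving exactly $e$-adjacency, the map $\phi_{\mathrm{m}}\circ\bigl(\text{componentwise }\phi_{\mathrm{c\text{-}d}}\circ\phi_{\mathrm{cw}}\bigr)\circ\phi_{\mathrm{d}}$ preserves exactly $e$-adjacency.

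It remains to compose with $\phi_{y_{1}\text{-c}}$, which preserves $e$-adjacency (but not exactly) by the earlier claim: it adjoins the single fresh vertex $y_{1}$ to the hb-edges of m-cardinality below $r_{\mathcal{H}}$, so every multiset of vertices $e$-adjacent in $\tilde{\mathcal{H}}_{w,d}$ is, after deleting the occurrences of $y_{1}$, exactly the restriction to $V$ of a multiset $e$-adjacent in $\mathcal{H}_{w,d}$ --- which is the defining property of ``preserving $e$-adjacency''. Composing a map that preserves exactly $e$-adjacency with one that preserves $e$-adjacency yields a map that preserves $e$-adjacency, so $\phi_{s}$ preserves $e$-adjacency.

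The only genuine care needed --- the step I would watch most closely --- is the bookkeeping at the level of families: the intermediate objects of $\phi_{s}$ are families of hb-graphs, so one must use the family-valued and family-domain variants of the definition of ``(exactly) preserving $e$-adjacency'', verify that componentwise application of an operation inherits the preservation property, and check that each invocation of the composition claim is applied to maps of the matching types $\mathcal{H}\mapsto\left(\mathcal{H}_{i}\right)$, $\left(\mathcal{H}_{i}\right)\mapsto\left(\mathcal{H}_{i}'\right)$, $\left(\mathcal{H}_{i}\right)\mapsto\mathcal{H}'$ and $\mathcal{H}\mapsto\mathcal{H}'$ --- a mild extension of the stated composition claim that is nonetheless entirely routine. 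Once the types are aligned there is nothing left to compute. $\hfill\square$
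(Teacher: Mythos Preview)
Your argument is correct and follows essentially the same approach as the paper: you exhibit $\phi_{s}$ as the composition $\phi_{y_{1}\text{-c}}\circ\phi_{\mathrm{m}}\circ(\text{componentwise }\phi_{\mathrm{c\text{-}d}}\circ\phi_{\mathrm{cw}})\circ\phi_{\mathrm{d}}$ and invoke the composition claim on operations that each preserve (exactly) $e$-adjacency. Your version is in fact more careful than the paper's two-line proof, since you explicitly address the family-level typing of the intermediate maps, which the paper leaves implicit.
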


\begin{proof}
$\phi_{\text{s}}=\phi_{y_{1}\text{-c}}\circ\phi_{\text{m}}\circ\left(
\setstackgap{S}{2pt}\raisebox{-1.2em}{\Shortstack{. . . . . . . . . .}}\phi_{\text{c-d}}\circ\phi_{\text{cw}}\raisebox{-1.2em}{\Shortstack{. . . . . . . . . .}}\right)\circ\phi_{\text{d}}.$\footnote{$\left(
\setstackgap{S}{2pt}\raisebox{-1.2em}{\Shortstack{. . . . . . . . . .}}\hspace{2em}\raisebox{-1.2em}{\Shortstack{. . . . . . . . . .}}\right)_{\textrm{... }\in\textrm{ ...}}$ indicates parallel operations on each member of the family indicated
in index of the right parenthesis.}

The operations involved either preserve $e$-adjacency or preserve
exactly $e$-adjacency, also by composition $\phi_{\text{s}}$ preserve
$e$-adjacency.

\end{proof}

\paragraph*{Straightforward homogenization:}

To homogenize the hb-graph, we add an additional vertex $N$ into
the universe of the hb-graph, ie the vertex set, corresponding to
one additional variable $y_{1}$.

The $\overline{k}$-adjacency hypermatrix of the hb-edge $e_{i}=\left\{ v_{j_{1}}^{m_{ij_{1}}},\ldots,v_{j_{k_{i}}}^{m_{ij_{k_{i}}}}\right\} $
is $Q_{e_{i}}$ of rank $\rho_{i}=\#_{m}e_{i}$ and dimension $n$.
The corresponding reduced polynomial is $P_{e_{i}}\left(\boldsymbol{z_{0}}\right)=\rho_{i}z_{j_{1}}^{m_{ij_{1}}}\ldots z_{j_{k_{i}}}^{m_{ij_{k_{i}}}}.$

To get it of degree $r_{\mathcal{H}}$ we add an additional variable
$y_{1}$ with multiplicity $m_{i\,n+1}=r_{\mathcal{H}}-\rho_{i}.$

The term $P_{e_{i}}\left(\boldsymbol{z_{0}}\right)$ with attached
tensor $\mathcal{P}_{e_{i}}$ of rank $\rho_{i}$ and dimension $n$
is transformed in: 
\[
R_{e_{i}}\left(\boldsymbol{z_{1}}\right)=P_{e_{i}}\left(\boldsymbol{z_{0}}\right)y_{1}^{m_{i\,n+1}}=r_{i}z_{j_{1}}^{m_{ij_{1}}}\ldots z_{j_{k_{i}}}^{m_{ij_{k_{i}}}}y_{1}^{m_{i\,n+1}}
\]
 with attached tensor $\mathcal{R}_{e_{i}}$ of rank $r_{\mathcal{H}}$
and dimension $n+1.$

The only non-zero elements of $\mathcal{R}_{e_{i}}$ are: 
\[
r_{j_{1}^{m_{ij_{1}}}\ldots j_{k_{i}}^{m_{ij_{k_{i}}}}\left(n+1\right)^{m_{i\,n+1}}}=\rho_{i}\dfrac{m_{ij_{1}}!\ldots m_{ij_{k_{i}}}!m_{i\,n+1}!}{r_{\mathcal{H}}!}
\]
 and all the elements of $\mathcal{R}_{e_{i}}$ obtained by permutation
of the indices and with same value. The number of permutation is:
\[
\dfrac{r_{\mathcal{H}}!}{m_{ij_{1}}!\ldots m_{ij_{k_{i}}}!m_{i\,n+1}!}.
\]

The hb-graph polynomial $P\left(\boldsymbol{z_{0}}\right)=\sum\limits _{i\in\left\llbracket p\right\rrbracket }c_{i}P_{e_{i}}\left(\boldsymbol{z_{0}}\right)$
is transformed into a homogeneous polynomial: 
\[
R\left(\boldsymbol{z_{1}}\right)=\sum\limits _{i\in\left\llbracket p\right\rrbracket }c_{i}R_{e_{i}}\left(\boldsymbol{z_{1}}\right)=\sum\limits _{i\in\left\llbracket p\right\rrbracket }c_{i}z_{j_{1}}^{m_{ij_{1}}}\ldots z_{j_{k_{i}}}^{m_{ij_{k_{i}}}}y_{1}^{m_{i\,n+1}}
\]
 representing the homogenized hb-graph $\mathcal{\overline{H}}$ with
attached tensor $\mathcal{R}=\sum\limits _{i=1}^{p}c_{e_{i}}\mathcal{R}_{e_{i}}$
where $c_{e_{i}}=\dfrac{r_{\mathcal{H}}}{\#_{m}e_{i}}.$

\begin{defin}

The \textbf{straightforward }$e$-adj\textbf{acency hypermatrix of
a hb-graph} $\mathcal{H}=\left(V,E\right)$ is the hypermatrix $\boldsymbol{A_{\text{str},\mathcal{H}}}$
defined by:

\[
\boldsymbol{A_{\text{str},\mathcal{H}}}=\sum\limits _{i\in\left\llbracket p\right\rrbracket }c_{e_{i}}\boldsymbol{R_{e_{i}}}.
\]

where for $e_{i}=\left\{ v_{j_{1}}^{m_{ij_{1}}},\ldots,v_{j_{k_{i}}}^{m_{ij_{k_{i}}}}\right\} \in E$
the associated hypermatrix is: $\boldsymbol{R_{e_{i}}}=\left(r_{i_{1}\ldots i_{r_{\mathcal{H}}}}\right)$
, which only non-zero elements are: 
\[
r_{j_{1}^{m_{ij_{1}}}\ldots j_{k_{i}}^{m_{ij_{k_{i}}}}\left(n+1\right)^{m_{i\,n+1}}}=\dfrac{m_{ij_{1}}!\ldots m_{ij_{k_{i}}}!m_{i\,n+1}!}{r_{\mathcal{H}}!}
\]
- with $m_{i\,n+1}=r_{\mathcal{H}}-\#_{m}e_{i}$ - and the ones with
same value and obtained by permutation of the indices and where $c_{e_{i}}=\dfrac{r_{\mathcal{H}}}{\#_{m}e_{i}}$.

\end{defin}

\subsubsection{Silo approach}

\paragraph*{Silo m-uniformisation:}

\begin{figure}
\begin{center}
\begin{tikzpicture}[->,>=stealth',scale=0.95, every node/.append style={transform shape}]
\node[state=green, minimum width=1cm, minimum height=1cm] (H) {$\mathcal{H}$};
\node[
		minimum width=0.1cm, 
		minimum height=1cm,
		right of=H,
		node distance=1.5cm] (bigBrackLeft) {$\Bigg(\setstackgap{S}{2pt}\raisebox{-1.2em}{\Shortstack{. . . . . . . . . .}}$};
\node[state=green, 
		minimum width=1cm, 
		minimum height=1cm,
		right of=bigBrackLeft,
		node distance=0.7cm] (Hr) {$\mathcal{H}_r$};
\node[right of=H,
		yshift=0.25cm,
		node distance=1cm] (phid) {$\phi_d$};
\node[state=green, 
		minimum width=1cm, 
		minimum height=1cm,
		right of=Hr,
		node distance=2cm] (Hr1) {$\mathcal{H}_{r,1}$};
\node[right of=Hr,
		yshift=0.25cm,
		node distance=1cm] (phicw) {$\phi_\textrm{cw}$};
\node[right of=Hr1,
		yshift=0.25cm,
		node distance=1cm] (phicd) {$\phi_\textrm{c-d}$};
\node[state=green, 
		minimum width=1cm, 
		minimum height=1cm,
		right of=Hr1,
		node distance=2cm] (Hrcr) {$\mathcal{H}_{r,c_r}$};
\node[state=orange!50!blue, 
		minimum width=5.2cm, 
		minimum height=2cm,
		right of=Hrcr,
		node distance=3.25cm] (spec) {};
\node[right of=Hrcr,
		yshift=0.75cm,
		node distance=1.75cm] (spec_text) {specific};
\node[right of=Hrcr,
		yshift=0.25cm,
		node distance=1.4cm] (phiyc) {$\phi_{y_{r}^{r_{\mathcal{H}}-r}\textrm{-v}}$};
\node[state=green, 
		minimum width=1cm, 
		minimum height=1cm,
		right of=Hrcr,
		node distance=2.7cm] (Htwd) {$\mathcal{H}^+_{r,c_r}$};
\node[
		minimum width=0.1cm, 
		minimum height=1cm,
		right of=Htwd,
		node distance=1.1cm] (bigBrackRight) {$\setstackgap{S}{2pt}\raisebox{-1.2em}{\Shortstack{. . . . . . . . . .}}\Bigg)_{r \in r_\mathcal{H}}$};
\node[right of=bigBrackRight,
		yshift=0.25cm,
		node distance=0.2cm] (phim) {$\phi_\textrm{m}$};
\node[state=green, 
		minimum width=1cm, 
		minimum height=1cm,
		right of=bigBrackRight,
		node distance=1.4cm] (Hwd) {$\widehat{\mathcal{H}_{\widehat{w}}}$};

\path (H) edge ([xshift=-0.1cm]bigBrackLeft.center)
(Hr) edge (Hr1)
(Hr1) edge (Hrcr)
(Hrcr) edge (Htwd)
([xshift=-0.25cm]bigBrackRight.center) edge (Hwd);

\end{tikzpicture}
\end{center}

\caption{Operations on the original hb-graph to m-uniformize it in the silo
approach. Parenthesis with vertical dots indicate parallel operations.}
\label{Fig: Operations uniformisation silo}
\end{figure}
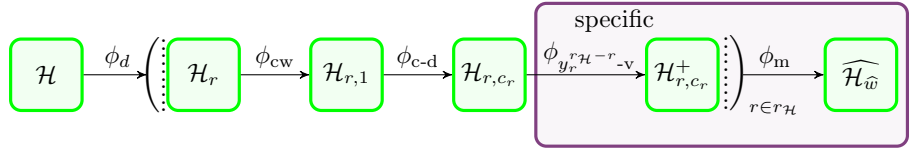

The first steps are similar to the straightforward approach.

The hb-graph $\mathcal{H}$ is decomposed in layers $\mathcal{H}=\bigoplus\limits _{r\in\left\llbracket r_{\mathcal{H}}\right\rrbracket }\mathcal{H}_{r}$
as described in sub-section \ref{subsec:Choice_technical}. Each $\mathcal{H}_{r},r\in\left\llbracket r_{\mathcal{H}}\right\rrbracket $
is canonically weighted and $c_{r}$-dilatated to obtain $\mathcal{H}_{r,c_{r}}$. 

We generate $r_{\mathcal{H}}-1$ new vertices $y_{i}\notin V$, $i\in\left\llbracket r_{\mathcal{H}}-1\right\rrbracket $.

We then apply to each $\mathcal{H}_{r,c_{r}},r\in\left\llbracket r_{\mathcal{H}}-1\right\rrbracket $
the $y_{r}^{r_{\mathcal{H}}-r}$-vertex-increasing operation to obtain
$\mathcal{H}_{r,c_{r}}^{+}$ the $y_{r}^{r_{\mathcal{H}}-r}$-complemented
hb-graph of each $\mathcal{H}_{r,c_{r}},r\in\left\llbracket r_{\mathcal{H}}-1\right\rrbracket $
. The family $\left(\mathcal{H}_{r,c_{r}}^{+}\right)_{r\in r_{\mathcal{H}}}$
is then merged using the merging operation to obtain the $r_{\mathcal{H}}$-m-uniform
hb-graph $\widehat{\mathcal{H}_{\widehat{w}}}$. 

The different steps are summarized in Figure \ref{Fig: Operations uniformisation silo}.

\begin{claim}The transformation $\phi_{s}:\mathcal{H}\mapsto\widehat{\mathcal{H}_{\widehat{w}}}$
preserves the $e$-adjacency.

\end{claim}

\begin{proof}
$\phi_{\text{s}}=\phi_{\text{m}}\circ
\left(\setstackgap{S}{2pt}\raisebox{-1.2em}{\Shortstack{. . . . . . . . . .}}\phi_{y_{r}^{r_{\mathcal{H}}-r}\text{-v}}\circ\phi_{\text{c-d}}\circ\phi_{\text{cw}}
\setstackgap{S}{2pt}\raisebox{-1.2em}{\Shortstack{. . . . . . . . . .}}\right)\circ\phi_{\text{d}}.$

The operations involved either preserve $e$-adjacency or preserve
exactly $e$-adjacency, also by composition $\phi_{\text{s}}$ preserve
$e$-adjacency.

\end{proof}

\paragraph*{Silo homogenization:}

In this homogenization process we suppose that the hb-edges are sorted
by m-cardinality.

We add $r_{\mathcal{H}}-1$ vertices $N_{1}$ to $N_{r_{\mathcal{H}}-1}$
into the universe, ie the vertex set, corresponding to $r_{\mathcal{H}}-1$
additional variables respectively $y_{1}$ to $y_{r_{\mathcal{H}}-1}$.

The term $P_{e_{i}}\left(\boldsymbol{z_{0}}\right)=z_{j_{1}}^{m_{ij_{1}}}\ldots z_{j_{k_{i}}}^{m_{ij_{k_{i}}}}$
of $P$ has degree the m-cardinality of the hb-edge $e_{i}$, ie $\#_{m}e_{i}$.
To get it of degree $r_{\mathcal{H}}$, we use the additional variable
$y_{\#_{m}e_{i}}$ with multiplicity $m_{i\,\#_{m}e_{i}}=r_{\mathcal{H}}-\#_{m}e_{i}$.

The term $P_{e_{i}}\left(\boldsymbol{z_{0}}\right)$ with attached
tensor $\mathcal{P}_{e_{i}}$ of rank $\#_{m}e_{i}$ and dimension
$n$ is transformed in $R_{e_{i}}\left(\boldsymbol{z_{\#_{m}e_{i}}}\right)=P_{e_{i}}\left(\boldsymbol{z_{0}}\right)y_{\#_{m}e_{i}}^{m_{i\,n+\#_{m}e_{i}}}$
with attached tensor $\mathcal{R}_{e_{i}}$ of rank $r_{\mathcal{H}}$
and dimension $n+1$

The only non-zero elements of $\mathcal{R}_{e_{i}}$ are: 
\[
r_{j_{1}^{m_{ij_{1}}}\ldots j_{k_{i}}^{m_{ij_{k_{i}}}}\left(n+\#_{m}e_{i}\right)^{m_{i\,n+\#_{m}e_{i}}}}=\dfrac{m_{ij_{1}}!\ldots m_{ij_{k_{i}}}!m_{i\,n+\#_{m}e_{i}}!}{r_{\mathcal{H}}!}
\]
 and the same value elements which indices are obtained by permutation
of this first element.

$P$ is transformed into a homogeneous polynomial 
\[
R\left(\boldsymbol{z_{r_{\mathcal{H}}-1}}\right)=\sum\limits _{i\in\left\llbracket p\right\rrbracket }c_{i}R_{e_{i}}\left(\boldsymbol{z_{\#_{m}e_{i}}}\right)=\sum\limits _{i\in\left\llbracket p\right\rrbracket }c_{i}z_{j_{1}}^{m_{ij_{1}}}\ldots z_{j_{k_{i}}}^{m_{ij_{k_{i}}}}y_{\#_{m}e_{i}}^{m_{i\,n+\#_{m}e_{i}}}
\]
 representing the homogenized hb-graph $\mathcal{\overline{H}}$ with
attached tensor $\mathcal{R}=\sum\limits _{i\in\left\llbracket p\right\rrbracket }c_{e_{i}}\mathcal{R}_{e_{i}}$
where: $c_{e_{i}}=\dfrac{r_{\mathcal{H}}}{\#_{m}e_{i}}$.

\begin{defin}

The silo $e$-adjacency hypermatrix of a hb-graph $\mathcal{H}=\left(V,E\right)$
is the hypermatrix $\boldsymbol{A_{\text{sil},\mathcal{H}}}=\left(a_{i_{1}\ldots i_{r_{\mathcal{H}}}}\right)_{i_{1},\ldots,i_{r_{\mathcal{H}}}\in\left\llbracket n\right\rrbracket }$
defined by:

\[
\boldsymbol{A_{\text{sil},\mathcal{H}}}=\sum\limits _{i\in\left\llbracket p\right\rrbracket }c_{e_{i}}\boldsymbol{R_{e_{i}}}
\]

and where for $e_{i}=\left\{ v_{j_{1}}^{m_{ij_{1}}},\ldots,v_{j_{k_{i}}}^{m_{ij_{k_{i}}}}\right\} \in E$
the associated hypermatrix is: $\boldsymbol{R_{e_{i}}}=\left(r_{i_{1}\ldots i_{r_{\mathcal{H}}}}\right)$
, which only non-zero elements are: 
\[
r_{j_{1}^{m_{ij_{1}}}\ldots j_{k_{i}}^{m_{ij_{k_{i}}}}\left(n+\#_{m}e_{i}\right)^{m_{i\,n+\#_{m}e_{i}}}}=\dfrac{m_{ij_{1}}!\ldots m_{ij_{k_{i}}}!m_{i\,n+\#_{m}e_{i}}!}{r_{\mathcal{H}}!}
\]
 and all elements of $\boldsymbol{R_{e_{i}}}$ obtained by permuting
\[
j_{1}^{m_{ij_{1}}}\ldots j_{k_{i}}^{m_{ij_{k_{i}}}}\left(n+\#_{m}e_{i}\right)^{m_{i\,n+\#_{m}e_{i}}},
\]
with: 
\[
m_{i\,n+\#_{m}e_{i}}=r_{\mathcal{H}}-\sum\limits _{l\in\left\llbracket k_{i}\right\rrbracket }m_{i\,j_{l}},
\]

and where: 
\[
c_{e_{i}}=\dfrac{r_{\mathcal{H}}}{\#_{m}e_{i}}.
\]

\end{defin}

\begin{rmk}

In this case, 
\[
\boldsymbol{A_{\text{sil},\mathcal{H}}}=\sum\limits _{r\in\left\llbracket r_{\mathcal{H}}\right\rrbracket }c_{r}\sum\limits _{e_{i}\in\left\{ e:\#_{m}e=r\right\} }\boldsymbol{R_{e_{i}}}
\]

where $c_{r}=\dfrac{r_{\mathcal{H}}}{r}.$

\end{rmk}

\subsubsection{Layered approach}

\paragraph*{Layered uniformisation:}

\begin{figure}
\begin{center}
\begin{tikzpicture}[->,>=stealth',scale=0.95, every node/.append style={transform shape}]
\node[state=green, minimum width=1cm, minimum height=1cm] (H) {$\mathcal{H}$};
\node[
		minimum width=0.1cm, 
		minimum height=1cm,
		right of=H,
		node distance=1.5cm] (bigBrackLeft) {$\Bigg(\setstackgap{S}{2pt}\raisebox{-1.2em}{\Shortstack{. . . . . . . . . .}}$};
\node[state=green, 
		minimum width=1cm, 
		minimum height=1cm,
		right of=bigBrackLeft,
		node distance=0.7cm] (Hr) {$\mathcal{H}_r$};
\node[right of=H,
		yshift=0.25cm,
		node distance=1cm] (phid) {$\phi_d$};
\node[state=green, 
		minimum width=1cm, 
		minimum height=1cm,
		right of=Hr,
		node distance=2cm] (Hr1) {$\mathcal{H}_{r,1}$};
\node[right of=Hr,
		yshift=0.25cm,
		node distance=1cm] (phicw) {$\phi_\textrm{cw}$};
\node[right of=Hr1,
		yshift=0.25cm,
		node distance=1cm] (phicd) {$\phi_\textrm{c-d}$};
\node[state=green, 
		minimum width=1cm, 
		minimum height=1cm,
		right of=Hr1,
		node distance=2cm] (Hrcr) {$\mathcal{H}_{r,c_r}$};
\node[state=orange!50!blue, 
		minimum width=12.5cm, 
		minimum height=5.5cm,
		right of=H,
		xshift=-0.25cm,
		yshift=-4.5cm,
		node distance=3.25cm] (spec) {};
\node[right of=spec,
		yshift=-2.5cm,
		node distance=5cm] (spec_text) {specific};
\node[
		minimum width=0.1cm, 
		minimum height=1cm,
		right of=Hrcr,
		node distance=1.1cm] (bigBrackRight) {$\setstackgap{S}{2pt}\raisebox{-1.2em}{\Shortstack{. . . . . . . . . .}}\Bigg)_{r \in r_\mathcal{H}}$};


\node[state=orange!50!white, 
		minimum width=4.8cm, 
		minimum height=3cm,
		right of=H,
		yshift=-4.25cm,
		xshift=1.8cm] (spec_iter) {};
\node[right of=spec_iter,
		yshift=-1.25cm,
		node distance=1cm] (spec_iter_text) {Iterative phase};
\node[state=orange!50!white, 
		minimum width=4.8cm, 
		minimum height=1.25cm,
		right of=H,
		yshift=-6.5cm,
		xshift=1.8cm] (spec_init) {};
\node[right of=spec_init,
		yshift=-0.4cm,
		xshift=-0.15cm,
		node distance=1cm] (spec_init_text) {Initialisation};
\node[state=green,
	right of=H,
	yshift=-3.5cm,
	minimum width=1cm, 
	minimum height=1cm] (Kk) {$\mathcal{K}_k$};

\node[state=green, 
		minimum width=1cm, 
		minimum height=1cm,
		right of=Kk,
		node distance=2.5cm] (Kk_plus) {$\mathcal{K}_k^+$};
\node[right of=Kk,
		yshift=0.25cm,
		node distance=1.25cm] (phiv) {$\phi_{y_k^1-v}$};
\node[state=green, 
		minimum width=1cm, 
		minimum height=1cm,
		right of=Kk,
		node distance=0.3cm,
		yshift=-1.5cm] (Hk_plus_un) {$\mathcal{H}_{k+1,c_{k+1}}$};

\node[diamond,
		fill=green!10,
		draw=green, very thick,
		right of=Kk_plus,
		node distance = 3cm,
		yshift = -1.5cm,
		minimum width=1cm, 
		minimum height=1cm] (condition) {$k<r_\mathcal{H}?$};		
\node[right of=Kk_plus,
		yshift=-1.25cm,
		node distance=1.35cm] (phim) {$\phi_{m}$};
\node[right of=condition,
		yshift=1cm,
		node distance=0.35cm] (yes) {$\textrm{yes}$};
\node[right of=condition,
		yshift=-0.25cm,
		node distance=1.25cm] (no) {$\textrm{no}$};

\node[right of=Kk,
		yshift=1.25cm,
		node distance=2.5cm] (iteration) {$k:=k+1$};
\node[state=green,
		right of=condition,
		node distance = 2cm,
		minimum width=1cm, 
		minimum height=1cm] (layered) {$\widehat{\mathcal{H}}$};
\node[diamond,
		fill=green!10,
		draw=green, very thick,
		right of=Kk,
		node distance = -2cm,
		minimum width=1cm, 
		minimum height=1cm] (cond_init) {$k>0?$};
\node[state=green,
		right of=cond_init,
		node distance = 2.4cm,
		yshift = -3cm,
		minimum width=1cm, 
		minimum height=1cm] (cond_init_k0) {$\mathcal{K}_0=\mathcal{H}_{1,c_1}$};
\node[right of=cond_init,
		node distance = -2.5cm] (start) {};
\node[right of=start,
		yshift=0.25cm,
		node distance=1.1cm] (yes_init) {$k:=0$};

\node[right of=cond_init,
		yshift=-1.75cm,
		node distance=0.25cm] (yes_init) {no};
\node[right of=cond_init,
		yshift=0.25cm,
		node distance=1cm] (no_init) {yes};

\path (H) edge ([xshift=-0.1cm]bigBrackLeft.center)
(Hr) edge (Hr1)
(Hr1) edge (Hrcr);
\draw[->]([xshift=-0.25cm]bigBrackRight.center) --++ (1,0) --++ (0,-1.25) --++ (-11,0) --++ (0,-2.25) --++ (cond_init);

\draw[->] (Kk) --++ (Kk_plus);
\draw[->,dashed] (cond_init) --++ (0,-3) --++ (cond_init_k0);
\draw[->,dashed] (cond_init) --++ (Kk);
\draw[->] (cond_init_k0) --++ (5.1,0) --++ (condition);
\draw[->,dashed] (condition) --++ (layered);
\draw[->] (Kk_plus) --++ (1,0) --++(0,-1.5) --++ (condition);
\draw[->] (Hk_plus_un) --++ (condition);
\draw[->,dashed] (condition) --++ (0,3) --++ (-7.5,0) --++ (cond_init);

\end{tikzpicture}
\end{center}

\caption{Operations on the original hb-graph to m-uniformize it in the layered
approach. Parenthesis with vertical dots indicate parallel operations.}
\label{Fig: Operations uniformisation silo-1}
\end{figure}

The first steps are similar to the straightforward approach.

The hb-graph $\mathcal{H}$ is decomposed in layers $\mathcal{H}=\bigoplus\limits _{r\in\left\llbracket r_{\mathcal{H}}\right\rrbracket }\mathcal{H}_{r}$
as described in sub-section \ref{subsec:Choice_technical}. Each $\mathcal{H}_{r},r\in\left\llbracket r_{\mathcal{H}}\right\rrbracket $
is canonically weighted and $c_{r}$-dilatated to obtain $\mathcal{H}_{r,c_{r}}$. 

We generate $r_{\mathcal{H}}-1$ new vertices $y_{i}\notin V$, $i\in\left\llbracket r_{\mathcal{H}}-1\right\rrbracket $
and write $V_{s}=\left\{ y_{i}:i\in\left\llbracket r_{\mathcal{H}}-1\right\rrbracket \right\} $

A two-phase steps iteration as it has been done with hypergraphs in
\citet{ouvrard2017cooccurrence} and \citet{ouvrard2018imaadjacency}
is considered: the inflation phase (IP) and the merging phase (MP).
At step $k=0$, $\mathcal{K}_{0}=\mathcal{H}_{1,c_{1}}$ and no further
action is made but increasing $k$ of 1 and going to the next step.
At step $k>0$, the input is the $k$-m-uniform weighted hb-graph
$\mathcal{K}_{k}$obtained from the previous iteration. In the IP,
$\mathcal{K}_{k}$ is transformed into $\mathcal{K}_{k}^{+}$ the
$y_{k}^{1}$-vertex-increased hb-graph, which is $\left(k+1\right)$-m-uniform. 

The MP is merging the hypergraphs $\mathcal{K}_{k}^{+}$ and $\mathcal{H}_{k+1,c_{k+1}}$
into a single $\left(k+1\right)$-m-uniform hb-graph $\widehat{\mathcal{K}_{\widehat{w}}}.$ 

We iterate while $k<r_{\mathcal{H}}$, increasing in between each
step $k$ of 1. When $k$ reaches $r_{\mathcal{H}}$, we stop iterating
and the last $\widehat{\mathcal{K}_{\widehat{w}}}$ obtained, written
$\widehat{\mathcal{H}_{\widehat{w}}}$ is called the $V_{S}$-layered
m-uniform hb-graph of $\mathcal{H}.$

The different steps are summarized in Figure \ref{Fig: Operations uniformisation silo-1}.

\begin{claim}The transformation $\phi_{s}:\mathcal{H}\mapsto\widehat{\mathcal{H}_{\widehat{w}}}$
preserves the $e$-adjacency.

\end{claim}

\begin{proof}
$\phi_{\text{s}}=\psi\circ
\left(\setstackgap{S}{2pt}\raisebox{-1.2em}{\Shortstack{. . . . . . . . . .}}\phi_{\text{c-d}}\circ\phi_{\text{cw}}
\setstackgap{S}{2pt}\raisebox{-1.2em}{\Shortstack{. . . . . . . . . .}}\right)\circ\phi_{\text{d}}$, where $\psi$ is called the iterative layered operation that converts
the family obtained by 
$\left(\setstackgap{S}{2pt}\raisebox{-1.2em}{\Shortstack{. . . . . . . . . .}}\phi_{\text{c-d}}\circ\phi_{\text{cw}}
\setstackgap{S}{2pt}\raisebox{-1.2em}{\Shortstack{. . . . . . . . . .}}\right)\circ\phi_{\text{d}}$ and transform it into the $V_{S}$-layered m-uniform hb-graph of
$\mathcal{H}$.

The operations involved in the operations $\phi_{\text{c-d}}$, $\phi_{\text{c-w}}$
and $\phi_{\textrm{d}}$ either preserve $e$-adjacency or preserve
exactly $e$-adjacency, and so forth by composition.

The iterative layered operation preserve e-adjacency as the operations
involved are preserving e-adjacency and that the family of hb-graphs
at the input has hb-edges family that are totally distinct. 

Also by composition $\phi_{\text{s}}$ preserve $e$-adjacency.

\end{proof}

\paragraph{Layered homogenization:}

This solution was first developed in \citet{ouvrard2017cooccurrence}
for general hypergraphs. The idea is to sort the hb-edges as in the
silo homogenization and considering as well $r_{\mathcal{H}}-1$ additional
vertices $L_{1}$ to $L_{r_{\mathcal{H}}-1}$ into the universe, corresponding
to $r_{\mathcal{H}}-1$ additional variables respectively $y_{1}$
to $y_{r_{\mathcal{H}}-1}$.

But these vertices are added successively to each hb-edge to fill
the hb-edges to a $r_{\mathcal{H}}$ value of the $m$-cardinality:
a hb-edge of initial cardinality $\#_{m}e$ will be filled with elements
$L_{\#_{m}e}$ to $L_{r_{\mathcal{H}}-1}$. It matches to add the
$k$-m-uniform hb-subgraph $\mathcal{H}_{k}$ with the $k+1$-m-uniform
hb-subgraph $\mathcal{H}_{k+1}$ by filling the hb-edge of $\mathcal{H}_{k}$
with the additional vertex $L_{k}$ to get a homogenised $k+1$-m-uniform
hb-subgraph of the homogenised hb-graph $\overline{\mathcal{H}}$.

A hb-edge of $m$-cardinality $\#_{m}e_{i}$ is represented by the
polynomial 
\[
P_{e_{i}}\left(\boldsymbol{z_{0}}\right)=z_{j_{1}}^{m_{ij_{1}}}\ldots z_{j_{k_{i}}}^{m_{ij_{k_{i}}}}
\]
 of degree $\#_{m}e_{i}$. 

All the hb-edges of same m-cardinality $m$ belongs to the same layer
of level $m$. To transform the hb-edge of m-cardinality $\#_{m}e_{i}+1$
we fill it with the element $L_{\#_{m}e_{i}}$. 

In this case, the polynomial $P_{e_{i}}\left(\boldsymbol{z_{0}}\right)$
is transformed into:
\[
R_{(1)e_{i}}\left(\boldsymbol{z_{\#_{m}e_{i}}}\right)=P_{e_{i}}\left(\boldsymbol{z_{0}}\right)y_{\#_{m}e_{i}}^{1}
\]
 of degree $\#_{m}e_{i}+1$. 

Iterating over the layers the polynomial: 
\[
P_{e_{i}}\left(\boldsymbol{z_{0}}\right)=z_{j_{1}}^{m_{ij_{1}}}\ldots z_{j_{k_{i}}}^{m_{ij_{k_{i}}}}
\]
 is transformed in: 
\[
R_{(r_{\mathcal{H}}-\#_{m}e_{i})e_{i}}\left(\boldsymbol{z_{r_{\mathcal{H}}-1}}\right)=P_{e_{i}}\left(\boldsymbol{z_{0}}\right)y_{\#_{m}e_{i}}^{1}\ldots y_{r_{\mathcal{H}}-1}^{1}
\]
 of degree $r_{\mathcal{H}}$.

The polynomial $P_{e_{i}}\left(\boldsymbol{z_{0}}\right)$ with attached
tensor $\mathcal{P}_{e_{i}}$ of rank $\#_{m}e_{i}$ and dimension
$n$ is transformed in: 
\[
R_{(r_{\mathcal{H}}-\#_{m}e_{i})e_{i}}\left(\boldsymbol{z_{r_{\mathcal{H}}-1}}\right)=R_{e_{i}}\left(\boldsymbol{z_{0}}\right)y_{\#_{m}e_{i}}^{1}\ldots y_{r_{\mathcal{H}}-1}^{1}
\]
 with attached tensor $\mathcal{R}_{\left(r_{\mathcal{H}}-\#_{m}e_{i}\right)e_{i}}$
of rank $r_{\mathcal{H}}$ and dimension $n+r_{\mathcal{H}}-1$.

The only non-zero elements of $\mathcal{R}_{(r_{\mathcal{H}}-\#_{m}e_{i})e_{i}}$
are: 
\[
r_{(r_{\mathcal{H}}-\#_{m}e_{i})\,j_{1}^{m_{ij_{1}}}\ldots j_{k_{i}}^{m_{ij_{k_{i}}}}\left[n+\#_{m}e_{i}\right]^{1}\ldots\left[n+r_{\mathcal{H}}-1\right]^{1}}=\dfrac{m_{ij_{1}}!\ldots m_{ij_{k_{i}}}!}{r_{\mathcal{H}}!}
\]
 and all the elements of $\mathcal{R}_{(r-\#_{m}e_{i})e_{i}}$ obtained
by permuting:
\[
j_{1}^{m_{ij_{1}}}\ldots j_{k_{i}}^{m_{ij_{k_{i}}}}\left[n+\#_{m}e_{i}\right]^{1}\ldots\left[n+r_{\mathcal{H}}-1\right]^{1}
\]

And $P$ is transformed in a homogeneous polynomial 
\begin{eqnarray*}
R\left(\boldsymbol{z_{r_{\mathcal{H}}-1}}\right) & = & \sum\limits _{i\in\left\llbracket p\right\rrbracket }c_{i}R_{(r_{\mathcal{H}}-\#_{m}e_{i})e_{i}}\left(\boldsymbol{z_{r_{\mathcal{H}}-1}}\right)\\
 & = & \sum\limits _{i\in\left\llbracket p\right\rrbracket }c_{i}z_{j_{1}}^{m_{ij_{1}}}\ldots z_{j_{k_{i}}}^{m_{ij_{k_{i}}}}y_{\#_{m}e_{i}}^{1}\ldots y_{r_{\mathcal{H}}-1}^{1}
\end{eqnarray*}
 representing the homogenized hb-graph $\mathcal{\overline{H}}$ with
attached tensor 
\[
\mathcal{R}=\sum\limits _{i\in\left\llbracket p\right\rrbracket }c_{e_{i}}\mathcal{R}_{(r_{\mathcal{H}}-\#_{m}e_{i})e_{i}},
\]

where: 
\[
c_{e_{i}}=\dfrac{r_{\mathcal{H}}}{\#_{m}e_{i}}.
\]

\begin{defin}

The layered $e$-adjacency tensor of a hb-graph $\mathcal{H}=\left(V,E\right)$
is the tensor 
\[
\mathcal{A}_{\text{lay}}\left(\mathcal{H}\right)=\left(a_{i_{1}\ldots i_{r_{\mathcal{H}}}}\right)_{1\leqslant i_{1},\ldots,i_{r_{\mathcal{H}}}\leqslant n}
\]
 defined by:

\[
\mathcal{A}_{\text{lay}}\left(\mathcal{H}\right)=\sum\limits _{i\in\left\llbracket p\right\rrbracket }c_{e_{i}}\mathcal{R}_{\left(r_{\mathcal{H}}-\#_{m}e_{i}\right)e_{i}}
\]

where for $e_{i}=\left\{ v_{j_{1}}^{m_{ij_{1}}},\ldots,v_{j_{k_{i}}}^{m_{ij_{k_{i}}}}\right\} \in E$
the associated tensor is: 
\[
\mathcal{R}_{\left(r_{\mathcal{H}}-\#_{m}e_{i}\right)e_{i}}=\left(r_{\left(r_{\mathcal{H}}-\#_{m}e_{i}\right)i_{1}\ldots i_{r_{\mathcal{H}}}}\right),
\]
which only non-zero elements are: 
\[
r_{\left(r_{\mathcal{H}}-\#_{m}e_{i}\right)j_{1}^{m_{ij_{1}}}\ldots j_{k_{i}}^{m_{ij_{k_{i}}}}\left[n+\#_{m}e_{i}\right]^{1}\ldots\left[n+r_{\mathcal{H}}-1\right]^{1}}=\dfrac{m_{ij_{1}}!\ldots m_{ij_{k_{i}}}!}{r_{\mathcal{H}}!}
\]
 and all elements of $\mathcal{R}_{e_{i}}$with same value obtained
by permuting:
\[
j_{1}^{m_{ij_{1}}}\ldots j_{k_{i}}^{m_{ij_{k_{i}}}}\left[n+\#_{m}e_{i}\right]^{1}\ldots\left[n+r_{\mathcal{H}}-1\right]^{1},
\]

and where: $c_{e_{i}}=\dfrac{r_{\mathcal{H}}}{\#_{m}e_{i}}$.

\end{defin}

\begin{rmk}

$\mathcal{A}_{\text{lay}}\left(\mathcal{H}\right)$ can also be written:
\[
\mathcal{A}_{\text{lay}}\left(\mathcal{H}\right)=\sum\limits _{r\in\left\llbracket r_{\mathcal{H}}\right\rrbracket }c_{r}\sum\limits _{e_{i}\in\left\{ e:\#_{m}e=r\right\} }\mathcal{R}_{e_{i}},
\]

where $c_{r}=\dfrac{r_{\mathcal{H}}}{r}.$

\end{rmk}

\section{Results on the constructed tensors}

\label{sec:Results_constructed_tensors}

Each of the tensor built is of rank $r_{\mathcal{H}}$ and of dimension
$n+n_{\mathcal{A}}$ where $n_{\mathcal{A}}$ is: 
\begin{itemize}
\item in the straightforward approach: $n_{\mathcal{A}}=1$.
\item in the silo approach and the layered approach: $n_{A}=r_{\mathcal{H}}-1$.
\end{itemize}

\subsection{Information on hb-graph}

\subsubsection{m-degree of vertices}

We built the different tensors in a way that the retrieval of the
vertex m-degree is possible; the null vertex(-ices) added give(s)
additional information on the structure of the hb-graph.

\begin{claim}

Let consider for $j\in\llbracket n\rrbracket$ a vertex $v_{j}\in V$.

Then in each of the $e$-adjacency tensors built, it holds: 
\[
\sum\limits _{j_{2},...,j_{r_{\mathcal{H}}}\in\left\llbracket n+n_{\mathcal{A}}\right\rrbracket }a_{jj_{2}\ldots j_{r_{\mathcal{H}}}}=\sum\limits _{i\,:\,v_{j}\in e_{i}}m_{i\,j}=\text{deg}_{m}\left(v_{j}\right)
\]

\end{claim}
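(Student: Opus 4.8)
The plan is to use that in each of the three constructions the $e$-adjacency tensor is a sum over hb-edges, $\sum_{i\in\llbracket p\rrbracket} c_{e_i}\mathcal{R}_{e_i}$, where $\mathcal{R}_{e_i}$ is the symmetric tensor attached to the homogenised monomial of $e_i$, and to reduce the claim to a per-hb-edge permutation count. First I would record the structural description common to all three tensors: the only non-zero entries of $c_{e_i}\mathcal{R}_{e_i}$ are those whose index tuple $(i_1,\dots,i_{r_\mathcal{H}})$ is a rearrangement of the multiset $M_i$ formed by the support indices $j_1,\dots,j_{k_i}$ of $e_i$ with their multiplicities $m_{ij_1},\dots,m_{ij_{k_i}}$, padded by $r_\mathcal{H}-\#_m e_i$ extra indices, all $>n$, coming from the homogenising vertex/vertices --- a single one of multiplicity $r_\mathcal{H}-\#_m e_i$ for the straightforward and silo tensors, and the $r_\mathcal{H}-\#_m e_i$ distinct indices $n+\#_m e_i,\dots,n+r_\mathcal{H}-1$, each taken once, for the layered tensor. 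Because the reduced elementary polynomial $P_{e_i}$ carries the coefficient $\#_m e_i$ and $c_{e_i}=r_\mathcal{H}/\#_m e_i$, the homogenised monomial of $e_i$ enters with coefficient $r_\mathcal{H}$, so each such entry equals $r_\mathcal{H}$ divided by the multinomial coefficient of $M_i$, i.e. $\dfrac{m_{ij_1}!\cdots m_{ij_{k_i}}!\,\pi_i}{(r_\mathcal{H}-1)!}$, where $\pi_i$ is the product of the factorials of the multiplicities of the padding indices ($\pi_i=(r_\mathcal{H}-\#_m e_i)!$ in the first two cases, $\pi_i=1$ in the layered case).

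Then I would fix $v_j\in V$ and expand $\sum_{j_2,\dots,j_{r_\mathcal{H}}\in\llbracket n+n_\mathcal{A}\rrbracket}a_{jj_2\dots j_{r_\mathcal{H}}}$ by linearity over the hb-edge components. Only hb-edges with $v_j\in e_i^\star$ contribute, since the padding indices all exceed $n\geqslant j$ and hence cannot equal $j$. For such an $e_i$, writing $j=j_\ell$, the tuples $(j,j_2,\dots,j_{r_\mathcal{H}})$ that rearrange $M_i$ correspond bijectively to the rearrangements of $M_i$ with one copy of $j_\ell$ deleted, and there are $\dfrac{(r_\mathcal{H}-1)!}{(m_{ij_\ell}-1)!\prod_{\ell'\neq\ell}m_{ij_{\ell'}}!\,\pi_i}$ of them. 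Multiplying this count by the common entry value above makes every factorial cancel except $m_{ij_\ell}!/(m_{ij_\ell}-1)!=m_{ij_\ell}=m_{ij}$, so $e_i$ contributes exactly $m_{ij}$ to the sum. Summing over the hb-edges that contain $v_j$ yields $\sum_{i\,:\,v_j\in e_i}m_{ij}$, which by definition of $H(v_j)$ and of $\deg_m$ is $\deg_m(v_j)$, proving the claim.

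The three constructions enter this argument only through which padding indices they use, and those are never equal to $j$, so the computation is verbatim the same in all three cases; the only case-dependent point is the clerical bookkeeping of how many homogenising vertices are used and with which multiplicities, which I would settle inside the structural description. As a cross-check, once the normalised entry value $\dfrac{m_{ij_1}!\cdots m_{ij_{k_i}}!\,\pi_i}{(r_\mathcal{H}-1)!}$ is identified, each of the three tensors coincides with the $\overline{k}$-adjacency hypermatrix (with unit hb-edge coefficients) of the corresponding $r_\mathcal{H}$-m-uniform homogenised hb-graph $\overline{\mathcal{H}}$, so the already-proved Claim on the m-degree in an $r$-m-uniform hb-graph applies directly and returns $\deg_m^{\overline{\mathcal{H}}}(v_j)$, which equals $\deg_m(v_j)$ since homogenisation only appends null vertices and leaves the multiplicities of the vertices of $V$ unchanged. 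I expect the main obstacle to be purely clerical rather than conceptual: correctly pinning down the common value of the entries of $c_{e_i}\mathcal{R}_{e_i}$ after the technical coefficient $c_{e_i}$ and the leading coefficient $\#_m e_i$ of $P_{e_i}$ have been absorbed, and then carrying out the multinomial cancellation of $(r_\mathcal{H}-1)!$ against the product of factorials.
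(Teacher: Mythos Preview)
Your proposal is correct and follows essentially the same argument as the paper: decompose the tensor as $\sum_i c_{e_i}\mathcal{R}_{e_i}$, note that only hb-edges containing $v_j$ contribute, count the $(r_{\mathcal H}-1)!$-permutations of the residual multiset with one copy of $j$ removed, and cancel against the common entry value to obtain $m_{ij}$ per hb-edge. The only difference is cosmetic: you handle all three constructions at once via the padding factor $\pi_i$, whereas the paper writes out the permutation count and the entry value separately for each of the straightforward, silo and layered cases; your cross-check through the $r$-m-uniform claim is an extra sanity check not present in the paper.
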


\begin{proof}

For $j\in\left\llbracket n\right\rrbracket $:

$\sum\limits _{j_{2},...,j_{r_{\mathcal{H}}}\in\left\llbracket n+n_{\mathcal{A}}\right\rrbracket }a_{jj_{2}\ldots j_{r_{\mathcal{H}}}}$
has non-zero terms only for corresponding hb-edges of original hb-graph
$e_{i}$ that have $v_{j}$ in it. For such a hb-edge containing $v_{j}$,
it is described by $e_{i}=\left\{ v_{j}^{m_{i\,j}},v_{l_{2}}^{m_{i\,l_{2}}},\ldots,v_{l_{k}}^{m_{i\,l_{k}}}\right\} $.
It means that the multiset $\left\{ \left\{ j_{2},\ldots,j_{r_{\mathcal{H}}}\right\} \right\} $
corresponds exactly to the multiset $\left\{ j^{m_{i\,j}-1},l_{2}^{m_{i\,l_{2}}},\ldots,l_{k}^{m_{i\,l_{k}}}\right\} $. 

In the straightforward approach, for each $e_{i}$ such that $v_{j}\in e_{i},$
there are: 
\[
\dfrac{\left(r_{\mathcal{H}}-1\right)!}{\left(m_{i\,j}-1\right)!m_{i\,l_{2}}!\ldots m_{i\,l_{k}}!m_{i\,n+1}!}
\]
 possible permutations of the indices $j_{2}$ to $j_{r_{\mathcal{H}}}$
and 
\[
a_{jj_{2}...j_{r_{\mathcal{H}}}}=\dfrac{m_{i\,j}!m_{i\,l_{2}}!\ldots m_{i\,l_{k}}!m_{i\,n+1}!}{\left(r_{\mathcal{H}}-1\right)!}.
\]

In the silo approach, for each $e_{i}$ such that $v_{j}\in e_{i},$
there are 
\[
\dfrac{\left(r_{\mathcal{H}}-1\right)!}{\left(m_{i\,j}-1\right)!m_{i\,l_{2}}!\ldots m_{i\,l_{k}}!m_{i\,n+\#_{m}e_{i}}!}
\]
 possible permutations of the indices $j_{2}$ to $j_{r_{\mathcal{H}}}$
and 
\[
a_{jj_{2}...j_{r_{\mathcal{H}}}}=\dfrac{m_{i\,j}!m_{i\,l_{2}}!\ldots m_{i\,l_{k}}!m_{i\,n+\#_{m}e_{i}}!}{\left(r_{\mathcal{H}}-1\right)!}.
\]

In the layered approach, for each $e_{i}$ such that $v_{j}\in e_{i},$
there are: 
\[
\dfrac{\left(r_{\mathcal{H}}-1\right)!}{\left(m_{i\,j}-1\right)!m_{i\,l_{2}}!\ldots m_{i\,l_{k}}!}
\]
 possible permutations of the indices $j_{2}$ to $j_{r_{\mathcal{H}}}$
which have all the same value equals to: 
\[
a_{jj_{2}...j_{r_{\mathcal{H}}}}=\dfrac{m_{i\,j}!m_{i\,l_{2}}!\ldots m_{i\,l_{k}}!}{\left(r_{\mathcal{H}}-1\right)!}.
\]

Also, whatever the approach taken: 
\[
\sum\limits _{j_{2},...,j_{r_{\mathcal{H}}}\in\left\llbracket n\right\rrbracket }a_{jj_{2}\ldots j_{r_{\mathcal{H}}}}=\sum\limits _{i\,:\,v_{j}\in e_{i}}m_{i\,j}=\text{deg}_{m}\left(v_{j}\right).
\]

\end{proof}

\subsubsection{Additional vertex information}

The additional vertices carry information on the hb-edges of the hb-graph:
the information carried depends on the approach taken.

\begin{claim}

The layered $e$-adjacency tensor allows the retrieval of the distribution
of the hb-edges.

\end{claim}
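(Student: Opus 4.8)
The plan is to read the layered structure of $\mathcal{H}$ off the slices of $\mathcal{A}_{\text{lay}}\left(\mathcal{H}\right)$ that are indexed by the additional vertices $L_{1},\ldots,L_{r_{\mathcal{H}}-1}$ (the indices $n+1,\ldots,n+r_{\mathcal{H}}-1$), and thereby to recover the sequence $\left(\left|E_{r}\right|\right)_{r\in\left\llbracket r_{\mathcal{H}}\right\rrbracket}$ with $E_{r}=\left\{e\in E:\#_{m}e=r\right\}$ — the ``distribution of the hb-edges'' by m-cardinality. The structural fact that drives everything is read directly off the layered homogenization: a hb-edge $e_{i}$ of m-cardinality $\#_{m}e_{i}=r$ is lifted to $\overline{\mathcal{H}}$ by filling it with the additional vertices $L_{r},L_{r+1},\ldots,L_{r_{\mathcal{H}}-1}$, each with multiplicity $1$. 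Hence, for $k\in\left\llbracket r_{\mathcal{H}}-1\right\rrbracket$, the lifted hb-edge of $e_{i}$ contains $L_{k}$ exactly when $\#_{m}e_{i}\leqslant k$, and then with multiplicity $1$.

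First I would fix $k\in\left\llbracket r_{\mathcal{H}}-1\right\rrbracket$ and evaluate the row sum $S_{k}:=\sum_{j_{2},\ldots,j_{r_{\mathcal{H}}}\in\left\llbracket n+r_{\mathcal{H}}-1\right\rrbracket}a_{(n+k)\,j_{2}\ldots j_{r_{\mathcal{H}}}}$ by repeating the permutation count that proves the m-degree claim, now with the first index pinned to the additional vertex $L_{k}$ rather than to some $v_{j}\in V$. That count never used that the pinned vertex lies among $v_{1},\ldots,v_{n}$, so it yields $S_{k}$ as the sum, over the lifted hb-edges that contain $L_{k}$, of the multiplicity of $L_{k}$ in each of them. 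By the structural fact that multiplicity is always $1$ and a lifted hb-edge contains $L_{k}$ iff its level is at most $k$, whence $S_{k}=\sum_{r=1}^{k}\left|E_{r}\right|$: the cumulative hb-edge distribution up to level $k$. (If one keeps the technical coefficients $c_{e_{i}}=r_{\mathcal{H}}/\#_{m}e_{i}$ explicit instead of folding them into the normalization of the m-degree claim, the same computation gives $S_{k}=\sum_{r=1}^{k}w_{r}\left|E_{r}\right|$ with $w_{r}>0$ an explicit constant depending only on $r$ and $r_{\mathcal{H}}$; the rest of the argument is unaffected.)

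Then I would recover the distribution by telescoping. From the previous step, $\left|E_{1}\right|=S_{1}$ and $\left|E_{r}\right|=S_{r}-S_{r-1}$ for $2\leqslant r\leqslant r_{\mathcal{H}}-1$ (dividing by $w_{r}$ if the weighted form is kept). The single layer invisible to the additional vertices is the top one, $E_{r_{\mathcal{H}}}$, whose hb-edges receive no filling vertex; for it I would use the total hb-edge count $\left|E\right|=\frac{1}{r_{\mathcal{H}}}\sum_{i_{1},\ldots,i_{r_{\mathcal{H}}}\in\left\llbracket n+r_{\mathcal{H}}-1\right\rrbracket}a_{i_{1}\ldots i_{r_{\mathcal{H}}}}$ established in Section~\ref{subsec:Choice_technical}, together with $S_{r_{\mathcal{H}}-1}=\sum_{r=1}^{r_{\mathcal{H}}-1}\left|E_{r}\right|$, to get $\left|E_{r_{\mathcal{H}}}\right|=\left|E\right|-S_{r_{\mathcal{H}}-1}$. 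This exhibits the whole sequence $\left(\left|E_{r}\right|\right)_{r}$ as a function of $\mathcal{A}_{\text{lay}}\left(\mathcal{H}\right)$, which is the assertion.

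The main obstacle — really the only non-routine point — is the middle step: one must check carefully that pinning the first index to $L_{k}$ singles out exactly the arrangements of the lifted hb-edge with one copy of $L_{k}$ removed, and that multiplying the number of such arrangements by the common entry value leaves no residual dependence on the internal multiplicities $m_{ij_{l}}$ of the original vertices. That cancellation is exactly where the construction-specific features of the layered approach are used: each $L_{k}$ occurs with multiplicity $1$, and $L_{k}$ is attached to a hb-edge iff that hb-edge lies in a layer of level $\leqslant k$. It is also what sets the layered tensor apart from the straightforward one, whose single null vertex only records the aggregate $\sum_{r<r_{\mathcal{H}}}\left|E_{r}\right|$ (weighted) and cannot separate the layers. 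Everything past that cancellation is the algebra of the m-degree claim plus the elementary telescoping above.
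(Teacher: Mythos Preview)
Your argument is correct and matches the paper's: compute the row sums $S_{k}=\deg_{m}(L_{k})$ at the additional vertices to obtain the cumulative counts $\left|\{e:\#_{m}e\leqslant k\}\right|$, telescope to get each $\left|E_{r}\right|$ for $r<r_{\mathcal{H}}$, and recover the top layer from $\left|E\right|$. One side remark is off, though: the paper shows that the straightforward tensor \emph{does} separate the layers (by summing the entries whose index multiset contains exactly $r_{\mathcal{H}}-j$ copies of $n+1$ and dividing by $r_{\mathcal{H}}-j$), so the distinction you draw at the end is one of convenience rather than of information content.
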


\begin{proof}

For $j\in\left\llbracket n_{\mathcal{A}}\right\rrbracket $:

$\sum\limits _{j_{2},...,j_{r_{\mathcal{H}}}\in\left\llbracket n+n_{\mathcal{A}}\right\rrbracket }a_{n+jj_{2}\ldots j_{r_{\mathcal{H}}}}$
has non-zero terms only for corresponding hb-edges of the uniformized
hb-graph $\overline{e_{i}}$ that have $v_{j}$ in it. Such a hb-edge
is described by:

\[
\overline{e_{i}}=\left\{ v_{k}^{m_{i\,k}}\colon1\leqslant k\leqslant n+n_{\mathcal{A}}\right\} .
\]

It means that the multiset: 
\[
\left\{ \left\{ j_{2},\ldots,j_{r_{\mathcal{H}}}\right\} \right\} 
\]
 corresponds exactly to the multiset: 
\[
\left\{ \left(n+j\right)^{m_{i\,n+j}-1}\right\} +\left\{ k^{m_{i\,k}}\colon1\leqslant k\leqslant n+n_{\mathcal{A}},k\neq j\right\} .
\]

The number of possible permutations of elements in this multiset is:
\[
\dfrac{\left(r_{\mathcal{H}}-1\right)!}{\left(m_{i\,n+j}-1\right)!\prod\limits _{\substack{k\in\left\llbracket n\right\rrbracket }
}m_{i\,k}!\prod\limits _{\substack{k\in\left\llbracket n+1;n+n_{\mathcal{A}}\right\rrbracket \\
k\neq j
}
}m_{i\,k}!}
\]
 and the elements corresponding to one hb-edge are all equals to:
\[
\dfrac{\prod\limits _{\substack{k\in\left\llbracket n_{\mathcal{A}}\right\rrbracket }
}m_{i\,k}!}{\left(r_{\mathcal{H}}-1\right)!}.
\]

Thus: $\sum\limits _{j_{2},...,j_{r_{\mathcal{H}}}\in\left\llbracket n+n_{\mathcal{A}}\right\rrbracket }a_{n+jj_{2}\ldots j_{r_{\mathcal{H}}}}=\sum\limits _{j_{2},...,j_{r_{\mathcal{H}}}\in\left\llbracket n\right\rrbracket }m_{i\,n+j}=\deg_{m}\left(N_{j}\right)$

The interpretation differs between the different approaches.

\textbf{For the silo approach:}

There is one added vertex in each hb-edge. The silo of hb-edges of
m-cardinality $m_{s}$ $\left(m_{s}\in\left\llbracket r_{\mathcal{H}}-1\right\rrbracket \right)$
is associated to the null vertex $N_{m_{s}}$. The multiplicity of
$N_{m_{s}}$ in each hb-edge of the silo is $r_{\mathcal{H}}-m_{s}$.

Hence: 
\[
\dfrac{\deg_{m}\left(N_{j}\right)}{r_{\mathcal{H}}-m_{s}}=\left|\left\{ e:\#_{m}e=m_{s}\right\} \right|.
\]

The number of hb-edges in the silo $m_{s}$ is then deduced by the
following formula:

\[
\left|\left\{ e:\#_{m}e=m_{s}\right\} \right|=\left|E\right|-\sum\limits _{m_{s}\in\left\llbracket r_{\mathcal{H}}-1\right\rrbracket }\dfrac{\deg_{m}\left(N_{j}\right)}{r_{\mathcal{H}}-m_{s}}.
\]

\textbf{For the layered approach:}

The vertex $N_{j}$ corresponds to the layer of level $j$ and added
to each hb-edge that has m-cardinality less or equal to $j$ with
a multiplicity of 1.

Also: 
\[
\deg_{m}\left(N_{j}\right)=\left|\left\{ e:\#_{m}e\leqslant j\right\} \right|.
\]

Hence, for $j\in\left\llbracket 2;r_{\mathcal{H}}-1\right\rrbracket $:

\[
\left|\left\{ e:\#_{m}e=j\right\} \right|=\deg_{m}\left(N_{j}\right)-\deg_{m}\left(N_{j-1}\right)
\]

and:

\[
\left|\left\{ e:\#_{m}e=1\right\} \right|=\deg_{m}\left(N_{1}\right)
\]

\[
\left|\left\{ e:\#_{m}e=r_{\mathcal{H}}\right\} \right|=\left|E\right|-\deg_{m}\left(N_{r_{\mathcal{H}}-1}\right)
\]

\textbf{For the straightforward approach}:

In a hb-edge of m-cardinality $j\in\left\llbracket r_{\mathcal{H}}-1\right\rrbracket $,
the vertex $N_{1}$ is added in multiplicity $r_{\mathcal{H}}-j$.
The number of hb-edge of m-cardinality $j$ can be retrieved by considering
the elements of $\mathcal{A}_{\text{str}}\left(\mathcal{H}\right)$
of index $(n+1)i_{1}\ldots i_{r_{\mathcal{H}}-1}$ where $1\leqslant i_{1}\leqslant...\leqslant i_{j}\leqslant n$
and $i_{j+1}=\ldots=i_{r_{\mathcal{H}}-1}=n+1$ and thus of indices
obtained by permutation.

It follows for $j\in\left\llbracket r_{\mathcal{H}}-1\right\rrbracket $:

\begin{eqnarray*}
\left|\left\{ e:\#_{m}e=j\right\} \right| & = & \left|\left\{ e\colon N_{1}\in e\land m_{e}\left(N_{1}\right)=r_{\mathcal{H}}-j\right\} \right|\\
 & = & \sum_{\begin{array}{c}
i_{1},\ldots,i_{r_{\mathcal{H}}-1}\in\left\llbracket n+1\right\rrbracket \\
\left|\left\{ i_{k}=n+1\right\} \right|=r_{\mathcal{H}}-j-1
\end{array}}a_{n+1i_{1}\ldots i_{r_{\mathcal{H}}-1}}
\end{eqnarray*}

The terms of this sum $a_{n+1i_{1}\ldots i_{r_{\mathcal{H}}-1}}$
are non-zero only for corresponding hb-edges $\overline{e}$ of the
uniformized hb-graph that have $N_{1}$ in multiplicity $r_{\mathcal{H}}-j$
in it. Such a hb-edge is described by:

\[
\overline{e_{i}}=\left\{ v_{k}^{m_{i\,k}}\colon1\leqslant k\leqslant n\right\} +\left\{ N_{1}^{r_{\mathcal{H}}-j}\right\} .
\]

It means that the multiset: 
\[
\left\{ \left\{ i_{1},\ldots,i_{r_{\mathcal{H}}-1}\right\} \right\} 
\]
 corresponds exactly to the multiset: 
\[
\left\{ k^{m_{i\,k}}\colon k\in\left\llbracket n\right\rrbracket \right\} +\left\{ n+1^{r_{\mathcal{H}}-j-1}\right\} .
\]

The number of possible permutations of elements in this multiset is:
\[
\dfrac{\left(r_{\mathcal{H}}-1\right)!}{\prod\limits _{\substack{k\in\left\llbracket n\right\rrbracket }
}m_{i\,k}!\left(r_{\mathcal{H}}-j-1\right)!}
\]
 and the elements corresponding to one hb-edge all equal: 
\[
\dfrac{\prod\limits _{\substack{k\in\left\llbracket n\right\rrbracket }
}m_{i\,k}!\times\left(r_{\mathcal{H}}-j\right)!}{\left(r_{\mathcal{H}}-1\right)!}.
\]

Hence: 
\[
\dfrac{1}{r_{\mathcal{H}}-j}\sum\limits _{\begin{array}{c}
i_{2},\ldots,i_{r_{\mathcal{H}}}\in\left\llbracket n+1\right\rrbracket \\
\left|\left\{ i_{k}=n+1:2\leqslant k\leqslant r_{\mathcal{H}}\right\} \right|=r_{\mathcal{H}}-j-1
\end{array}}a_{n+1i_{2}\ldots i_{r_{\mathcal{H}}}}=\left|\left\{ e:\#_{m}e=j\right\} \right|
\]

The number of hb-edges of m-cardinality $r_{\mathcal{H}}$ can be
retrieved by:

\[
\left|\left\{ e:\#_{m}e=r_{\mathcal{H}}\right\} \right|=\left|E\right|-\sum\limits _{j\in\left\llbracket r_{\mathcal{H}}-1\right\rrbracket }\left|\left\{ e:\#_{m}e=j\right\} \right|.
\]

\end{proof}

\subsection{Initial results on spectral analysis}

Let $\mathcal{H}=\left(V,E\right)$ be a general hb-graph of $e$-adjacency
tensor $\mathcal{A}_{\mathcal{H}}=\left(a_{i_{1}...i_{k_{\max}}}\right)$
of order $k_{\max}$ and dimension $n+n_{\mathcal{A}}$

In the $e$-adjacency tensor $\mathcal{A}_{\mathcal{H}}$ built, the
diagonal entries are no longer equal to zero. As all elements of $\mathcal{A}_{\mathcal{H}}$
are all non-negative real numbers and as we have shown that: 
\[
\sum\limits _{\substack{i_{2},...,i_{m}\in\left\llbracket n+n_{\mathcal{A}}\right\rrbracket \\
\delta_{ii_{2}...i_{m}=0}
}
}a_{ii_{2}...i_{r_{\mathcal{H}}}}=\begin{cases}
d_{i} & \text{if}\,i\in\left\llbracket n\right\rrbracket \\
d_{n+j} & \text{if}\,i=n+j,\,j\in\left\llbracket n_{\mathcal{A}}\right\rrbracket .
\end{cases}
\]

It follows:

\begin{claim}The $e$-adjacency tensor $\mathcal{A}_{\mathcal{H}}=\left(a_{i_{1}...i_{k_{\max}}}\right)$
of a general hypergraph $\mathcal{H}=\left(V,E\right)$ has its eigenvalues
$\lambda$ such that: 
\begin{equation}
\left|\lambda\right|\leqslant\max\left(\Delta,\Delta^{\star}\right)+r_{\mathcal{H}}\label{eq:bound_max_degree_layer}
\end{equation}
 where $\Delta=\underset{i\in\left\llbracket n\right\rrbracket }{\max}\left(d_{i}\right)$
and $\Delta^{\star}=\underset{i\in\left\llbracket n_{\mathcal{A}}\right\rrbracket }{\max}\left(d_{n+i}\right)$

\end{claim}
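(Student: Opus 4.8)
The plan is to derive the estimate from a Gershgorin-type localisation theorem for the eigenvalues of a cubical symmetric tensor (the tensor Gershgorin disc theorem, in Qi's form). For $\mathcal{A}_{\mathcal{H}}=\left(a_{i_{1}\ldots i_{k_{\max}}}\right)$, of order $k_{\max}=r_{\mathcal{H}}$ and dimension $N:=n+n_{\mathcal{A}}$, this gives: every eigenvalue $\lambda$ lies in one of the discs
\[
\left|\lambda-a_{i\,i\ldots i}\right|\leqslant R_{i}:=\sum\limits _{\substack{i_{2},\ldots,i_{r_{\mathcal{H}}}\in\llbracket N\rrbracket \\ (i_{2},\ldots,i_{r_{\mathcal{H}}})\neq(i,\ldots,i)}}\left|a_{i\,i_{2}\ldots i_{r_{\mathcal{H}}}}\right|,\qquad i\in\llbracket N\rrbracket ,
\]
so that $\left|\lambda\right|\leqslant\left|a_{i\,i\ldots i}\right|+R_{i}$ for some $i$, and the proof reduces to bounding $R_{i}$ and the disc centre $a_{i\,i\ldots i}$ separately.

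For $R_{i}$: every entry of $\mathcal{A}_{\mathcal{H}}$ is a non-negative real, so $R_{i}$ is just the off-diagonal row sum of row $i$, and it is dominated by the full row sum of row $i$, which by the $m$-degree identities established just above (for an original vertex $v_{i}$ in the $m$-degree claim, for an added null vertex in the additional-vertex analysis, as collected in the displayed equation preceding this claim) equals $d_{i}$ --- namely $\deg_{m}\left(v_{i}\right)\leqslant\Delta$ when $i\in\llbracket n\rrbracket $ and $\deg_{m}\left(N_{i-n}\right)\leqslant\Delta^{\star}$ when $i>n$. Hence $R_{i}\leqslant\max\left(\Delta,\Delta^{\star}\right)$. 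For the disc centre: a diagonal entry $a_{i\,i\ldots i}$ is non-zero only if some hb-edge of the uniformised hb-graph has index-multiset $\left\{ i^{r_{\mathcal{H}}}\right\} $, i.e.\ is carried by a single vertex of multiplicity $r_{\mathcal{H}}$; inspecting the straightforward, silo and layered definitions, each such hb-edge $e$ contributes at most $c_{e}=\dfrac{r_{\mathcal{H}}}{\#_{m}e}\leqslant r_{\mathcal{H}}$ to that entry, and --- since the hb-graph has no repeated hb-edge and, for non-empty hb-edges, no added vertex ever reaches multiplicity $r_{\mathcal{H}}$ in any of the three constructions --- at most one such hb-edge exists, so $0\leqslant a_{i\,i\ldots i}\leqslant r_{\mathcal{H}}$. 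Adding the two bounds yields $\left|\lambda\right|\leqslant\max\left(\Delta,\Delta^{\star}\right)+r_{\mathcal{H}}$, which is \eqref{eq:bound_max_degree_layer}.

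The only step that is not pure bookkeeping is citing the right tensor Gershgorin theorem and checking it is stated for the notion of eigenvalue used in the paper (the $H$-eigenvalues of a symmetric non-negative tensor); everything else follows from non-negativity of the entries and the $m$-degree identity already proved. I would also note in passing that, since $R_{i}+a_{i\,i\ldots i}$ is exactly the full row sum $d_{i}$, the same Gershgorin inequality actually delivers the sharper $\left|\lambda\right|\leqslant\max\left(\Delta,\Delta^{\star}\right)$; the looser form in \eqref{eq:bound_max_degree_layer} merely carries an $r_{\mathcal{H}}$ cushion that absorbs the (here very small, typically zero) diagonal contribution.
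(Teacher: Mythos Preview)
Your proposal is correct and follows essentially the same route as the paper: apply the tensor Gershgorin disc inequality $\left|\lambda-a_{i\ldots i}\right|\leqslant R_{i}$, bound the off-diagonal row sum $R_{i}$ by the $m$-degree identity already established, and bound the diagonal entry $a_{i\ldots i}$ by $r_{\mathcal{H}}$ using that it is non-zero only when $\left\{ i^{r_{\mathcal{H}}}\right\}$ is an hb-edge. Your closing observation that $R_{i}+a_{i\ldots i}$ equals the full row sum $d_{i}$, so that Gershgorin already gives the sharper estimate $\left|\lambda\right|\leqslant\max\left(\Delta,\Delta^{\star}\right)$, is a correct and worthwhile refinement that the paper does not make.
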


\begin{proof}

From 
\begin{equation}
\forall i\in\left\llbracket 1,n\right\rrbracket ,\,\left(\mathcal{A}x^{m-1}\right)_{i}=\lambda x_{i}^{m-1}\label{eq:eigenvalue}
\end{equation}
we can write as $a_{ii_{2}...i_{m}}$are non-negative real numbers,
that for all $\lambda$ it holds: 
\begin{equation}
\left|\lambda-a_{i\ldots i}\right|\leqslant\sum\limits _{\substack{i_{2},...,i_{m}\in\left\llbracket n+n_{\mathcal{A}}\right\rrbracket \\
\delta_{ii_{2}...i_{m}=0}
}
}a_{ii_{2}...i_{m}}\label{eq:lambda_radius}
\end{equation}

Considering the triangular inequality: 
\begin{equation}
\left|\lambda\right|\leqslant\left|\lambda-a_{i\ldots i}\right|+\left|a_{i\ldots i}\right|\label{eq:lambda_triangular}
\end{equation}

Combining \ref{eq:lambda_radius} and \ref{eq:lambda_triangular}
yield:

\begin{equation}
\left|\lambda\right|\leqslant\sum\limits _{\substack{i_{2},...,i_{m}\in\left\llbracket n+n_{\mathcal{A}}\right\rrbracket \\
\delta_{ii_{2}...i_{m}=0}
}
}a_{ii_{2}...i_{m}}+\left|a_{i\ldots i}\right|.\label{eq:maj_lambda}
\end{equation}

But, whatever the approach taken, if $\left\{ i^{r_{\mathcal{H}}}\right\} $
is an hb-edge of the hb-graph

\[
\left|a_{i\ldots i}\right|=r_{\mathcal{H}}
\]

otherwise: 

\[
\left|a_{i\ldots i}\right|=0
\]

and thus writing $\Delta=\underset{i\in\left\llbracket n\right\rrbracket }{\max}\left(\deg_{m}\left(v_{i}\right)\right)$
and $\Delta^{\star}=\underset{i\in\left\llbracket n_{\mathcal{A}}\right\rrbracket }{\max}\left(\deg_{m}\left(N_{i}\right)\right)$
and using \ref{eq:maj_lambda} yield: 
\[
\left|\lambda\right|\leqslant\max\left(\Delta,\Delta^{\star}\right)+r_{\mathcal{H}}.
\]

\end{proof}

\begin{rmk}

In the straightforward approach: 
\begin{eqnarray*}
\Delta^{\star} & = & \deg_{m}\left(N_{1}\right)\\
 & = & \sum\limits _{j\in\left\llbracket r_{\mathcal{H}}-1\right\rrbracket }\left(r_{\mathcal{H}}-j\right)\left|\left\{ e:\#_{m}e=j\right\} \right|
\end{eqnarray*}

In the silo approach:

\begin{eqnarray*}
\Delta^{\star} & = & \underset{j\in\left\llbracket r_{\mathcal{H}}-1\right\rrbracket }{\max}\left(\deg_{m}\left(N_{j}\right)\right)\\
 & = & \underset{j\in\left\llbracket r_{\mathcal{H}}-1\right\rrbracket }{\max}\left(\left(r_{\mathcal{H}}-j\right)\left|\left\{ e:\#_{m}e=j\right\} \right|\right)
\end{eqnarray*}

In the layered approach:

\begin{eqnarray*}
\Delta^{\star} & = & \underset{j\in\left\llbracket r_{\mathcal{H}}-1\right\rrbracket }{\max}\left(\deg_{m}\left(N_{j}\right)\right)\\
 & = & \underset{j\in\left\llbracket r_{\mathcal{H}}-1\right\rrbracket }{\max}\left(\left|\left\{ e:\#_{m}e\leqslant j\right\} \right|\right)\\
 & = & \left|\left\{ e:\#_{m}e\leqslant r_{\mathcal{H}}-1\right\} \right|
\end{eqnarray*}

The values of $\Delta$ don't change whatever the approach taken is.

\end{rmk}

\section{Evaluation and final choice}

\label{sec:Evaluation}

\subsection{Evaluation}

We have put together some key features of the $e$-adjacency tensors
proposed in this article: the one of the straightforward approach
$\mathcal{A_{\text{str}}\left(H\right)},$the one of the silo approach
$\mathcal{A_{\text{sil}}\left(H\right)}$ and the one of the layered
approach $\mathcal{A_{\text{lay}}\left(H\right)}$. 

The constructed tensors have all same order $r_{\mathcal{H}}$. $\mathcal{A_{\text{sil}}\left(H\right)}$
and $\mathcal{A_{\text{lay}}\left(H\right)}$ dimensions are $r_{\mathcal{H}}-2$
bigger than $\mathcal{A_{\text{str}}\left(H\right)}$ ($n-2$ in the
worst case). $\mathcal{A_{\text{str}}\left(H\right)}$ has a total
number of elements $\dfrac{\left(n+1\right)^{r_{\mathcal{H}}}}{\left(n+r_{\mathcal{H}}-1\right)^{r_{\mathcal{H}}}}$
times smaller than the two other tensors. 

Elements of $\mathcal{A_{\text{str}}\left(H\right)}$ - respectively
$\mathcal{A_{\text{sil}}\left(H\right)}$ - are repeated $\dfrac{1}{n_{j}!}$
- respectively $\dfrac{1}{n_{j\,k}!}$ - times less than elements
of $\mathcal{A_{\text{lay}}\left(H\right)}$. The total number of
non nul elements filled for a given hb-graph in $\mathcal{A_{\text{str}}\left(H\right)}$
and $\mathcal{A_{\text{sil}}\left(H\right)}$ are the same and is
smaller than the total number of non nul elements in $\mathcal{A_{\text{lay}}\left(H\right)}$

The number of elements to be filled before permutation to have full
description of a hb-edge is constant and equals to 1 whatever the
approach taken and the value depends only on the hb-edge composition.

All tensors are symmetric and allow reconstructivity of the hb-graph
from the elements.

Nodes degree can be retrieved as it has been shown previously, but
it is easier with the silo and layered approach.

\subsection{Final choice}

The approach by silo seems to be a good compromise between the easiness
of calculating the m-degree of vertices and the shape of the hb-graph
by the Null vertices added and the number of elements to be filled
in the tensor: in other words $\mathcal{A}_{\text{sil}}\left(\mathcal{H}\right)$
is a good compromise between $\mathcal{A}_{\text{str}}\left(\mathcal{H}\right)$
and $\mathcal{A}_{\text{lay}}\left(\mathcal{H}\right)$. We take $\mathcal{A}_{\text{sil}}\left(\mathcal{H}\right)$
as definition of the $e$-adjacency tensor of the hb-graph. The preservation
of the information on the shape of the hb-edges through the null vertices
added allow to keep the diversity of the m-cardinality of the hb-edges.

\begin{table}
\begin{center}%
\begin{tabular}{|>{\centering}p{3.5cm}|>{\centering}m{3cm}|>{\centering}m{3cm}|>{\centering}m{3cm}|}
\cline{2-4} 
\multicolumn{1}{>{\centering}p{3.5cm}|}{} & $\mathcal{A}_{\text{str}}\left(\mathcal{H}\right)$ & $\mathcal{A}_{\text{sil}}\left(\mathcal{H}\right)$ & $\mathcal{A}_{\text{lay}}\left(\mathcal{H}\right)$\tabularnewline
\hline 
Order & $r_{\mathcal{H}}$ & $r_{\mathcal{H}}$ & $r_{\mathcal{H}}$\tabularnewline
\hline 
Dimension & $n+1$ & $n+r_{\mathcal{H}}-1$ & $n+r_{\mathcal{H}}-1$\tabularnewline
\hline 
Total number of elements & $\left(n+1\right)^{r_{\mathcal{H}}}$ & $\left(n+r_{\mathcal{H}}-1\right)^{r_{\mathcal{H}}}$ & $\left(n+r_{\mathcal{H}}-1\right)^{r_{\mathcal{H}}}$\tabularnewline
\hline 
Total number of elements potentially used by the way the tensor is
build & $\left(n+1\right)^{r_{\mathcal{H}}}$ & $\left(n+r_{\mathcal{H}}-1\right)^{r_{\mathcal{H}}}$ & $\left(n+r_{\mathcal{H}}-1\right)^{r_{\mathcal{H}}}$\tabularnewline
\hline 
Number of repeated elements per hb-edge $e_{j}=$$\left\{ v_{i_{1}}^{m_{ji_{1}}},\ldots,v_{i_{j}}^{m_{ji_{j}}}\right\} $ & $\dfrac{r_{\mathcal{H}}!}{m_{ji_{1}}!\ldots m_{ji_{j}}!n_{j}!}$

with

$n_{j}=r_{\mathcal{H}}-\#_{m}e_{j}$ & $\dfrac{r_{\mathcal{H}}!}{m_{ji_{1}}!\ldots m_{ji_{j}}!n_{jk}!}$
with

$n_{jk}=r_{\mathcal{H}}-\#_{m}e_{j}$ & $\dfrac{r_{\mathcal{H}}!}{m_{ji_{1}}!\ldots m_{ji_{j}}!}$\tabularnewline
\hline 
Number of elements to be filled per hyperedge of size $s$ before
permutation & Constant

1 & Constant

1 & Constant

1\tabularnewline
\hline 
Number of elements to be described to derived the tensor by permutation
of indices & $\left|E\right|$ & $\left|E\right|$ & $\left|E\right|$\tabularnewline
\hline 
Value of elements of a hyperedge & Dependent of hb-edge composition\\
$\dfrac{m_{ji_{1}}!\ldots m_{ji_{j}}!n_{j}!}{\left(r_{\mathcal{H}}-1\right)!}$ & Dependent of hb-edge composition\\
$\dfrac{m_{ji_{1}}!\ldots m_{ji_{j}}!n_{jk}!}{\left(r_{\mathcal{H}}-1\right)!}$ & Dependent of hb-edge composition\\
$\dfrac{m_{ji_{1}}!\ldots m_{ji_{j}}!}{\left(r_{\mathcal{H}}-1\right)!}$\tabularnewline
\hline 
Symmetric & Yes & Yes & Yes\tabularnewline
\hline 
Reconstructivity & Straightforward: delete special vertices & Straightforward: delete special vertices & Straightforward: delete special vertices\tabularnewline
\hline 
Nodes degree & Yes, but not straightforward & Yes & Yes\tabularnewline
\hline 
Spectral analysis & Special vertex increases the amplitude of the bounds  & Special vertices increase the amplitude of the bounds  & Special vertices increase the amplitude of the bounds \tabularnewline
\hline 
Interpretability of the tensor in term of hb-graph & Yes & Yes & Yes\tabularnewline
\hline 
\end{tabular}

\caption{Evaluation of the $e$-adjacency tensor depending on construction}

$\mathcal{A}_{\text{str}}\left(\mathcal{H}\right)$ refers to the
$e$-adjacency tensor built by the straightforward approach;

$\mathcal{A}_{\text{sil}}\left(\mathcal{H}\right)$ refers to the
$e$-adjacency tensor built by the silo approach;

$\mathcal{A}_{\text{lay}}\left(\mathcal{H}\right)$ refers to the
$e$-adjacency tensor built by the layered approach.

\end{center}
\end{table}

\subsection{Hypergraphs and hb-graphs}

Hypergraphs are particular case of hb-graphs and hence the $e$-adjacency
tensor defined for $e$-adjacency tensor can be used for hypergraphs.
As the multiplicity function for vertices of a hyperedge seen as hb-edge
has its values in $\left\{ 0,1\right\} $, the elements of the $e$-adjacency
tensor that differs only by a factorial due to the cardinality of
the hyperedge.

The definition that is retained for the hypergraph $e$-adjacency
tensor is:

\begin{defin}

The $e$-adjacency tensor of a hypergraph $\mathcal{H}=\left(V,E\right)$
having maximal cardinality of its hyperedges $k_{\max}$ is the tensor
$\mathcal{A}\left(\mathcal{H}\right)=\left(a_{i_{1}\ldots i_{r_{\mathcal{H}}}}\right)_{1\leqslant i_{1},\ldots,i_{r_{\mathcal{H}}}\leqslant n}$
defined by:

\[
\mathcal{A}\left(\mathcal{H}\right)=\sum\limits _{i\in\left\llbracket p\right\rrbracket }c_{e_{i}}\mathcal{R}_{e_{i}}
\]

and where for $e_{i}=\left\{ v_{j_{1}},\ldots,v_{j_{k_{i}}}\right\} \in E$
the associated tensor is: $\mathcal{R}_{e_{i}}=\left(r_{i_{1}\ldots i_{r_{\mathcal{H}}}}\right)$
, which only non-zero elements are: 
\[
r_{j_{1}\ldots j_{k_{i}}\left(n+k_{i}\right)^{k_{\max}-k_{i}}}=\dfrac{\left(k_{\max}-k_{i}\right)!}{k_{\max}!}
\]
 and all elements of $\mathcal{R}_{e_{i}}$ obtained by permuting
\[
j_{1}\ldots j_{k_{i}}\left(n+k_{i}\right)^{k_{\max}-k_{i}},
\]

and where: 
\[
c_{e_{i}}=\dfrac{k_{\max}}{k_{i}}.
\]

\end{defin}

As in \citet{ouvrard2017cooccurrence} we compare the $e$-adjacency
tensor obtained by \citet{banerjee2017spectra} and the chosen $e$-adjacency
tensor. The results are presented in Figure 

\begin{table}
\begin{center}%
\begin{tabular}{|>{\centering}p{5cm}|>{\centering}m{3.7cm}|>{\centering}m{3cm}|}
\cline{2-3} 
\multicolumn{1}{>{\centering}p{5cm}|}{} & $\mathcal{B}_{\mathcal{H}}$ & $\mathcal{\mathcal{A}\left(\mathcal{H}\right)}$\tabularnewline
\hline 
Order & $k_{\max}$ & $k_{\max}$\tabularnewline
\hline 
Dimension & $n$ & $n+k_{\max}-1$\tabularnewline
\hline 
Total number of elements & $n^{k_{\max}}$ & $\left(n+k_{\max}-1\right)^{k_{\max}}$\tabularnewline
\hline 
Total number of elements potentially used by the way the tensor is
build & $n^{k_{\max}}$ & $\left(n+k_{\max}-1\right)^{k_{\max}}$\tabularnewline
\hline 
Number of non-nul elements for a given hypergraph & $\sum\limits _{s=1}^{k_{\max}}\alpha_{s}\left|E_{s}\right|$ with\newline$\alpha_{s}=p_{s}\left(k_{\max}\right)\dfrac{k_{\max}!}{k_{1}!...k_{s}!}$  & $\sum\limits _{s=1}^{k_{\max}}\alpha_{s}\left|E_{s}\right|$ with\newline$\alpha_{s}=\dfrac{k_{\max}!}{k_{1}!...k_{s}!n_{s}!}$
with $n_{s}=k_{\max}-s$\tabularnewline
\hline 
Number of repeated elements per hyperedge of size $s$ & $\dfrac{k_{\max}!}{k_{1}!...k_{s}!}$ & $\dfrac{k_{\max}!}{k_{1}!...k_{s}!n_{s}!}$ with $n_{s}=k_{\max}-s$\tabularnewline
\hline 
Number of elements to be filled per hyperedge of size $s$ before
permutation & Varying

$p_{s}\left(k_{\max}\right)$ & Constant

1\tabularnewline
\hline 
Number of elements to be described to derived the tensor by permutation
of indices & $\sum\limits _{s=1}^{k_{\max}}p_{s}\left(k_{\max}\right)\left|E_{s}\right|$ & $\left|E\right|$\tabularnewline
\hline 
Value of elements of a hyperedge & Dependent of hyperedge composition\\
$\dfrac{s}{\alpha_{s}}$ & Dependent of hyperedge size\\
$\dfrac{\left(k_{\max}-s\right)!}{s\left(k_{\max}-1\right)!}$\tabularnewline
\hline 
Symmetric & Yes & Yes\tabularnewline
\hline 
Reconstructivity & Need computation of duplicated vertices & Straightforward: delete special vertices\tabularnewline
\hline 
Nodes degree & Yes & Yes\tabularnewline
\hline 
Spectral analysis & Yes & Special vertices increase the amplitude of the bounds \tabularnewline
\hline 
Interpretability of the tensor in term of hypergraph / hb-graph & No / No & No / Yes\tabularnewline
\hline 
\end{tabular}

\caption{Evaluation of the hypergraph $e$-adjacency tensor}

$\mathcal{B}_{\mathcal{H}}$ designates the adjacency tensor defined
in \citet{banerjee2017spectra}

$\mathcal{A\left(H\right)}$ designates the $e$-adjacency tensor
as defined in this article.

\end{center}
\end{table}

\section{Conclusion}

\label{sec:Future-work-and}

In this article, extending the concept of hypergraphs to support multisets
to hb-graphs has allowed us to define a systematic approach to built
the $e$-adjacency tensor of a hb-graph. This systematic approach
has allowed us to apply it to hypergraphs.

The tensor constructed in \citet{banerjee2017spectra} appears as
a transformation of the hypergraph $\mathcal{H}=\left(V,E\right)$
into a weighted hb-graph $\mathcal{H_{B}}=\left(V,E',w_{e}\right)$:
the hb-graph has same vertex set but the hb-edges are obtained from
the hyperedges of the original hypergraph by transforming them in
a way that for a given hyperedge all the hb-edges having this hyperedge
as support are considered with multiplicities of vertices such that
it reaches $k_{\text{max}}$.

We intend to use our new tensor in building a spectral analysis of
hypergraphs.

\section{Acknowledgements}

This work is part of the PhD of Xavier OUVRARD, done at UniGe, supervised
by Stéphane MARCHAND-MAILLET and founded by a doctoral position at
CERN, in Collaboration Spotting team, supervised by Jean-Marie LE
GOFF.

The authors are really thankful to all the team of Collaboration Spotting
from CERN.

\bibliographystyle{plainnat}
\bibliography{/home/xo/cernbox/these/000-thesis_corpus/biblio/references}

\end{document}